\newtheorem{thm}{Theorem}[section]
\newtheorem{prop}{Proposition}[section]
\theoremstyle{definition}
\theoremstyle{remark}
\newtheorem{rem}{Remark}[section]
\numberwithin{equation}{section}
\title{ Locating Multiple Multi-scale Electromagnetic Scatterers by A Single Far-field Measurement}
\author{Jingzhi Li\thanks{Faculty of Science, South University of Science and
Technology of China, 518055 Shenzhen, P.~R.~China. Email: {\tt li.jz@sustc.edu.cn}},\quad Hongyu Liu\thanks{Department of Mathematics and Statistics, University of North Carolina, Charlotte, NC 28223, USA.   Email:  {\tt hongyu.liuip@gmail.com}},\quad  Qi Wang\thanks{Department of Computing Sciences, School of Mathematics and Statistics, Xi{'}an Jiaotong University, Xi{'}an 710049, China.
Email:  {\tt qi.wang.xjtumath@gmail.com}}
}
\begin{document}
\maketitle

\begin{abstract}

Two inverse scattering schemes were recently developed in \cite{LiLiuShangSun} for locating multiple electromagnetic (EM) scatterers, respectively, of small size and regular size compared to the detecting EM wavelength. Both schemes make use of a single far-field measurement. The scheme of locating regular-size scatterers requires the {\it a priori} knowledge of the possible shapes, orientations and sizes of the underlying scatterer components. In this paper, we extend that imaging scheme to a much more practical setting by relaxing the requirement on the orientations and sizes. We also develop an imaging scheme of locating multiple multi-scale EM scatterers, which may include at the same time, both components of regular size and small size. For the second scheme, a novel local re-sampling technique is developed. Furthermore, more robust and accurate reconstruction can be achieved for the second scheme if an additional far-field measurement is used. Rigorous mathematical justifications are provided and numerical results are presented to demonstrate the effectiveness and the promising features of the proposed imaging schemes.

\end{abstract}

\section{Introduction}

We shall be concerned with the time-harmonic electromagnetic (EM) wave scattering.
Let
\begin{equation}\label{eq:plane waves}
E^i(x)=pe^{ik x\cdot d},\quad H^i(x)=\frac{1}{i k} \nabla\wedge E^i(x),\quad x\in\mathbb{R}^3\,,
\end{equation}
be a pair of time-harmonic EM plane waves, where $E^i$ and $H^i$ are, respectively, the electric and magnetic fields, and $k\in\mathbb{R}_+$, $d\in\mathbb{S}^2$, $p\in\mathbb{R}^3$ with $p\perp d$ are, respectively, the wave number, incident direction and polarization vector. In the homogeneous background space $\mathbb{R}^3$, where the EM medium is characterized by the electric permittivity $\varepsilon_0=1$, magnetic permeability $\mu_0=1$ and conductivity $\sigma_0=0$, the plane waves $(E^i, H^i)$ propagate indefinitely. If an EM inhomogeneity is presented in the homogeneous space, the propagation of the plane waves will be perturbed, leading to the so-called {\it scattering}. Throughout, we assume that the EM inhomogeneity is compactly supported in a bounded Lipschitz domain $\Omega\subset \mathbb{R}^3$ with $\mathbb{R}^3\backslash\overline{\Omega}$ connected. The inhomogeneity is referred to as a {\it scatterer}, and it is also characterized by the EM medium parameters including the electric permittivity $\varepsilon(x)$, the magnetic permeability $\mu(x)$ and the conductivity $\sigma(x)$. The medium parameters $\varepsilon(x)$, $\mu(x)$ and $\sigma(x)$ for $x\in\Omega$ are assumed to be $C^2$-smooth functions with $\varepsilon(x), \mu(x)>0$ and $\sigma(x)\geq 0$. The propagation of the total EM fields $(E,H)\in\mathbb{C}^3\wedge\mathbb{C}^3$ in the medium is governed by the Maxwell equations
\begin{equation}\label{eq:pp1}
\nabla\wedge E(x)-ik\mu(x) H(x)=0,\quad \nabla\wedge H(x)+(ik\varepsilon(x)-\sigma(x)) E(x)=0,\quad x\in\Omega,
\end{equation}
whereas in the background space it is governed by
\begin{equation}\label{eq:pp2}
\nabla\wedge E(x)-ik H(x)=0,\quad \nabla\wedge H(x)+ik E(x)=0,\quad x\in\mathbb{R}^3\backslash\overline{\Omega}.
\end{equation}
The total EM wave fields outside the inhomogeneity, namely in $\mathbb{R}^3\backslash \overline{\Omega}$, are composed of two parts: the incident wave fields $E^i, H^i$ and the scattered wave fields $E^+, H^+$. That is, we have
\begin{equation}\label{eq:total fields}
E(x)=E^i(x)+E^+(x),\quad H(x)=H^i(x)+H^+(x),\quad x\in\mathbb{R}^3\backslash\overline{\Omega}.
\end{equation}
The scattered EM fields are radiating, characterized by the Silver-M\"uller radiation condition
\[
\displaystyle{\lim_{|x|\rightarrow+\infty}|x|\left| (\nabla\wedge E^+)(x)\wedge\frac{x}{|x|}- ik E^+(x) \right|=0},
\]
which holds uniformly in all directions $\hat{x}:=x/|x|\in\mathbb{S}^2$, $x\in\mathbb{R}^3 \backslash \{0\}$. In the extreme situation where the conductivity of the inhomogeneity goes to infinity, the scatterer becomes perfectly conducting and the EM fields cannot penetrate inside $\Omega$. Moreover, the tangential component of the total electric field vanishes on the boundary of the scatterer, namely,
\begin{equation}
\nu\wedge E=0\quad \mbox{on\ \ $\partial\Omega$},
\end{equation}
where $\nu$ is the outward unit normal vector to $\partial\Omega$. In the perfectly conducting case, the scatterer is usually referred to as a {\it PEC obstacle}.

In summary, let us consider the scattering due to an inhomogeneous EM medium $(M; \varepsilon, \mu, \sigma)$ and a PEC obstacle $O$, and denote the combined scatterer $\Omega:=M\cup O$. $M$ and $O$ are assumed to be bounded Lipschitz domains with $\overline{M}\cap\overline{O}=\emptyset$ and $\mathbb{R}^3\backslash(\overline{M}\cup\overline{O})$ connected. The EM scattering is governed by the following Maxwell system
\begin{equation}\label{eq:Maxwell general}
\begin{cases}
\displaystyle{\nabla\wedge E-ik \bigg(1+(\mu-1)\chi_{M} \bigg) H=0}\ &  \mbox{in\ \ $\mathbb{R}^3\backslash\overline{O}$},\\
\displaystyle{\nabla\wedge H+\bigg(ik(1+(\varepsilon-1)\chi_{M})- \sigma\chi_{M} \bigg)E=0}\ & \mbox{in\ \ $\mathbb{R}^3\backslash\overline{O}$},\\
E^-=E|_{M},\ E^+=(E-E^i)|_{\mathbb{R}^3\backslash\overline{M\cup O}},\\
H^-=H|_M,\ H^+=(H-H^i)|_{\mathbb{R}^3\backslash\overline{M\cup O}}, \\
\nu\wedge E^+=-\nu\wedge E^i\hspace*{1cm}\mbox{on \ \ $\partial O$},\\
\displaystyle{\lim_{|x|\rightarrow+\infty}|x|\left| (\nabla\wedge E^+)(x)\wedge\frac{x}{|x|}- ik E^+(x) \right|=0}.
\end{cases}
\end{equation}
We refer to \cite{Lei,Ned} for the well-posedness study of the forward scattering problem \eqref{eq:Maxwell general}. There exists a unique pair of solutions $(E, H)\in H_{loc}(\text{curl}; \mathbb{R}^3\backslash\overline{O})\wedge H_{loc}(\text{curl}; \mathbb{R}^3\backslash\overline{O})$ to the system \eqref{eq:Maxwell general}. Moreover, $E^+$ admits the following asymptotic expansion  (cf.\!\cite{CK})
\begin{equation}\label{eq:farfield}
E^+(x)=\frac{e^{ik|x|}}{|x|} A\left(\frac{x}{|x|}; (M;\varepsilon,\mu, \sigma), O, d,p,k\right)+\mathcal{O}\left(\frac{1}{|x|^2}\right),
\quad \mathrm{as}~~ |x|\rightarrow +\infty\,,
\end{equation}
where $A(\hat{x}; (M;\varepsilon,\mu, \sigma), O, d, p, k)$,  (or for short $A(\hat{x})$ or $A(\hat{x};  d, p, k)$ with emphasis on the dependence), is known as the electric far-field pattern. Due to the real analyticity of   $A(\hat{x})$ on the unit sphere $\mathbb{S}^2$, if $A(\hat{x})$ is known on any open subset of $\mathbb{S}^2$, it is known on $\mathbb{S}^2$ by the analytic continuation.

The inverse problem that we shall consider is to recover the medium inclusion $(M;\varepsilon,\mu,\sigma)$ and/or the PEC obstacle $O$ by the knowledge of $A(\hat{x};d,p,k)$. In the physical situation, the inhomogeneous EM medium and the obstacle are the unknown/inaccessible target objects. One sends detecting EM plane waves and collects the corresponding scattered data produced by the underlying scatterer, and from which to infer knowledge of the target objects. This inverse scattering problem is of critical importance in many areas of science and technology, such as radar and sonar, geophysical exploration, medical imaging and non-destructive testing, to name just a few (cf.\!\cite{AK1,AK2,CK,Mar,Uhl}). If one introduces an operator $\mathcal{F}$ which maps the EM scatterer to the corresponding far-field pattern, the inverse scattering problem can be formulated as the following operator equation
\begin{equation}\label{eq:operator equation}
\mathcal{F}((M;\varepsilon,\mu,\sigma),O)=A(\hat{x};d,p,k),\quad \hat{x},d\in\mathbb{S}^2,\ p\in\mathbb{R}^3,\ k\in\mathbb{R}_+.
\end{equation}
It is widely known that the operator equation~\eqref{eq:operator equation} is non-linear and ill-posed (cf.\!\cite{CK}).

In this work, we are mainly concerned with the numerical reconstruction algorithms for the inverse scattering problem aforementioned. There are many results in the literature and various imaging schemes have been developed; see, e.g. \cite{AILP,AK1,CCM,CK,IJZ,KG,Pot,SZC,UZ,Zha} and the references therein. It is remarked that most existing schemes involve inversions, and in order to tackle the ill-posedness, regularizations are always utilized. For the present study, we are particularly interested in the reconstruction by making use of a single far-field measurement, namely $A(\hat{x};d,p,k)$ for all $\hat{x}\in\mathbb{S}^2$ but fixed $d\in\mathbb{S}^2, p\in\mathbb{R}^3$ and $k\in\mathbb{R}_+$. Here, we note that in \eqref{eq:operator equation}, the unknown scatterer depends on a 3D parameter $x\in\Omega$, whereas the far-field pattern depends on a 2D parameter $\hat{x}\in\mathbb{S}^2$, and hence, much less information is used for the proposed reconstruction scheme. The inverse electromagnetic scattering problem with minimum measurement data is extremely challenging with very limited theoretical and computational progress in the literature (cf.\!\cite{CK,Isa,LZr}). Furthermore, we shall conduct our study in a very general and complex environment. The target scatterer may consist of multiple components with an unknown number, and each component could be either an inhomogeneous medium inclusion or a PEC obstacle. 

Two imaging schemes   using a single measurement  were recently proposed in \cite{LiLiuShangSun}, namely Schemes S and R, for locating multiple EM scatterer components, respectively,  of small size and regular size compared to the wavelength of the incident EM plane waves. The schemes rely on certain new indicator functions, which can be directly calculated from the measured far-field  data. In calculating the indicator functions, there are no inversions or regularizations and hence the proposed schemes are shown to be very efficient and robust to noisy data. However, Scheme R, the locating scheme for regular-size scatterers, requires the {\it a priori} knowledge of the possible shapes, orientations and sizes of the underlying scatterer components. It is our {first goal} to extend Scheme R in \cite{LiLiuShangSun} to an improved Scheme AR by relaxing the requirement on orientations and sizes. This is achieved in light of the idea  that the admissible reference space can be augmented by more data carrying  information about orientations and sizes. Next, based on the newly developed Scheme AR and Scheme S in \cite{LiLiuShangSun}, in a certain generic practical setting, we develop a novel imaging procedure (Scheme M) in locating multiple multi-scale EM scatterers, which include both regular-size and small-size components. A novel local re-sampling technique is proposed and plays a key role in tackling the challenging multi-scale reconstruction in Scheme M. Furthermore, for the multi-scale locating scheme, if one additional set of far-field data is used, more robust and accurate reconstruction can be achieved. To our best knowledge, this is the first reconstruction scheme in the literature on recovering multi-scale EM scatterers by using such less scattering information. For all the proposed imaging schemes, we provide rigorous mathematical justifications. We also conduct systematical numerical experiments to demonstrate the effectiveness and the promising features of the schemes.


The rest of the paper is organized as follows. Section~\ref{sect:2} is devoted to the description of multi-scale EM scatterers and the two locating schemes in \cite{LiLiuShangSun}. In Section~\ref{sect:3}, we develop techniques on relaxing the requirement on knowledge of the orientation and size for locating regular-size scatterers. In Section~\ref{sect:4}, we present the imaging schemes of locating multiple multi-scale scatterers. Finally, in Section~\ref{sect:5}, numerical experiments are given to demonstrate the effectiveness and the promising features of the proposed imaging schemes.

\section{Multi-scale EM scatterers and two locating schemes}\label{sect:2}

Throughout the rest of the paper, we assume that $k\sim 1$. That is, the wavelength of the EM plane waves is given by $\lambda=2\pi/k\sim 1$ and hence the size of a scatterer can be expressed in terms of its Euclidean diameter.

\subsection{Scheme S}

We first introduce the class of small scatterers for our study. Let $l_s\in\mathbb{N}$ and $D_j$, $1\leq j\leq l_s$ be bounded Lipschitz domains in $\mathbb{R}^3$. It is assumed that all $D_j$'s are simply connected and contain the origin. For $\rho\in\mathbb{R}_+$, we let $\rho D_j:=\{\rho x; x\in D_j\}$ and set
\begin{equation}\label{eq:small component}
\Omega_j^{(s)}=z_j+\rho D_j,\quad z_j\in\mathbb{R}^3,\ \ 1\leq j\leq l_s.
\end{equation}
Each $\Omega_j^{(s)}$ is referred to as a scatterer component and its content is endowed with $\varepsilon_j, \mu_j$ and $\sigma_j$. The parameter $\rho\in\mathbb{R}_+$ represents the relative size of the scatterer (or, more precisely, each of its components).  The scatterer components $(\Omega_j^{(s)}; \varepsilon_j, \mu_j, \sigma_j)$, $1\leq j\leq l_s$, are assumed to satisfy:~i).~if for some $j$, $0\leq \sigma_j<+\infty$, then $\varepsilon_j, \mu_j$ and $\sigma_j$ are all real valued $C^2$-smooth functions in the closure of ${\Omega_j^{(s)}}$; ii).~in the case of i), the following condition is satisfied, $|\varepsilon_j(x)-1|+|\mu_j(x)-1|+|\sigma_j(x)|>c_0>0$ for all $x\in\Omega_j^{(s)}$ and some positive constant $c_0$; iii).~if for some $j$, $\sigma_j=+\infty$, then disregarding the parameters $\varepsilon_j$ and $\mu_j$, $\Omega_j^{(s)}$ is regarded as a PEC obstacle. Condition ii) means that if $(\Omega_j^{(s)}; \varepsilon_j, \mu_j, \sigma_j)$ is a medium component, then it is inhomogeneous from the homogeneous background space. We set
\begin{equation}\label{eq:small scatterer}
\Omega^{(s)}:=\bigcup_{j=1}^{l_s} \Omega_j^{(s)}\quad \mbox{and}\quad (\Omega^{(s)};\varepsilon,\mu,\sigma):=\bigcup_{j=1}^{l_s} (\Omega_j^{(s)}; \varepsilon_j,\mu_j,\sigma_j).
\end{equation}
and  make the following qualitative assumption,
\begin{equation}\label{eq:qualitative assumptions}
\rho\ll 1\qquad \mbox{and}\qquad \mbox{dist}(z_j, z_{j'})\gg 1\quad \mbox{for\ $j\neq j'$, $1\leq j, j'\leq l_s$}.
\end{equation}
The assumption \eqref{eq:qualitative assumptions} implies that compared to the wavelength of the incident plane waves, the relative size of each scatterer component is small and if there are multiple components, they are sparsely distributed. It is numerically shown in \cite{LiLiuShangSun} that if the relative size is smaller than half a wavelength and the distance between two different components is bigger than half a wavelength, the scheme developed there works well for locating the multiple components of $\Omega^{(s)}$. 
Let $0\leq l_s'\leq l_s$ be such that when $1\leq j\leq l_s'$, $\sigma_j=+\infty$, and when $l_s'+1\leq j\leq l_s$, $0\leq \sigma_j<+\infty$. That is, if $1\leq j\leq l_s'$, $\Omega_j^{(s)}$ is a PEC obstacle component, whereas if $l_s'+1\leq j\leq l_s$, $(\Omega_j^{(s)}; \varepsilon_j, \mu_j, \sigma_j)$ is a medium component. If $l_s'=0$, then all the components of the small scatterer $\Omega^{(s)}$ are of medium type and if $l_s'=l_s$, then all the components are PEC obstacles. The EM scattering corresponding to $\Omega^{(s)}$ due to a single pair of incident waves $(E^i, H^i)$ is governed by \eqref{eq:Maxwell general} with $O=\bigcup_{j=1}^{l_s'} \Omega_j^{(s)}$ and $(M; \varepsilon, \mu, \sigma)=\bigcup_{j=l_s'+1}^{l_s} (\Omega_j^{(s)}; \varepsilon_j, \mu_j, \sigma_j)$. We denote the electric far-field pattern by $A(\hat{x}; \Omega^{(s)})$.

In order to locate the multiple components of $\Omega^{(s)}$ in \eqref{eq:small scatterer}, the following indicator function is introduced in \cite{LiLiuShangSun},
\begin{equation}\label{eq:indicator function}
\begin{split}
I_s(z):=\frac{1}{\|A(\hat x;\Omega^{(s)})\|^2_{T^2(\mathbb{S}^2)}}&\sum_{m=-1,0,1}\bigg( {\bigg|\left\langle A(\hat x;\Omega^{(s)}), e^{ik (d-\hat x)\cdot z}\, U_1^m(\hat x)  \right\rangle_{T^2(\mathbb{S}^2)}\bigg|^2}\\
& +{\bigg|\left\langle A(\hat x;\Omega^{(s)}), e^{i k (d-\hat x)\cdot z}\, V_1^m(\hat x)  \right\rangle_{T^2(\mathbb{S}^2)}\bigg|^2}         \bigg),\ \ z\in\mathbb{R}^3,
\end{split}
\end{equation}
where
\[
T^2(\mathbb{S}^2):=\{\mathbf{a}\in\mathbb{C}^3|\ \mathbf{a}\in L^2(\mathbb{S}^2)^3, \ \hat{x}\cdot \mathbf{a}=0\ \ \mbox{for a.e. $\hat{x}\in\mathbb{S}^2$}\}.
\]
and
\begin{equation*}
U_n^m(\hat{x}):=\frac{1}{\sqrt{n(n+1)}}\text{Grad}\, Y_n^m(\hat{x}),\ \
V_n^m(\hat x):=\hat x\wedge U_n^m(\hat x),
 \  n\in\mathbb{N},\  \ m=-n,\cdots,n,
\end{equation*}
with $Y_n^m(\hat x)$, $m=-n,\ldots,n$ the spherical harmonics of order $n\geq 0$ (cf.\!\cite{CK}). It is shown in \cite{LiLiuShangSun} that $z_j$ (cf.\!~\eqref{eq:small component}), $1\leq j\leq l_s$, is a local maximum point for $I_s(z)$. Based on such indicating behavior, the following scheme is proposed in \cite{LiLiuShangSun} for locating the multiple components of the small scatterer $\Omega^{(s)}$.

\medskip

\hrule

\medskip
\noindent {\bf Algorithm: Locating Scheme S}
\medskip

\hrule
\begin{enumerate}[1)]

\item For an unknown EM scatterer $\Omega^{(s)}$ in \eqref{eq:small scatterer}, collect the far-field data by sending a
 single pair of detecting EM plane waves specified in \eqref{eq:plane waves}.

\item Select a sampling region with a mesh $\mathcal{T}_h$ containing $\Omega^{(s)}$.

\item For each point $z\in \mathcal{T}_h$, calculate $I_s(z). $

\item Locate all the significant local maxima of $I_s(z)$ on $\mathcal{T}_h$, which represent the locations of the scatterer components.
\end{enumerate}

\hrule
\medskip

\subsection{Scheme R}

Next, we consider the locating of multiple obstacles of regular size. For this locating scheme, one must require the following generic uniqueness result holds for the inverse scattering problem. Let $O_1$ and $O_2$ be obstacles and both of them are assumed to be bounded simply connected Lipschitz domains in $\mathbb{R}^3$ containing the origin. Then
\begin{equation}\label{eq:uniqueness}
\text{$A(\hat{x}; O_1)=A(\hat{x}; O_2)$ if and only if $O_1=O_2$. }
\end{equation}
This result implies that by using a single far-field measurement, one can uniquely determine an obstacle. There is a widespread belief that such a uniqueness result holds, but there is only limited progress in the literature, see, e.g., \cite{L,LZr,LYZ}. Throughout the present study, we shall assume that such a generic uniqueness holds true.

We now briefly recall {\it Scheme R} in \cite{LiLiuShangSun}  for locating multiple regular-size obstacles. Let $l_r\in\mathbb{N}$ and let $G_j$, $1\leq j\leq l_r$ be bounded simply connected Lipschitz domains containing the origin in $\mathbb{R}^3$. Set
\begin{equation}\label{eq:regular component}
\Omega_j^{(r)}=z_j+G_j,\quad z_j\in\mathbb{R}^3,\ \ 1\leq j\leq l_r.
\end{equation}
Each $\Omega_j^{(r)}$ denotes a PEC obstacle located at the position $z_j\in\mathbb{R}^3$. It is required that
\begin{equation}\label{eq:regular condition}
\text{diam}(\Omega_j^{(r)})=\text{diam}(G_j)\sim 1,\ 1\leq j\leq l_r;\ \ \ L=\min_{1\leq j, j'\leq l_r, j\neq j'}\text{dist}(z_j, z_{j'})\gg 1.
\end{equation}
Furthermore, there exists an admissible reference obstacle space
\begin{equation}\label{eq:reference space}
\mathscr{S}:=\{\Sigma_j\}_{j=1}^{l'},
\end{equation}
where each $\Sigma_j\subset\mathbb{R}^3$ is a bounded simply connected Lipschitz domain that contains the origin and
\begin{equation}\label{eq:assumption 1}
\Sigma_j\neq \Sigma_{j'},\quad \mbox{for}\ \ j\neq j',\  1\leq j, j'\leq l',
\end{equation}
such that
\begin{equation}\label{eq:shape known}
G_j\in\mathscr{S},\quad j=1,2,\ldots,l_r.
\end{equation}
The admissible class $\mathscr{S}$ is required to be known in advance, and by reordering if necessary, it is assumed that
\begin{equation}\label{eq:assumption2}
\| A(\hat{x};\Sigma_{j}) \|_{T^2(\mathbb{S}^2)}\geq \| A(\hat x;\Sigma_{j+1}) \|_{T^2(\mathbb{S}^2)},\quad j=1,2,\ldots,l'-1.
\end{equation}
Let
\begin{equation}\label{eq:regular scatterer}
\Omega^{(r)}:=\bigcup_{j=1}^{l_r} \Omega_j^{(r)}.
\end{equation}
Then $\Omega^{(r)}$ denotes the regular-size scatterer for our current study, which may consist of multiple obstacle components. The second condition in \eqref{eq:regular condition} means that the components are sparsely distributed. It is numerically observed in \cite{LiLiuShangSun}   that if the distance is larger than a few numbers of wavelength, then Scheme R works effectively. The assumption \eqref{eq:shape known} indicates that certain {\it a priori} knowledge of the target scatterer is required. It is remarked that $l_r$ is not necessarily the same as $l'$. Define $l'$ indicator functions as follows,
\begin{equation}\label{eq:indicator regular}
I^j_r(z)=\frac{\bigg| \langle A(\hat x;\Omega^{(r)}), e^{ik(d-\hat x)\cdot z} A(\hat x; \Sigma_j)  \rangle_{T^2(\mathbb{S}^2)}  \bigg|}{\| A(\hat x; \Sigma_j) \|^2_{T^2(\mathbb{S}^2)}},\quad j=1,2,\ldots, l',\quad z\in\mathbb{R}^3.
\end{equation}
The following indicating behavior of $I^j_r(z)$'s is proved in \cite{LiLiuShangSun} and summarized  below.
\begin{thm}\label{thm:main2}
Consider the indicator function $I_r^1(z)$ introduced in \eqref{eq:indicator regular}. Suppose there exists $J_0\subset\{1,2,\ldots,l_r\}$ such that for $j_0\in J_0$, $G_{j_0}=\Sigma_1$, whereas $G_j\neq \Sigma_1$ for $j\in \{1,2,\ldots,l_r\}\backslash J_0$. Then for each $z_j$, $j=1,2,\ldots,l_r$, there exists an open neighborhood of $z_j$, $neigh(z_j)$, such that
\begin{enumerate}
\item[(i).]~if $j\in J_0$, then
\begin{equation}\label{eq:further ind 1}
\widetilde{I}_r^1(z):=|I_r^1(z)-1|\leq \mathcal{O}\left( \frac 1 L  \right ),\quad z\in neigh(z_{j}),
\end{equation}
and moreover, $z_{j}$ is a local minimum point for $\widetilde{I}_r^1(z)$;

\item[(ii).]~if $j\in \{1,2,\ldots,l_r\}\backslash J_0$, then there exists $\epsilon_0>0$ such that
\begin{equation}\label{eq:further ind 2}
\widetilde{I}_r^1(z):=|I_r^1(z)-1|\geq \epsilon_0+\mathcal{O}\left( \frac 1 L \right ),\quad z\in neigh(z_j).
\end{equation}
\end{enumerate}
\end{thm}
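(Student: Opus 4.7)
The plan is to exploit the translation covariance of the electric far-field pattern, namely $A(\hat{x}; z+G) = e^{ik(d-\hat{x})\cdot z}\, A(\hat{x}; G)$, together with the sparsity hypothesis $L\gg 1$. Under this sparsity the multiple-scattering interactions between distinct components of $\Omega^{(r)}$ are weak, and the total far-field pattern admits the decomposition
\[
A(\hat{x};\Omega^{(r)}) \;=\; \sum_{j'=1}^{l_r} e^{ik(d-\hat{x})\cdot z_{j'}}\, A(\hat{x};G_{j'}) \;+\; R(\hat{x}), \qquad \|R\|_{T^2(\mathbb{S}^2)} \;=\; \mathcal{O}(1/L).
\]
Substituting this into \eqref{eq:indicator regular}, the numerator of $I_r^1(z)$ becomes a finite sum of oscillatory inner products
\[
\mathcal{J}_{j'}(z) \;:=\; \int_{\mathbb{S}^2} e^{ik(d-\hat{x})\cdot(z_{j'}-z)}\, A(\hat{x};G_{j'})\cdot \overline{A(\hat{x};\Sigma_1)}\, d\sigma(\hat{x}),
\]
plus an $\mathcal{O}(1/L)$ remainder.

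Next, I would fix $z\in neigh(z_j)$ and isolate the diagonal term $\mathcal{J}_j(z)$ from the off-diagonal ones. For $j'\neq j$ one has $|z_{j'}-z|\sim L$, so the phase $k(d-\hat{x})\cdot(z_{j'}-z)$ has large amplitude $\mathcal{O}(kL)$; its only nondegenerate stationary points on $\mathbb{S}^2$ are $\pm(z_{j'}-z)/|z_{j'}-z|$, and a standard stationary-phase estimate on the sphere gives $|\mathcal{J}_{j'}(z)| = \mathcal{O}(1/L)$. Hence to leading order
\[
I_r^1(z) \;=\; \frac{|\mathcal{J}_j(z)|}{\|A(\cdot;\Sigma_1)\|^2_{T^2(\mathbb{S}^2)}} \;+\; \mathcal{O}(1/L).
\]
In case (i), $G_j=\Sigma_1$ and $\mathcal{J}_j(z) = \int_{\mathbb{S}^2} e^{ik(d-\hat{x})\cdot(z_j-z)}|A(\hat{x};\Sigma_1)|^2\, d\sigma(\hat{x})$. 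Cauchy--Schwarz applied with the pointwise nonnegative weight $|A(\hat{x};\Sigma_1)|^2$ yields $|\mathcal{J}_j(z)|\leq \|A(\cdot;\Sigma_1)\|^2_{T^2(\mathbb{S}^2)}$ with equality precisely at $z=z_j$; thus $I_r^1(z_j)=1+\mathcal{O}(1/L)$ and $z_j$ is a strict local minimum of $\widetilde{I}_r^1$ (up to the $\mathcal{O}(1/L)$ floor), which establishes (i).

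For case (ii), $G_j\neq\Sigma_1$ while $G_j\in\mathscr{S}$; the generic uniqueness \eqref{eq:uniqueness} forces $A(\cdot;G_j)\neq A(\cdot;\Sigma_1)$, and the norm ordering \eqref{eq:assumption2} gives $\|A(\cdot;G_j)\|_{T^2(\mathbb{S}^2)}\leq \|A(\cdot;\Sigma_1)\|_{T^2(\mathbb{S}^2)}$. Evaluating $\mathcal{J}_j(z_j)=\langle A(\cdot;G_j),A(\cdot;\Sigma_1)\rangle_{T^2(\mathbb{S}^2)}$ and invoking strict Cauchy--Schwarz together with this norm bound produces a fixed positive gap $2\epsilon_0$ (depending only on $\Sigma_1$ and the finite class $\mathscr{S}$) such that $|\mathcal{J}_j(z_j)|\leq (1-2\epsilon_0)\|A(\cdot;\Sigma_1)\|^2_{T^2(\mathbb{S}^2)}$. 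Propagating this through $neigh(z_j)$ yields $\widetilde{I}_r^1(z)\geq \epsilon_0+\mathcal{O}(1/L)$, as claimed. The hardest part will be to quantify uniformly in $z$ the two small corrections at once --- the weak-interaction remainder $R$ and the stationary-phase decay of the off-diagonal $\mathcal{J}_{j'}$; a secondary delicate point specific to (ii) is ruling out the degenerate case where $A(\cdot;G_j)$ is a unimodular scalar multiple of $A(\cdot;\Sigma_1)$, which would force equality in Cauchy--Schwarz. This should be handled either by a sharper use of the uniqueness hypothesis or by assuming that distinct elements of the finite admissible class $\mathscr{S}$ produce pairwise linearly independent far-field patterns.
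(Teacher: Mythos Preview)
The paper does not actually prove Theorem~\ref{thm:main2}; it merely quotes the result from \cite{LiLiuShangSun} (``The following indicating behavior of $I^j_r(z)$'s is proved in \cite{LiLiuShangSun} and summarized below''). So there is no in-paper proof to compare your proposal against.

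That said, your outline is the expected one and is consistent with how the paper later invokes this result. The two ingredients you identify --- the additive far-field decomposition with $\mathcal{O}(1/L)$ remainder under sparse separation, and the translation relation $A(\hat{x};z+G)=e^{ik(d-\hat{x})\cdot z}A(\hat{x};G)$ --- are exactly the facts the paper cites from \cite{LiLiuShangSun} (see \eqref{eq:mm1} and the proof of Theorem~\ref{thm:main23}, which appeals to ``a completely similar argument to the proof of Theorem~2.1 in \cite{LiLiuShangSun}''). Your handling of the off-diagonal terms $\mathcal{J}_{j'}(z)$ via a stationary-phase/Riemann--Lebesgue estimate and of the diagonal term via Cauchy--Schwarz is standard and correct.

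The one genuine loose end you flag yourself is real: in case~(ii), strict Cauchy--Schwarz alone does not exclude the degenerate situation $A(\cdot;G_j)=e^{i\alpha}A(\cdot;\Sigma_1)$, which the bare uniqueness statement \eqref{eq:uniqueness} does not immediately rule out. The paper sidesteps this by treating \eqref{eq:rerell} as a verifiable generic assumption on the finite reference class $\mathscr{S}$ (Remark~\ref{rem:MediumObstacle}); in the same spirit you may simply assume pairwise linear independence of the $A(\cdot;\Sigma_j)$, which for a finite known list can be checked in advance. With that proviso your argument closes.
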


Based on Theorem~\ref{thm:main2}, the {\it Scheme R} for locating the multiple components in $\Omega^{(r)}$ can be successively formulated as follows.

\medskip


\hrule

\medskip
\noindent {\bf Algorithm: Locating Scheme R}
\medskip

\hrule
\begin{enumerate}[1)]

\item For an unknown EM scatterer $\Omega^{(r)}$ in \eqref{eq:regular scatterer}, collect the far-field data by sending a single pair of detecting EM plane waves specified in \eqref{eq:plane waves}.

\item Select a sampling region with a mesh $\mathcal{T}_h$ containing $\Omega^{(r)}$.

\item Collect in advance the far-field patterns associated with the admissible reference scatterer space $\mathscr{S}$
                      in \eqref{eq:reference space}, and reorder $\mathscr{S}$  if necessary to make it satisfy
\eqref{eq:assumption2}, and also verify the generic assumption \eqref{eq:uniqueness}.

\item Set $j=1$.

\item For each point $z\in \mathcal{T}_h$, calculate $I_r^j(z)$ (or $\widetilde{I}_r^j(z)=|I_r^j(z)-1|$).

\item Locate all those significant local maxima of $I_r^j(z)$ such that $I_r^j(z)\sim 1$ (or the minima of $\widetilde{I}_r^j(z)$ on $\mathcal{T}_h$ such that $\widetilde{I}_r^j(z)\ll 1$), where scatterer components of the form $z+\Sigma_j$ is located.

\item Trim all those $z+\Sigma_j$ found in 6) from $\mathcal{T}_h$.

\item If $\mathcal{T}_h=\emptyset$ or $j=l'$, then Stop; otherwise, set $j=j+1$, and go to 5).

\end{enumerate}

\hrule

\medskip

\begin{rem}\label{rem:MediumObstacle}
By \eqref{eq:uniqueness} and \eqref{eq:assumption 1}, it is readily seen that
\begin{equation}\label{eq:rerell}
A(\hat{x}; \Sigma_j)\neq A(\hat{x}; \Sigma_{j'}),\quad j\neq j',\ 1\leq j, j'\leq l'. 
\end{equation}
\eqref{eq:rerell} plays a critical role in justifying the indicating behavior of $I_r^j(z)$ in Theorem~\ref{thm:main2}. Nevertheless, since the reference space
\eqref{eq:reference space} is given, one can verify \eqref{eq:rerell} in advance. On the other hand, one can also include inhomogeneous medium components into the admissible reference space provided the relation \eqref{eq:rerell} is satisfied. For the inhomogeneous medium component in $\mathscr{S}$, its content is required to be known in advance; see Remark~\ref{rem:MediumObstacle2} in the following. 
\end{rem}

{\it Scheme R} could find important practical applications, e.g., in radar technology in locating an unknown group of aircrafts, where one has the {\it a priori} knowledge on the possible models of the target airplanes. However, we note here some important practical situations that Scheme R does not cover. Indeed, in Scheme R, it is required that each component, say $\Omega_1^{(r)}$, is a translation of the reference obstacle $\Sigma_1$, namely $\Omega_1^{(r)}=z+\Sigma_1$. This means that, in addition to the shape of the obstacle component $\Omega_1^{(r)}$, one must also know its orientation and size in advance (two concepts to be mathematically specified in Section~\ref{sect:3}). In the radar technology, this means that in addition to the model of each aircraft, one must also know which direction the aircraft is heading to. Clearly, this limits the applicability of the locating scheme. In the next section, we shall propose strategies to relax the limitations about the requirement on orientation and size. Furthermore, we shall consider the locating of multiple multi-scale scatterers, which may include, at the same time, small- and regular-size scatterers. To that end, we introduce the multiple multi-scale scatterer for our subsequent study
\begin{equation}\label{eq:multiscale scatterer}
\Omega^{(m)}:=\Omega^{(s)}\cup\Omega^{(r)},
\end{equation}
where $\Omega^{(s)}$ and $\Omega^{(r)}$ are, respectively, given in \eqref{eq:small scatterer} and \eqref{eq:regular scatterer}. 

\section{Scheme R with augmented reference spaces}\label{sect:3}

In this section, we propose an enhanced version of Scheme R with augmented reference spaces to image a regular-size scatterer with multiple components
of different shapes, orientations and sizes. This goal is achieved through collecting more reference far field data of 
a set of a priori known components, in particular associated with their possible orientations and sizes.

Let $\Pi_{\theta,\phi,\psi}$ denote the 3D rotation whose Euler angles are $\theta,\phi$ and $\psi$ with the $x_1-x_2-x_3$ convention for $x=(x_1,x_2,x_3)\in\mathbb{R}^3$. That is, $\Pi_{\theta,\phi,\psi} x=U(\theta,\phi,\psi) x$, where $U\in SO(3)$ is given by
\begin{equation}\label{eq:U}
\!\!\!\!\!U=\begin{pmatrix}
\cos\theta\cos\psi & -\cos\theta\sin\psi+\sin\phi\sin\theta\cos\psi & \sin\phi\sin\psi+\cos\phi\sin\theta\cos\psi\\
\cos\theta\sin\psi & \cos\phi\cos\psi+\sin\phi\sin\theta\sin\psi & -\sin\phi\cos\psi+\cos\phi\sin\theta\sin\psi \\
-\sin\theta & \sin\phi\cos\theta & \cos\phi\cos\theta
\end{pmatrix} 
\end{equation}
with $0\leq \theta,\phi\leq 2\pi$ and $0\leq \psi\leq \pi$. In the sequel, we suppose there exist triplets $(\theta_j,\phi_j,\psi_j)$, $j=1,2,\ldots, l_r$ such that
\begin{equation}\label{eq:regular scatterer 2}
\Omega_j^{(r)}=z_j+\Pi_{\theta_j,\phi_j,\psi_j} G_j,
\end{equation}
where $G_j\in\mathscr{S}$ defined in \eqref{eq:reference space}.
Now, we let
\begin{equation}\label{eq:rot2}
\Omega^{(r)}=\bigcup_{j=1}^{l_r} (z_j+\Pi_{\theta_j,\phi_j,\psi_j} G_j):=\bigcup_{j=1}^{l_r} (z_j+\widetilde G_j)
\end{equation}
denote the regular-size target scatterer for our current study. Compared to the regular-size scatterer in \eqref{eq:regular scatterer} considered in \cite{LiLiuShangSun} (cf. \eqref{eq:regular component}--\eqref{eq:regular scatterer}), the scatterer introduced in \eqref{eq:rot2} possesses the new feature that each component is allowed to be rotated. In the sequel, the Euler angles $(\theta_j,\phi_j,\psi_j)$ will be referred to as the {\it orientation} of the scatterer component $\Omega_j^{(r)}$ in \eqref{eq:regular scatterer 2}.

Next, we also introduce a scaling/dilation operator $\Lambda_{\tau_j}$, $\tau_j\in\mathbb{R}_+$, and for $\Omega_j^{(r)}=z_j+G_j$, $G_j\in\mathscr{S}$, we set
\begin{equation}\label{eq:sca1}
\Omega_j^{(r)}:=z_j+\Lambda_{\tau_j} G_j,
\end{equation}
where $\Lambda_{\tau_j} G_j:=\{{\tau_j} x\,; \ x\in G_j\}$. Now, for a sequence of $\{\tau_j\}_{j=1}^{l_r}$ 
we set
\begin{equation}\label{eq:sca2}
\Omega^{(r)}=\bigcup_{j=1}^{l_r} (z_j+\Lambda_{\tau_j} G_j).
\end{equation}
We shall call $\tau_j$ the {\it size} or {\it scale} of the component $\Omega_j^{(r)}$ relative to the reference one $G_j$.  

For our subsequent study, we would consider locating a regular-size scatterer with its components both possibly orientated and scaled,
\begin{equation}\label{eq:regular scatterer 4}
\Omega^{(r)}=\bigcup_{j=1}^{l_r} (z_j+\Pi_{\theta_j,\phi_j,\psi_j}\Lambda_{\tau_j} G_j):=\bigcup_{j=1}^{l_r} (z_j+\widehat G_j).
\end{equation}
Compared to the scatterer in \eqref{eq:regular scatterer} considered in \cite{LiLiuShangSun}, the scatterer introduced in \eqref{eq:sca2} is scaled relatively. 
To that end, we first show a relation of the far-field pattern when the underlying scatterer is rotated and scaled.

\begin{prop}\label{prop:rot1}
Let $G$ be a bounded simply connected Lipschitz domain containing the origin, which represents a PEC obstacle. Then, we have that
\begin{equation}\label{eq:rot3}
A(\hat{x}; \Pi_{\theta,\phi,\psi}G, d, p, k)=UA(U^T\hat{x}; G, U^Tp, U^Td, k),
\end{equation}
where $U=U(\theta,\phi,\psi)$ is the rotation matrix corresponding to $\Pi_{\theta,\phi,\psi}$; and
\begin{equation}\label{eq:sca3}
A(\hat{x}; \Lambda_\tau G, d, p, k)=\tau A(\hat{x}; G, d, p, k\tau)
\end{equation}
\end{prop}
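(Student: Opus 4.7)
Both identities are symmetry statements for the exterior Maxwell system, and the plan is to prove each by a change-of-variables argument and then extract the claimed far-field relation from the asymptotic expansion \eqref{eq:farfield}.

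For \eqref{eq:rot3}, let $(E,H)$ be the total field for the scattering problem associated with $\Pi_{\theta,\phi,\psi}G=UG$, incident direction $d$ and polarization $p$. I would introduce the pulled-back fields
\begin{equation*}
\widetilde E(y):=U^T E(Uy),\qquad \widetilde H(y):=U^T H(Uy),\qquad y\in\mathbb{R}^3\setminus\overline{G},
\end{equation*}
and show that $(\widetilde E,\widetilde H)$ solves the scattering problem for $G$ with incident direction $U^T d$ and polarization $U^T p$. The incident wave transforms as $U^T[pe^{ikUy\cdot d}]=(U^T p)e^{iky\cdot(U^T d)}$, so $\widetilde E^i$ is again a plane wave of the required form. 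Because $U\in SO(3)$, the identity $U(a\wedge b)=(Ua)\wedge(Ub)$ gives $\nabla_y\wedge\widetilde E(y)=U^T[(\nabla\wedge E)(Uy)]$, so the Maxwell equations \eqref{eq:pp3}--type relations pass through unchanged, as does the PEC boundary condition on $\partial G$ (since the outward normal at $y\in\partial G$ is $U^T$ times the outward normal at $Uy\in\partial(UG)$). The Silver--M\"uller condition is likewise preserved. By the uniqueness of the forward problem, $(\widetilde E,\widetilde H)$ is \emph{the} solution for $(G,U^T d,U^T p,k)$, and inserting $x=Uy$ (so $|x|=|y|$ and $\hat x=U\hat y$) into \eqref{eq:farfield} yields
\begin{equation*}
U^T A(U\hat y;\,UG,d,p,k)=A(\hat y;\,G,U^T d,U^T p,k),
\end{equation*}
which is \eqref{eq:rot3} after relabeling $\hat x=U\hat y$.

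For \eqref{eq:sca3}, the plan is analogous but with the dilation $y=x/\tau$. Setting $\widetilde E(y):=E(\tau y)$ and $\widetilde H(y):=H(\tau y)$, the chain rule gives $\nabla_y\wedge\widetilde E(y)=\tau\,(\nabla\wedge E)(\tau y)$, so Maxwell's equations for wave number $k$ on $\Lambda_\tau G$ become Maxwell's equations for wave number $k\tau$ on $G$, with the same incident direction $d$ and polarization $p$ (note that $E^i(\tau y)=pe^{i(k\tau) y\cdot d}$). The PEC condition on $\partial G$ follows directly from that on $\partial(\Lambda_\tau G)$, and the radiation condition again carries over. Substituting $x=\tau y$ into \eqref{eq:farfield} produces the factor
\begin{equation*}
E^+(\tau y)=\frac{e^{i(k\tau)|y|}}{|y|}\cdot\frac{1}{\tau}\,A(\hat y;\Lambda_\tau G,d,p,k)+\mathcal{O}(|y|^{-2}),
\end{equation*}
and comparing this with the far-field expansion of the scaled problem (wave number $k\tau$, obstacle $G$) gives \eqref{eq:sca3}.

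The routine bookkeeping is almost entirely linear-algebraic: the only point that needs genuine care is the interaction of the curl operator with the orthogonal matrix $U$, which I would handle cleanly via the $SO(3)$-identity $U(a\wedge b)=(Ua)\wedge(Ub)$ and its consequence for the differential operator, $\nabla\wedge(UF(U^T\cdot))=U[(\nabla\wedge F)(U^T\cdot)]$. Once that is in hand, both parts reduce to reading off the leading-order term of $E^+$ after the respective change of variables, so I do not anticipate any substantive obstacle beyond a careful tracking of how $|x|$, $\hat x$, and the $1/|x|$ prefactor rescale.
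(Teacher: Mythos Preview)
Your proposal is correct and follows essentially the same route as the paper: defining the pulled-back fields $\widetilde E=U^T E(U\cdot)$, $\widetilde H=U^T H(U\cdot)$ (respectively $\widetilde E=E(\tau\cdot)$, $\widetilde H=H(\tau\cdot)$), verifying they solve the corresponding exterior Maxwell problem for $G$, and then reading off the far-field relation from the asymptotic expansion. The paper in fact gives fewer details for the scaling identity than you do, simply noting that it follows ``in a completely similar manner.''
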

\begin{proof}
Let $E\in H_{loc}^1(\mathbb{R}^3\backslash\Pi_{\theta,\phi,
\psi} G)$ and $H\in H_{loc}^1(\mathbb{R}^3\backslash\Pi_{\theta,\phi,
\psi} G)$ be the solutions to the following Maxwell system
\begin{equation}\label{eq:eee1}
\begin{split}
& \nabla\wedge E-ik H=0,\qquad\qquad \nabla\wedge H+ikE=0\quad \mbox{in\ \ $\mathbb{R}^3\backslash\overline{\Pi_{\theta,\phi,\psi}G}$ },\\
& \nu\wedge E=0\quad\mbox{on\ \ $\partial (\Pi_{\theta,\phi,\psi}G)$},\quad\, \ E=E^i+E^+\quad \mbox{in\ \ $\mathbb{R}^3\backslash\overline{\Pi_{\theta,\phi,\psi}G}$ },\\
& {\lim_{|x|\rightarrow+\infty}|x|\left| (\nabla\wedge E^+)(x)\wedge\frac{x}{|x|}- ik E^+(x) \right|=0},
\end{split}
\end{equation}
where $E^i(x)=p e^{ikx\cdot d}$ and $\nu$ is the outward unit normal vector to $\partial(\Pi_{\theta,\phi,\psi} G)$.
Set
\begin{equation}\label{eq:ttt1}
\begin{split}
\widetilde{E}=&\Pi_{\theta,\phi,\psi}^* E:=\Pi_{\theta,\phi,\psi}^{-1}\circ E\circ\Pi_{\theta,\phi,
\psi}=U^TE\circ U\\
\widetilde{H}=&\Pi_{\theta,\phi,\psi}^* H:=\Pi_{\theta,\phi,\psi}^{-1}\circ H\circ\Pi_{\theta,\phi,
\psi}=U^TH\circ U
\end{split}\qquad \mbox{in\ \ $\mathbb{R}^3\backslash\overline{G}$},
\end{equation}
and
\begin{equation}\label{eq:ttt2}
\widetilde{E}^i(x):=(U^Tp) e^{ik x\cdot (U^Td)}
\end{equation}
Then, by the transformation properties of Maxwell's equations (see, e.g., \cite{LiuZhou}), it is straightforward to verify that
\begin{equation}\label{eq:ttt3}
\begin{split}
&\nabla\wedge\widetilde{E}-ik\widetilde{H}=0,\quad \nabla\wedge\widetilde{H}+ik\widetilde{E}=0\quad\mbox{in\ \ $\mathbb{R}^3\backslash \overline{G}$},\\
& \widetilde\nu\wedge\widetilde E=0\quad\mbox{on\ \ $\partial G$},\qquad \widetilde{E}=\widetilde{E}^i+\widetilde{E}^+\quad \mbox{in\ \ $\mathbb{R}^3\backslash\overline{G}$},\\
& {\lim_{|x|\rightarrow+\infty}|x|\left| (\nabla\wedge \widetilde{E}^+)(x)\wedge\frac{x}{|x|}- ik \widetilde{E}^+(x) \right|=0},
\end{split}
\end{equation}
where $\widetilde\nu$ is the outward unit normal vector to $\partial G$. Clearly, $A(\hat{x}; \Pi_{\theta,\phi,\psi} G)$ can be read-off from the large $|x|$ asymptotics of $E(x)$ in \eqref{eq:eee1},
\begin{equation}\label{eq:ffnn1}
E(x)=p e^{ikx\cdot d}+\frac{e^{ik|x|}}{|x|} A\left(\frac{x}{|x|}; \Pi_{\theta,\phi,\psi} G,d,p,k\right)+\mathcal{O}\left(\frac{1}{|x|^2}\right).
\end{equation}
Hence, by \eqref{eq:ttt1} and \eqref{eq:ffnn1}, we have
\begin{equation}\label{eq:ffnn2}
\begin{split}
\widetilde E(x)=&U^TE(Ux)\\
=&U^Tpe^{ik Ux\cdot d}+\frac{e^{ik|Ux|}}{|Ux|} U^T A\left(\frac{Ux}{|Ux|}; \Pi_{\theta,\phi,\psi} G, d, p, k\right)+\mathcal{O}\left(\frac{1}{|Ux|^2}\right)\\
=&U^Tp e^{ikx\cdot U^Td}+\frac{e^{ik|x|}}{|x|} U^TA(U\hat{x}; \Pi_{\theta,\phi,\psi}G, d, p, k)+\mathcal{O}\left(\frac{1}{|x|^2}\right).
\end{split}
\end{equation}
By \eqref{eq:ttt3} and \eqref{eq:ffnn2}, one can readily see that
\[
U^TA(U\hat{x};\Phi_{\theta,\phi,\psi} G, d, p, k)=A(\hat{x}; G, \widetilde{E}^i)=A(\hat{x}; G, U^Td,U^Tp,k).
\]
which immediately implies \eqref{eq:rot3}.

In a completely similar manner, one can show \eqref{eq:sca3}. The proof is complete.

\end{proof}

Proposition~\ref{prop:rot1} suggests that in order to locate a scatterer $\Omega^{(r)}$ in \eqref{eq:rot2} by using the Scheme R, one can make use of the multi-polarization and multi-incident-direction far-field data, namely $A(\hat{x}; p, d, k)$ for all $p\in\mathbb{R}^3$, $d\in\mathbb{S}^2$ and a fixed $k\in\mathbb{R}_+$. On the other hand, in order to still make use of a single far-field for the locating, one can augment the reference space $\mathscr{S}$ by letting
\begin{equation}\label{eq:aug1}
\widetilde{\mathscr{S}}=\Pi_{\theta,\phi,\psi}\mathscr{S}:=\{\Pi_{\theta,\phi,\psi}\Sigma_j\}_{j=1}^{l'},\quad (\theta,\phi,\psi)\in [0,2\pi]^2\times [0,\pi].
\end{equation}
Furthermore, from a practical viewpoint, we introduce a discrete approximation of $\widetilde{\mathscr{S}}$ and set
\begin{equation}\label{eq:discrete aug1}
\widetilde{\mathscr{S}}_h:=\{\Pi_{\theta^h,\phi^h,\psi^h}\Sigma_j\}_{j=1}^{l'}=\{\widetilde{\Sigma}_j\}_{j=1}^{\widetilde{l}_h},
\end{equation}
where $(\theta^h,\phi^h,\psi^h)$ denotes an equal distribution over $[0,2\pi]^2\times[0,\pi]$ with an angular mesh-size $h\in\mathbb{R}_+$ and its cardinality $N_h$, and $\widetilde{l}_h:=l'\times N_h$. By reordering if necessary, we assume the non-increasing relation \eqref{eq:assumption2} also holds for those components. Next, based on the same single far-field data for Scheme S, one can calculate $\widetilde{l}_h$ indicator functions according to \eqref{eq:indicator regular}, but with the reference scatterers taken from $\widetilde{\mathscr{S}}_h$. We denote the $\widetilde{l}_h$ indicator functions by $I_{h}^j(z)$, $1\leq j\leq \widetilde{l}_h$. Then, we have

\begin{thm}\label{thm:main23}
Consider the multiple scatterers introduced in \eqref{eq:rot2} and the indicator function $I_h^1(z)$ introduced above. Let $\widetilde\Sigma_l\in\widetilde{\mathscr{S}}_h$ be such that
\begin{equation}\label{eq:main231}
\widetilde\Sigma_1=\Pi_{\theta_1^h,\phi_1^h,\psi_1^h}\Sigma_{m_0}\quad \mbox{with}\quad \Sigma_{m_0}\in\mathscr{S}.
\end{equation}
Suppose there exists $J_0\subset\{1,2,\ldots,l_r\}$ such that for $j_0\in J_0$,
\[
\Omega_{j_0}^{(r)}=z_{j_0}+\widetilde{G}_{j_0}=z_{j_0}+\Pi_{\theta_{j_0},\phi_{j_0},\psi_{j_0}} G_{j_0}
\]
with
\begin{equation}\label{eq:cond1}
G_{j_0}=\Sigma_{m_0}\quad\mbox{and}\quad \|(\theta_{j_0},\phi_{j_0}, \psi_{j_0})-(\theta_1^h,\phi_1^h,\psi_1^h)\|_{l^\infty}=\mathcal{O}(h);
\end{equation}
whereas for the other components $\Omega_{j}^{(r)}$, $j\in\{1,2,\ldots,l_r\}\backslash J_0$, either of the two conditions in \eqref{eq:cond1} is violated. Then for each $z_j$, $j=1,2,\ldots,l_r$, there exists an open neighborhood of $z_j$, $neigh(z_j)$, such that
\begin{enumerate}
\item[(i).]~if $j\in J_0$, then
\begin{equation}\label{eq:main232}
\widetilde{I}_h^1(z):=|I_h^1(z)-1|\leq \mathcal{O}\left( \frac 1 L+h  \right ),\quad z\in neigh(z_{j}),
\end{equation}
and moreover, $z_{j}$ is a local minimum point for $\widetilde{I}_h^1(z)$;

\item[(ii).]~if $j\in \{1,2,\ldots,l_r\}\backslash J_0$, then there exists $\epsilon_0>0$ such that
\begin{equation}\label{eq:main233}
\widetilde{I}_h^1(z):=|I_h^1(z)-1|\geq \epsilon_0+\mathcal{O}\left( \frac 1 L+h \right ),\quad z\in neigh(z_j).
\end{equation}
\end{enumerate}
\end{thm}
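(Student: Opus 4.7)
The plan is to argue in parallel with Theorem~\ref{thm:main2}, treating the rotated-reference case by combining the reduction formula of Proposition~\ref{prop:rot1} with a quantitative continuous-dependence statement for the far-field pattern under perturbations of the Euler angles. The indicator $I_h^1(z)$ has exactly the same structure as $I_r^1(z)$ in \eqref{eq:indicator regular}; the only change is that the reference obstacle has been enlarged from the discrete family $\mathscr S$ to the rotated discrete family $\widetilde{\mathscr S}_h$. So the first step is to re-use the multi-component splitting already established in \cite{LiLiuShangSun}: since $L=\min_{j\neq j'}\mathrm{dist}(z_j,z_{j'})\gg 1$, I would write
\[
A(\hat x;\Omega^{(r)},d,p,k)=\sum_{j=1}^{l_r} e^{ik(d-\hat x)\cdot z_j}\,A(\hat x;\widehat G_j,d,p,k)+R_L(\hat x),\qquad \|R_L\|_{T^2(\mathbb S^2)}=\mathcal O(L^{-1}),
\]
which follows from the layer-potential / Green's-formula argument in \cite{LiLiuShangSun} for well-separated PEC components.

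Next I would substitute this expansion into the numerator of $I_h^1(z)$. When one expands the inner product $\langle A(\hat x;\Omega^{(r)}),e^{ik(d-\hat x)\cdot z}A(\hat x;\widetilde\Sigma_1)\rangle_{T^2(\mathbb S^2)}$, the cross-terms involve oscillatory factors $e^{ik(d-\hat x)\cdot(z_j-z)}$ and, by the Riemann--Lebesgue / stationary-phase type estimate used in \cite{LiLiuShangSun}, all contributions with $j$ such that $z_j$ is far from $z$ are absorbed into a further $\mathcal O(1/L)$ error. Consequently, on $neigh(z_j)$ the indicator is to leading order
\[
I_h^1(z)=\frac{\bigl|\langle A(\hat x;\widehat G_j,d,p,k),e^{ik(d-\hat x)\cdot(z-z_j)}A(\hat x;\widetilde\Sigma_1,d,p,k)\rangle_{T^2(\mathbb S^2)}\bigr|}{\|A(\hat x;\widetilde\Sigma_1)\|^2_{T^2(\mathbb S^2)}}+\mathcal O(L^{-1}).
\]

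For part (i), $j\in J_0$, I have $\widehat G_j=\Pi_{\theta_j,\phi_j,\psi_j}\Sigma_{m_0}$ with Euler angles within $\mathcal O(h)$ of $(\theta_1^h,\phi_1^h,\psi_1^h)$, i.e.\ within $\mathcal O(h)$ of the Euler angles defining $\widetilde\Sigma_1$. Applying Proposition~\ref{prop:rot1} to both $A(\hat x;\widehat G_j)$ and $A(\hat x;\widetilde\Sigma_1)$ rewrites each as a rotated pull-back of $A(\,\cdot\,;\Sigma_{m_0},\,\cdot\,,\,\cdot\,,k)$, so the required quantity is an inner product of two such objects whose rotation matrices differ by $\mathcal O(h)$. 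A Lipschitz continuity estimate for the far-field pattern with respect to the rotation of the obstacle (a standard perturbation-of-the-boundary result for PEC Maxwell scattering, which I would invoke/verify) then gives
\[
A(\hat x;\widehat G_j,d,p,k)=A(\hat x;\widetilde\Sigma_1,d,p,k)+\mathcal O(h)\quad\text{in }T^2(\mathbb S^2).
\]
Plugging this in and using $z\in neigh(z_j)$ (so that $e^{ik(d-\hat x)\cdot(z-z_j)}\approx 1$ up to $\mathcal O(|z-z_j|)$) yields $I_h^1(z)=1+\mathcal O(L^{-1}+h)$ and identifies $z_j$ as a local minimizer of $\widetilde I_h^1$, giving \eqref{eq:main232}. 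For part (ii), $j\notin J_0$ means either $G_j\neq\Sigma_{m_0}$ or the Euler angles are separated from $(\theta_1^h,\phi_1^h,\psi_1^h)$ by a positive distance. In either case $\widehat G_j\neq\widetilde\Sigma_1$ as sets, so the uniqueness assumption \eqref{eq:uniqueness} together with the extended relation \eqref{eq:rerell} (enforced on the enlarged list $\widetilde{\mathscr S}_h$) implies $A(\hat x;\widehat G_j)\not\equiv A(\hat x;\widetilde\Sigma_1)$, and a strict Cauchy--Schwarz inequality then furnishes an $h$-independent $\epsilon_0>0$ with $I_h^1(z)\leq 1-\epsilon_0$ on $neigh(z_j)$, which is \eqref{eq:main233}.

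I expect the main obstacle to be the quantitative continuous dependence step, namely making the error $\mathcal O(h)$ in the Lipschitz bound $\|A(\hat x;\widehat G_j)-A(\hat x;\widetilde\Sigma_1)\|_{T^2(\mathbb S^2)}=\mathcal O(h)$ precise, since this requires a boundary-perturbation estimate for the exterior Maxwell problem uniform in the class $\widetilde{\mathscr S}_h$. A subtlety lurking inside part (ii) is that in the subcase $G_j=\Sigma_{m_0}$ with the orientation far from $(\theta_1^h,\phi_1^h,\psi_1^h)$, one must rule out accidental rotational symmetries of $\Sigma_{m_0}$ that would make distinct Euler angles produce the same set $\widehat G_j$; this is tacitly absorbed into the generic uniqueness hypothesis \eqref{eq:rerell} now imposed on $\widetilde{\mathscr S}_h$ rather than on the original $\mathscr S$.
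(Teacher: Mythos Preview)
Your proposal is correct and follows essentially the same strategy as the paper. The only organizational difference is that the paper packages the argument more modularly: it introduces an auxiliary indicator
\[
H_r^1(z)=\frac{\bigl|\langle A(\hat x;\Omega^{(r)}),\,e^{ik(d-\hat x)\cdot z}A(\hat x;\widetilde\Gamma_1)\rangle_{T^2(\mathbb S^2)}\bigr|}{\|A(\hat x;\widetilde\Gamma_1)\|^2_{T^2(\mathbb S^2)}},\qquad \widetilde\Gamma_1:=\Pi_{\theta_{j_0},\phi_{j_0},\psi_{j_0}}\Sigma_{m_0},
\]
built with the \emph{exact} rotation of the $j_0$-th component, so that Theorem~\ref{thm:main2} applies to $H_r^1$ verbatim as a black box (yielding the $\mathcal O(1/L)$ and $\epsilon_0$ behaviors). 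It then uses Proposition~\ref{prop:rot1} and the same Lipschitz-in-angle estimate you single out, $\|A(\hat x;\widetilde\Gamma_1)-A(\hat x;\widetilde\Sigma_1)\|_{T^2(\mathbb S^2)}=\mathcal O(h)$, to conclude $|I_h^1(z)-H_r^1(z)|=\mathcal O(h)$ uniformly on each $neigh(z_j)$. You instead unpack the multi-component splitting and oscillatory cross-term estimates explicitly inside $I_h^1$ and insert the $\mathcal O(h)$ perturbation at the level of the inner product; the ingredients and the key estimate are identical, and the paper's route simply avoids re-deriving the content of Theorem~\ref{thm:main2}.
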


\begin{proof}
Let
\[
\widetilde{\Gamma}_1:=\Pi_{\theta_{j_0},\phi_{j_0},\psi_{j_0}}\Sigma_{m_0},
\]
and
\[
H^1_r(z)=\frac{\bigg| \langle A(\hat x;\Omega^{(r)}), e^{ik(d-\hat x)\cdot z} A(\hat x; \widetilde\Gamma_1)  \rangle_{T^2(\mathbb{S}^2)}  \bigg|}{\| A(\hat x; \widetilde\Gamma_1) \|^2_{T^2(\mathbb{S}^2)}},\quad z\in\mathbb{R}^3.
\]
By a completely similar argument to the proof of Theorem~2.1 in \cite{LiLiuShangSun}, one can show that $H_r^1(z)$ possesses the two indicating behaviors given in \eqref{eq:further ind 1} and \eqref{eq:further ind 2}. Next, by Proposition~\ref{prop:rot1}, we have
\begin{equation}\label{eq:pA}
A(\hat{x}; \widetilde\Gamma_1)=A(\hat{x}; \Pi_{\theta_0,\phi_0,\psi_0}\Sigma_{m_0})=U_0A(U_0^T\hat{x}; \Sigma_{m_0}, U_0^Tp, U_0^Td, k),
\end{equation}
and
\begin{equation}\label{eq:pB}
 A(\hat{x}; \widetilde\Sigma_1)=A(\hat{x}; \Pi_{\theta_1^h,\phi_1^h,\psi_1^h}\Sigma_{m_0})=U_hA(U_h^T\hat{x}; \Sigma_{m_0}, U_h^Tp, U_h^Td, k),
\end{equation}
where $U_0$ and $U_h$ are the rotation matrices corresponding to $\Pi_{\theta_0,\phi_0,\psi_0}$ and $\Pi_{\theta_1^h,\phi_1^h,\psi_1^h}$, respectively. By the second assumption in \eqref{eq:cond1}, it is straightforward to show that
\begin{equation}\label{eq:pp3}
\|A(\hat{x}; \widetilde\Gamma_1)-A(\hat{x};\widetilde\Sigma_1)\|_{T^2(\mathbb{S}^2)}=\mathcal{O}(h).
\end{equation}
Finally, by \eqref{eq:pp3}, one has by direct verification that
\begin{equation}\label{eq:pp4}
|I_h^1(z)-H_r^1(z)|=\mathcal{O}(h),\quad z\in neigh(z_j),\quad j=1,2,\ldots, l_r.
\end{equation}
It is remarked that the estimate in \eqref{eq:pp4} is independent of $neigh(z_j)$, $j=1,\ldots, l_r$.
By \eqref{eq:pp4} and the indicating behaviors of $H_r^1(z)$, one immediately has \eqref{eq:main232} and \eqref{eq:main233}.
\end{proof}

Based on Theorem~\ref{thm:main23}, we propose the following enhanced locating scheme for locating the multiple components  of $\Omega^{(r)}$ in \eqref{eq:rot2}.

\medskip

\hrule

\medskip

\noindent {\bf Algorithm: Locating Scheme AR}
\medskip

\hrule
\medskip

This scheme is the same as Scheme R in Section~\ref{sect:2} with steps~3), 5), 7), respectively modified as

\smallskip

\noindent 3)~Augment the reference space $\mathscr{S}$ to be $\widetilde{\mathscr{S}}_h$ in \eqref{eq:discrete aug1}, and reorder the elements in $\widetilde{\mathscr{S}}_h$ such that
\begin{equation}\label{eq:agu 1assumption2}
\| A(\hat{x};\widetilde{\Sigma}_{j}) \|_{T^2(\mathbb{S}^2)}\geq \| A(\hat x; \widetilde{\Sigma}_{j+1}) \|_{T^2(\mathbb{S}^2)},\quad j=1,2,\ldots,\widetilde{l}_h-1.
\end{equation}

\smallskip

\noindent 5) Replace $I_r^j(z)$ by $I_h^j(z)$.

\smallskip

\noindent 7) Trim all those $z+\widetilde\Sigma_j$ found in Step 6) from $\mathcal{T}_h$.

\medskip
\hrule

\medskip
\medskip

\begin{rem}\label{rem:ar1}
We remark that in Scheme AR, if certain {a priori} information is available about the possible range of the orientations of the scatterer components, it is sufficient for the augmented reference space $\widetilde{\mathscr{S}}_h$ to cover that range only. Clearly, Scheme AR can not only locate the multiple components of $\Omega^{(r)}$ in \eqref{eq:rot2}, but can also recover the orientation of each scatterer component.
\end{rem}

\begin{rem}\label{rem:MediumObstacle2}
Similar to Remark~\ref{rem:MediumObstacle}, our Scheme AR can be extended to include inhomogeneous medium components as long as the relation \eqref{eq:rerell} holds for the reference scatterers in $\widetilde{\mathscr{S}}_h$. Indeed, in our numerical experiments in Section~\ref{sect:5}, we consider the case that the reference scatterers are composed of two inhomogeneous mediums, $(\Sigma_j; \varepsilon_j, \mu_j, \sigma_j),\ j=1,2$ with $\varepsilon_j, \mu_j$ and $\sigma_j$ all constants that are known in advance. For this case, we would like to remark that by following the same argument, Proposition~\ref{prop:rot1} remains the same, which in turn guarantees that Theorem~\ref{thm:main23} remains the same as well. Furthermore, we would like to emphasize that Scheme AR could be straightforwardly extended to work in a much more general setting where there might be both inhomogeneous medium components with variable contents and PEC obstacles presented in the reference space, as long as the generic relation \eqref{eq:rerell} is satisfied.  
\end{rem}

In an analogous manner, for a scatterer described in \eqref{eq:regular scatterer 4}, Scheme AR can be modified that the reference space is augmented by the sizes of 
components to be
\begin{equation}\label{eq:discrete aug2}
\widetilde{\mathscr{S}}_h:=\{\widetilde{\Sigma}_j\}_{j=1}^{\widetilde{l}_{h,m}}
=\cup_{h,m}\{\Pi_{\theta^h,\phi^h,\psi^h}\Lambda_{\tau^m}\Sigma_j\}_{j=1}^{l'},
\end{equation}
where $\tau^m$ is an equal distribution of an interval $[s_1, s_2]$ with its cardinality $N_k$, or some other discrete distribution depending on the availability of certain {a priori} information of relative sizes, and ${\widetilde{l}_{h,m}}=l'\times N_h\times N_m$. Here, $s_1, s_2$ are positive numbers such that $[s_1,s_2]$ contains the scales/sizes of all the scatterer components. With such an augmented reference space, Scheme AR can be used to locate the multiple components and also recover both  orientations and relative sizes of the scatterer $\Omega^{(r)}$ in \eqref{eq:regular scatterer 4}.

\section{Locating multiple multi-scale scatterers}\label{sect:4}

In this section, we shall consider locating a multi-scale scatterer $\Omega^{(m)}$ as described in \eqref{eq:multiscale scatterer} with multiple components. In addition to the requirements imposed on the small component $\Omega^{(s)}$ and the regular-size component $\Omega^{(r)}$ in Section~\ref{sect:2}, we shall further assume that
\begin{equation}\label{eq:dist multiscale}
L_m:=\text{dist}(\Omega^{(s)}, \Omega^{(r)})\gg 1.
\end{equation}
By Lemmas~3.1 and 3.2 in \cite{LiLiuShangSun}, one has, respectively,
\begin{equation}\label{eq:mm1}
A(\hat{x}; \Omega^{(m)}, k)=A(\hat{x}; \Omega^{(s)}, k)+A(\hat{x}; \Omega^{(r)}, k)+\mathcal{O}\left(L_m^{-1}\right)\,,
\end{equation}

\begin{equation}\label{eq:mm2}
A(\hat{x}; \Omega^{(s)}, k)=\mathcal{O}((k\rho)^3).
\end{equation}
That is, if $k\sim 1$, in the far-field pattern $A(\hat{x}; \Omega^{(m)})$, the scattering information from the regular-size component $\Omega^{(r)}$ is dominant and the scattering contribution from the small component $\Omega^{(s)}$ can be taken as small perturbation. Hence, a primitive way to locate the components of $\Omega^{(m)}$ can be proceeded in two stages as follows. First, using the single far-field pattern $A(\hat{x}; \Omega^{(m)})$ as the measurement data, one utilizes Scheme AR to locate the components of the regular-size scatterer $\Omega^{(r)}$. After the recovery of the regular-size scatterer $\Omega^{(r)}$, the far-field pattern from $\Omega^{(r)}$, namely $A(\hat{x}; \Omega^{(r)})$ becomes known. By subtracting $A(\hat{x};\Omega^{(r)})$ from $A(\hat{x};\Omega^{(m)})$, one then has $A(\hat{x}; \Omega^{(s)})$ (approximately). Finally, by applying Scheme S with the far-field data $A(\hat{x};\Omega^{(s)})$, one can then locate the components of $\Omega^{(s)}$. However, if the size contrast between $\Omega^{(r)}$ and $\Omega^{(s)}$ is too big, the scattering information of $\Omega^{(s)}$ will be hidden in the
noisy far-field data of $\Omega^{(r)}$. Hence, in order for the above two-stage scheme to work in locating $\Omega^{(m)}$, the size contrast between $\Omega^{(s)}$ and $\Omega^{(r)}$ cannot be excessively big. But if it is this case, the scattering effect from $\Omega^{(s)}$ would be a significant constituent part to $A(\hat{x};\Omega^{(m)})$, and this will deteriorate the recovery in the first stage and then the second-stage recovery will be deteriorated consequently as well. In order to overcome such a dilemma for this multi-scale locating, we shall develop a subtle local re-sampling technique.

\medskip

\hrule

\medskip

\noindent {\bf Algorithm: Locating Scheme M}
\medskip

\hrule

\begin{enumerate}[1)]

\item Collect a single far-field measurements $A(\hat{x}; \Omega^{(m)}, k)$ corresponding to the multi-scale scatterers $\Omega^{(m)}$.

\item Select a sampling region with a mesh $\mathcal{T}_h$ containing $\Omega^{(m)}$.

\item Suppose that
\[
\Omega^{(r)}=\bigcup_{j=1}^{l_r} ( z_{j}+\widetilde{\Sigma}_j ),\quad \widetilde{\Sigma}_j\in \widetilde{\mathscr{S}}_h,
\]
as described in \eqref{eq:regular scatterer 4} of Section~\ref{sect:3}.
Using $A(\hat{x}; \Omega^{(m)}, k)$ as the measurement data, one locates the rough locations  $\widetilde{z}_j\in\mathcal{T}_h$, $j=1,2,\ldots, l_r$, shapes 
and orientations of each scatterer component following Scheme AR. Here $\widetilde{z}_j$, $j=1,2,\ldots, l_r$, are the approximate position points to the exact ones $z_j$, $j=1,2,\ldots, l_r$.

\item Apply the \emph{local re-sampling technique} following the next sub-steps to update $\widetilde{z}_j$'s and to locate the components of the small-size scatterer $\Omega^{(s)}$.

\begin{enumerate}
\item[a)] For each point $\widetilde{z}_j$ found in Step~3), one generates a finer local mesh $\mathcal{Q}_{h'}(\widetilde{z}_j)$ around $\widetilde{z}_j$.

\item[b)] For one set of sampling points, $\hat{z}_j\in\mathcal{Q}_{h'}(\widetilde{z}_j)$, $j=1,2,\ldots, l_r$, one calculates
\begin{equation}\label{eq:re1}
\widetilde{A}(\hat{x}; k)
=A(\hat{x}; \Omega^{(m)}, k)-\sum_{j=1}^{l_r} e^{ik(d-\hat{x})\cdot \hat{z}_j} A(\hat{x}; \widetilde{\Sigma}_j, k).
\end{equation}

\item[c)] Using $\widetilde{A}(\hat{x}; k)$ in Step b) as the measurement data, one applies Scheme S to locate the significant local maximum points on $\mathcal{T}_h\backslash\cup_{j=1}^{l_r} \mathcal{Q}_{h'}(\widetilde{z}_j)$ of the corresponding indicator function.

\item[d)] Repeat Steps b) and c) by all the possible sets of sampling points from $\mathcal{Q}_{h'}(\widetilde{z}_j)$, $j=1,2,\ldots l_r$. The clustered local maximum points on $\mathcal{T}_h\backslash\cup_{j=1}^{l_r} \mathcal{Q}_{h'}(\widetilde{z}_j)$ are the positions corresponding to the scatterer components of $\Omega^{(s)}$.

\item[e)] One updates the $\widetilde{z}_j$'s to be those sampling points $\hat{z}_j$'s which generate the clustered local maximum points in Step d).

\end{enumerate}

\end{enumerate}

\hrule

\medskip
\medskip

We note that in \eqref{eq:re1}, if the re-sampling points $\hat{z}_j$'s are the exact position points, namely $\hat{z}_j=z_j$, $j=1,2,\ldots, l_r$, then
\[
\sum_{j=1}^{l_r} e^{ik(d-\hat{x})\cdot \hat{z}_j} A(\hat{x}; \widetilde{\Sigma}_j, k)=A(\hat{x}; \Omega^{(r)}, k).
\]
This, together with \eqref{eq:mm1}, implies that $\widetilde{A}(\hat{x}; k)$ calculated according to \eqref{eq:re1} is an approximation to $A(\hat{x}; \Omega^{(s)}, k)$.

Next, we propose an enhanced Scheme M by making use of two far-field measurements which could provide a more robust and accurate locating of the multi-scale scatterers $\Omega^{(m)}$. Indeed, we assume that in $\Omega^{(m)}$, the diameters of the multiple components of $\Omega^{(r)}$ are around $d_1$, whereas the diameters of the multiple components of $\Omega^{(s)}$ are around $d_2$ such that $d_1/d_2$ is relatively large. We choose two wave numbers $k_1$ and $k_2$ such that for $\lambda_1=2\pi/k_1$ and $\lambda_2=2\pi/k_2$, $\lambda_1>d_1$ with $\lambda_1\sim d_1$, and $d_2<\lambda_2<d_1$ with $\lambda_2/d_2$ relatively large. Then, in $A(\hat{x}; \Omega^{(m)}, k_1)$, according to \eqref{eq:mm1} and \eqref{eq:mm2}, $A(\hat{x}; \Omega^{(r)}, k_1)$ is more significant and this will enable Scheme AR to have a more accurate locating of $\Omega^{(r)}$. On the other hand, according to \eqref{eq:mm2}, $A(\hat{x}; \Omega^{(m)}, k_2)$ clearly carries more scattering information of $\Omega^{(s)}$ than that in $A(\hat{x};\Omega^{(m)}, k_1)$. Hence, after the locating of $\Omega^{(r)}$ by using $A(\hat{x}; \Omega^{(m)}, k_1)$, one can use $A(\hat{x}; \Omega^{(m)}, k_2)$ as the measurement data for the second stage in Scheme M to yield a more accurate reconstruction of $\Omega^{(s)}$. In summary, the enhanced Scheme M by making use of two far-field measurements can be formulated as follows.

\medskip

\hrule

\medskip

\noindent {\bf Algorithm:  Enhanced Locating Scheme M}
\medskip

\hrule

\begin{enumerate}[1)]

\item Collect two far-field measurements $A(\hat{x}; \Omega^{(m)}, k_1)$ and $A(\hat{x}; \Omega^{(m)}, k_2)$ corresponding to the multi-scale scatterer $\Omega^{(m)}$.

\item Use $A(\hat{x}; \Omega^{(m)}, k_1)$ as the measurement data for the first stage in Scheme M, namely Steps 2) and 3).

\item Use $A(\hat{x}; \Omega^{(m)}, k_2)$ as the measurement data for the second stage in Scheme M, namely Step 4).

\item Apply the local re-sampling technique following the next sub-steps of Step 4)  in  Scheme M to update $\widetilde{z}_j$'s and to locate the components of the small-size scatterer $\Omega^{(s)}$. Particularly, \eqref{eq:re1} is modified to be

\begin{equation}\label{eq:re1n}
\widetilde{A}(\hat{x}; k_2)
=A(\hat{x}; \Omega^{(m)}, k_2)-\sum_{j=1}^{l_r} e^{ik_2(d-\hat{x})\cdot \hat{z}_j} A(\hat{x}; \widetilde{\Sigma}_j, k_2).
\end{equation}

\end{enumerate}

\hrule

\medskip
\medskip

\section{Numerical experiments and discussions}\label{sect:5}

In this section, we present some numerical results to illustrate salient features of our new schemes using augmented far field data set  as well as its ability to image multiple multi-scale scatterers by the novel Scheme M with the local re-sampling technique.

Three geometries will be considered for the scatterer components in our numerical experiments. They are given by revolving bodies through rotating the following 2D shapes in the $x$-$y$ plane around the $x$-axis
\begin{eqnarray*}
\mathbf{Circle:} &\quad & \{ (x,y) : x=\cos(s), \  y=\sin(s), \  0\le s\le 2\pi \},\\
\mathbf{Peanut:} &\quad & \{ (x,y) : x=\sqrt{3 \cos^2 (s) + 1}\cos(s), \  y=\sqrt{3 \cos^2 (s) + 1}\sin(s), \  0\le s\le 2\pi \},\\
\mathbf{Kite:} &\quad & \{ (x,y) : x=\cos(s)+0.65\cos (2s)-0.65, \  y=1.5\sin(s), \  0\le s\le 2\pi \}.
\end{eqnarray*}
In the sequel, they are denoted by \textbf{B}, \textbf{P} and \textbf{K}, respectively, for short.
The candidate  data set $\widetilde{\mathscr{S}}_h$  includes  far field data of all three reference components \textbf{B}, \textbf{P} and \textbf{K}, and is further lexicographically augmented by  a collection of a priori known orientations and sizes. More precisely, the augmented data set is obtained by rotating \textbf{P} and \textbf{K} in the  $x$-$y$ plane every $\pi/4$ radian\footnote{There are only four different orientations for \textbf{P} due to its  symmetry.}  as shown in Figs.~\ref{fig:Scatterer-shape-peanut} and \ref{fig:Scatterer-shape-kite}, respectively,  and by scaling  \textbf{B}, \textbf{P} and \textbf{K} by one fifth, one half, one, twice and five times.
\begin{figure}[b]
\hfill{}\includegraphics[width=0.23\textwidth]{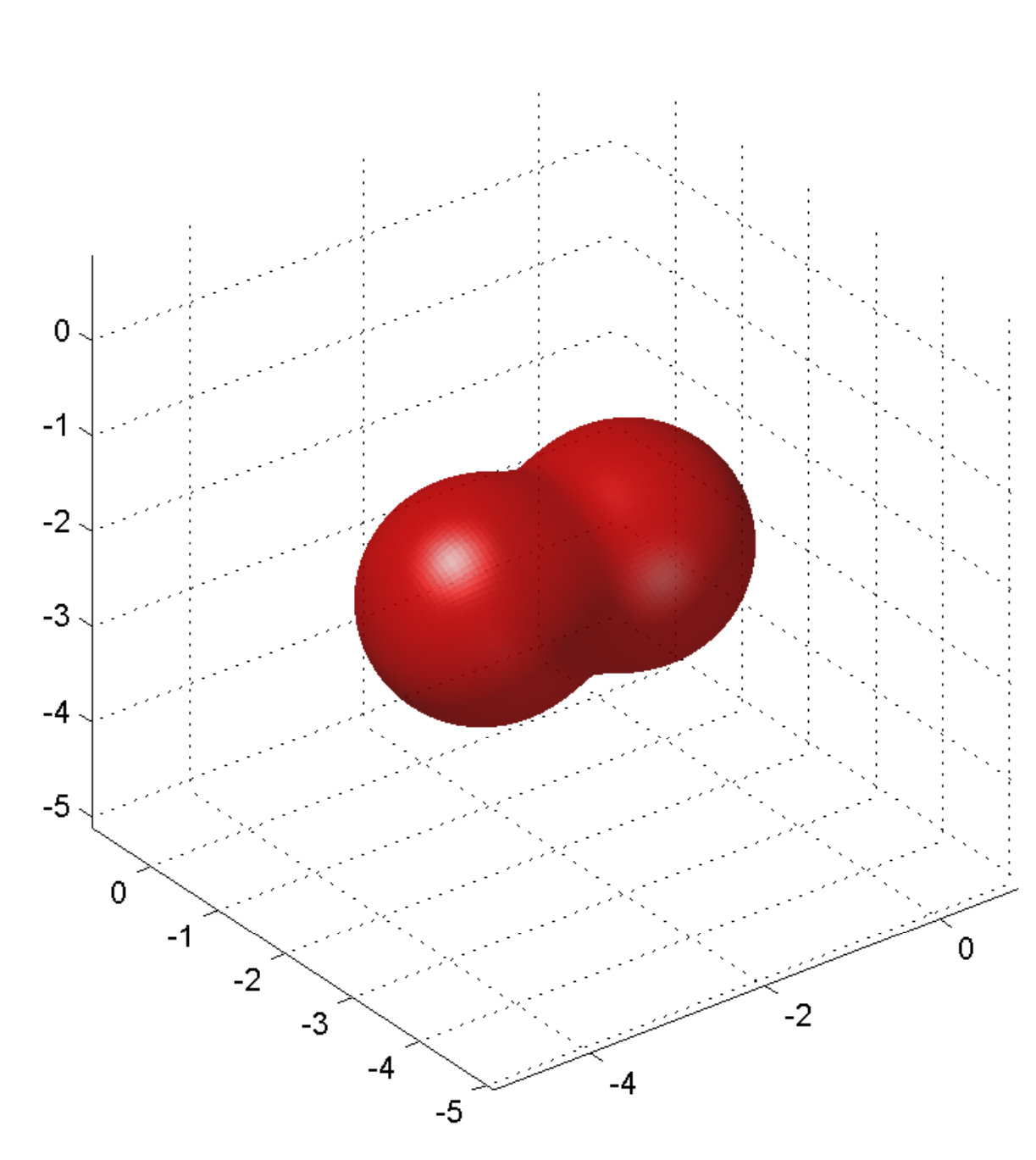}\hfill{}\includegraphics[width=0.23\textwidth]{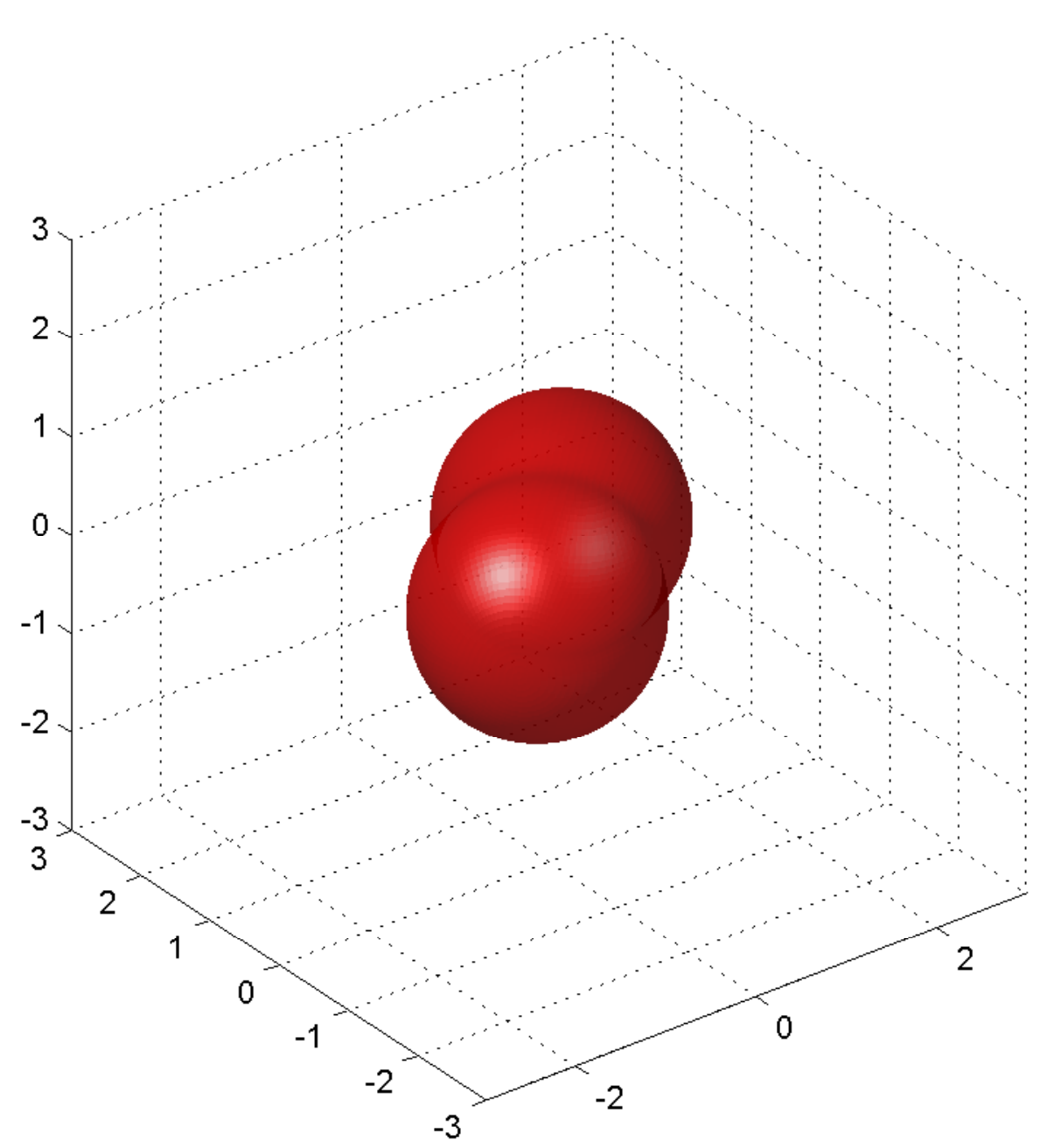}\hfill{}\includegraphics[width=0.23\textwidth]{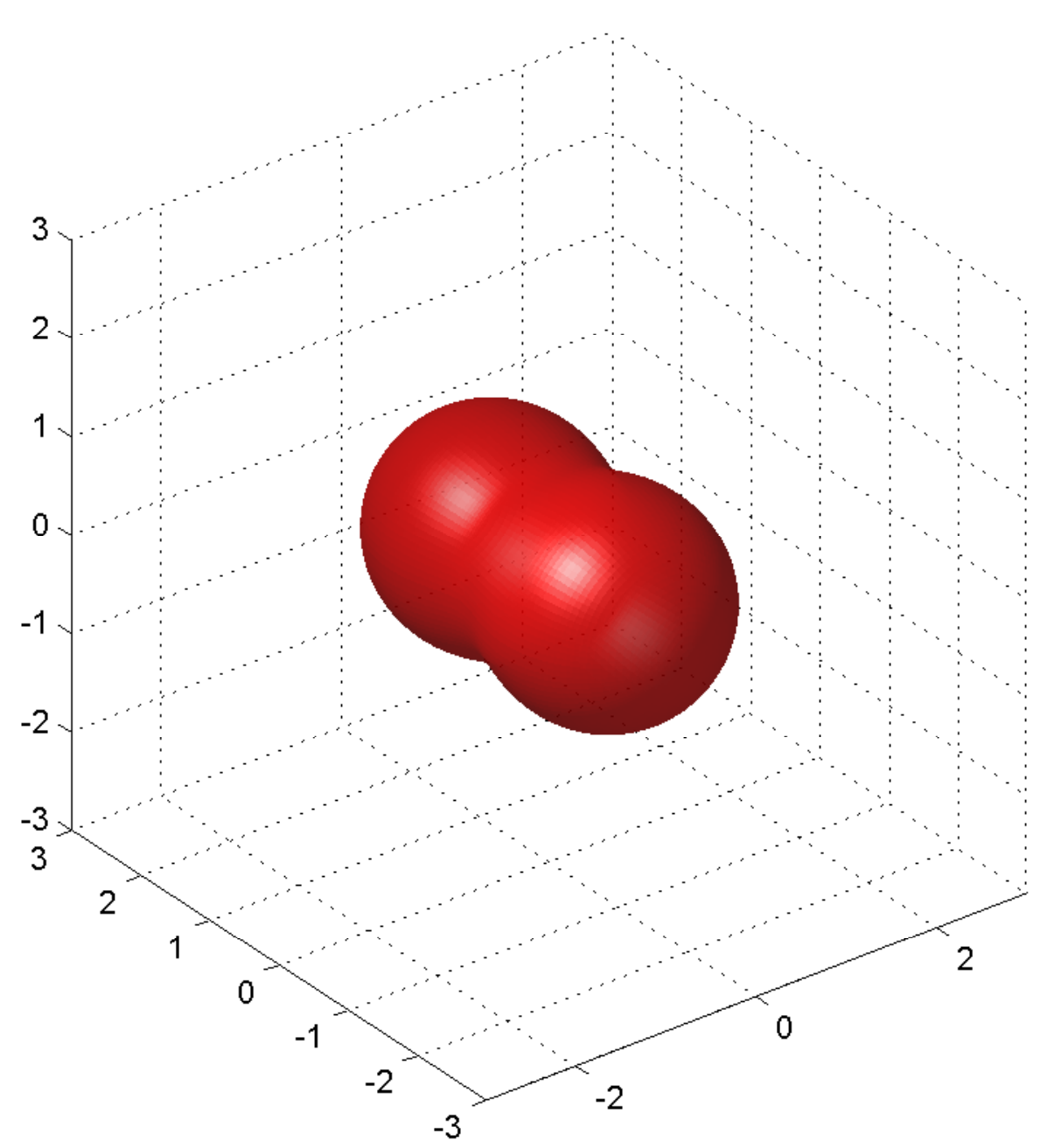}\hfill{}\includegraphics[width=0.23\textwidth]{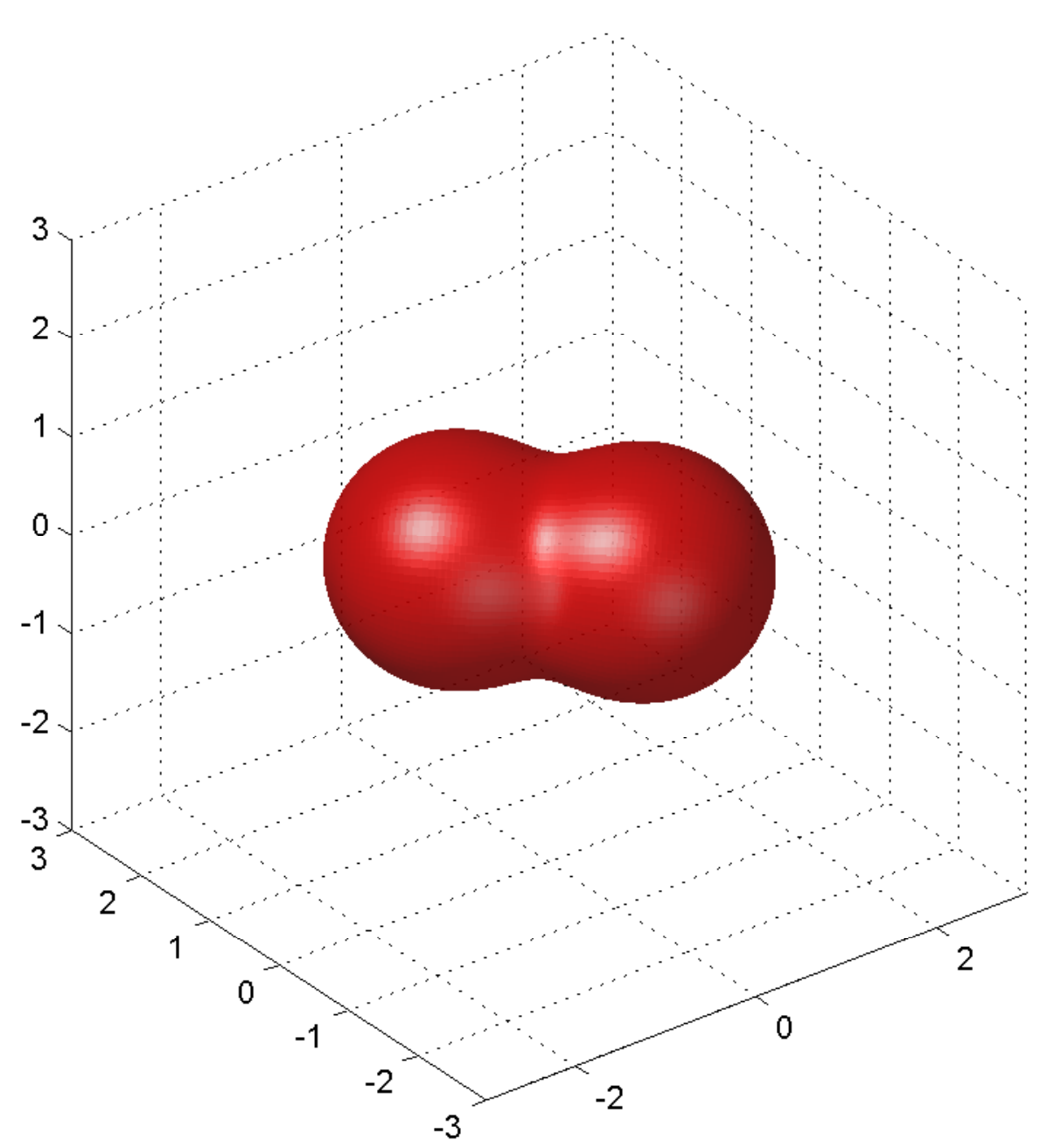}\hfill{}

\caption{\label{fig:Scatterer-shape-peanut}Scatterer component Peanut with four orientations.}
\end{figure}

\begin{figure}[pt]
\hfill{}\includegraphics[width=0.23\textwidth]{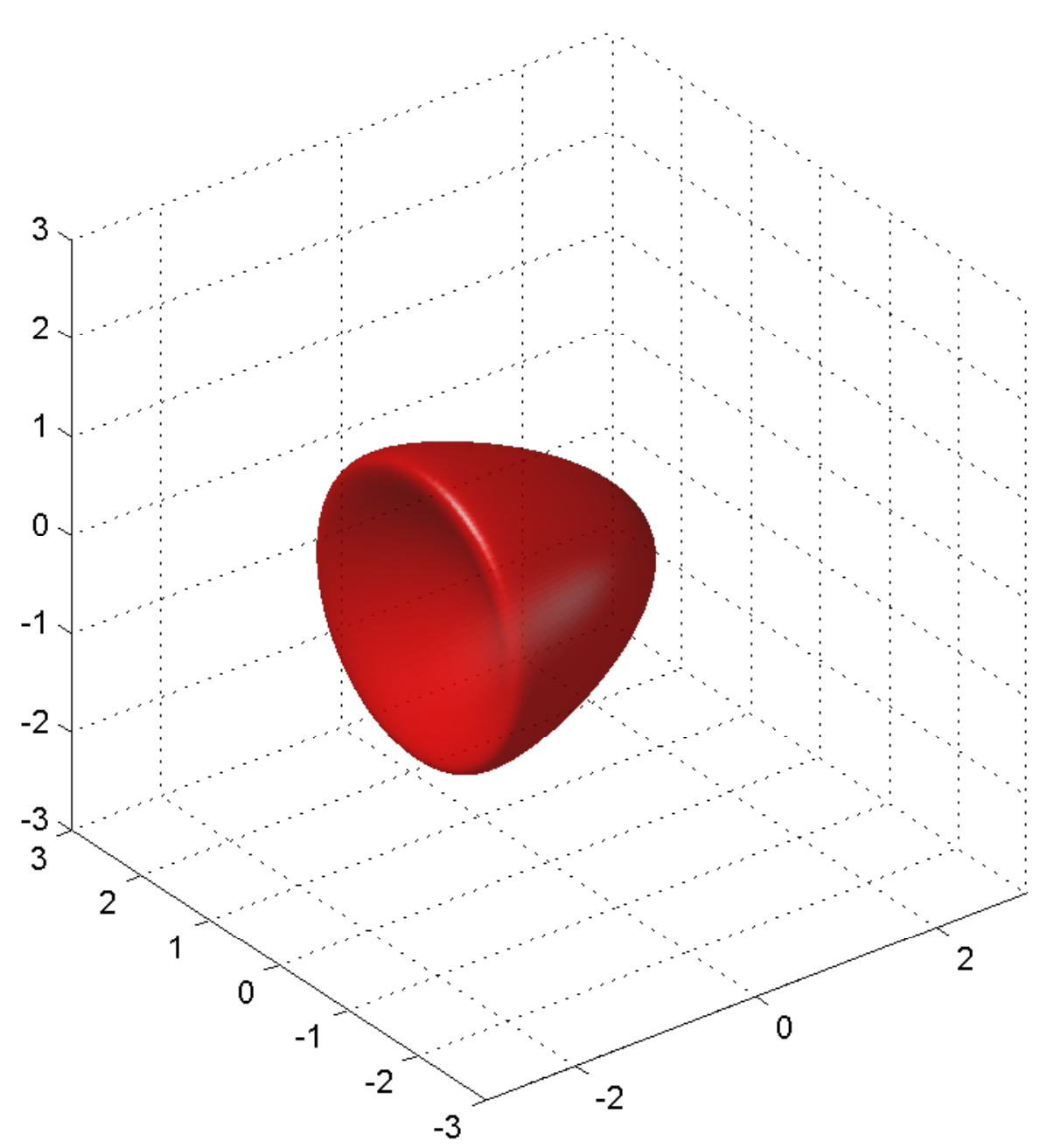}\hfill{}\includegraphics[width=0.23\textwidth]{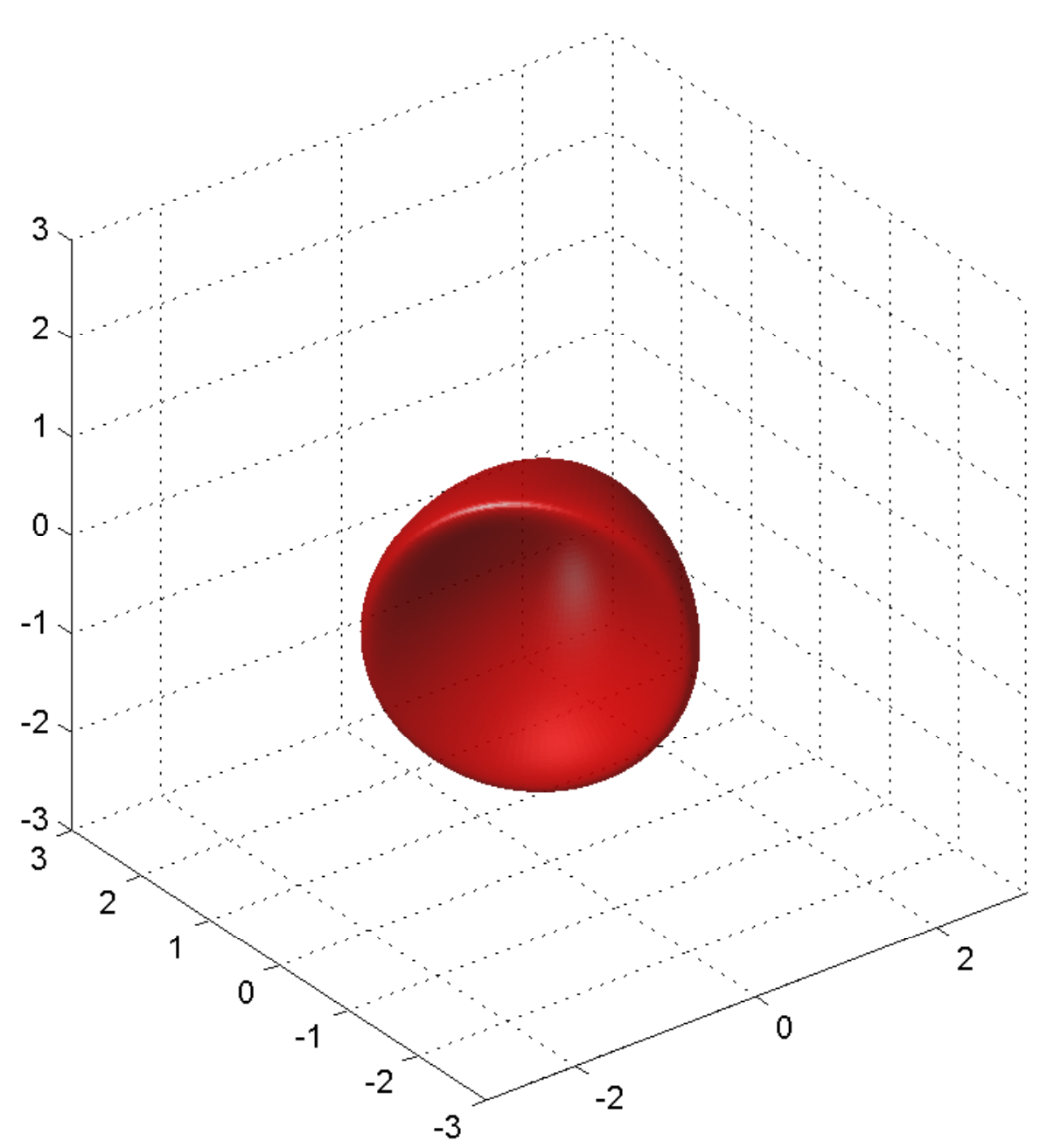}\hfill{}\includegraphics[width=0.23\textwidth]{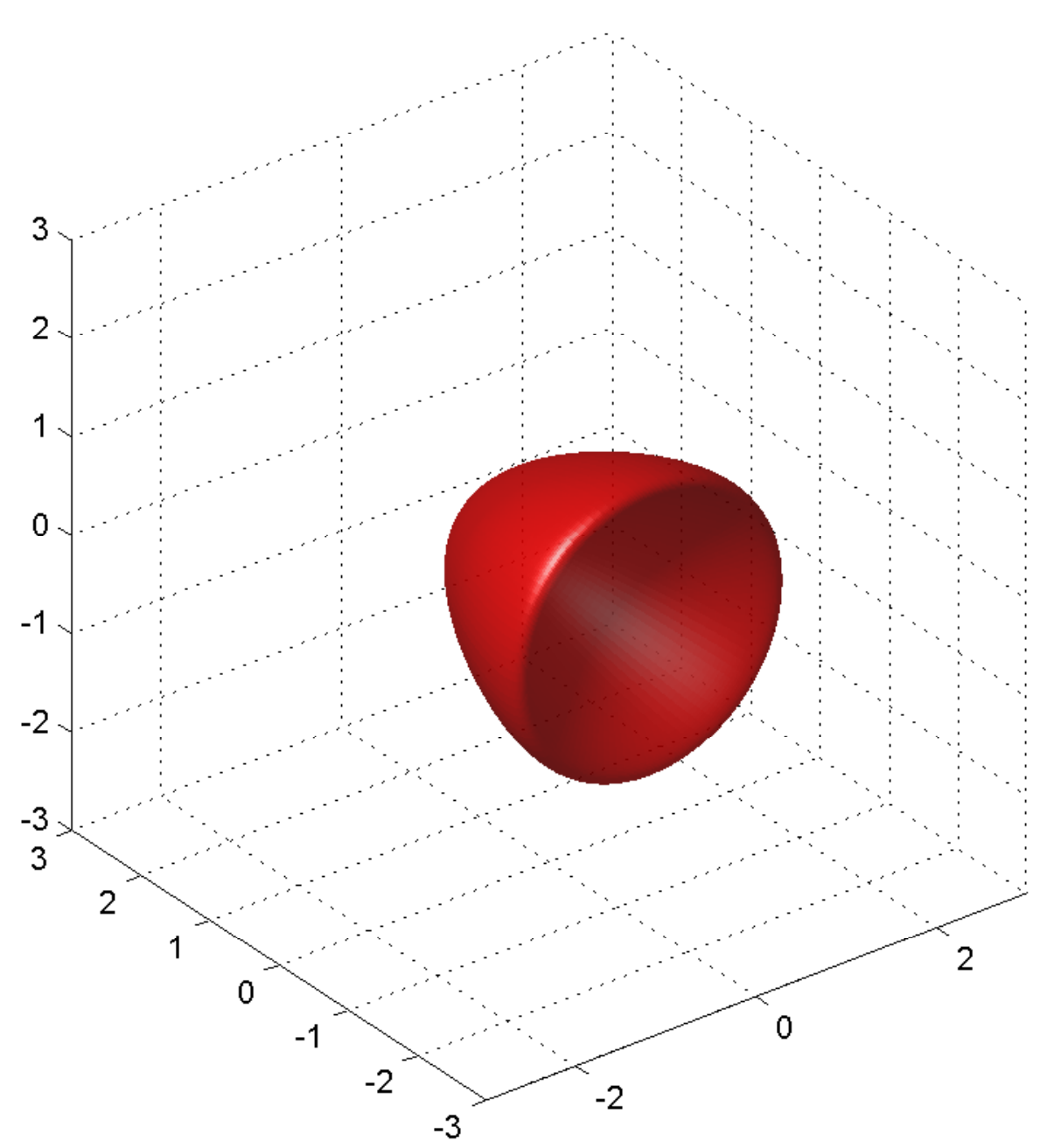}\hfill{}\includegraphics[width=0.23\textwidth]{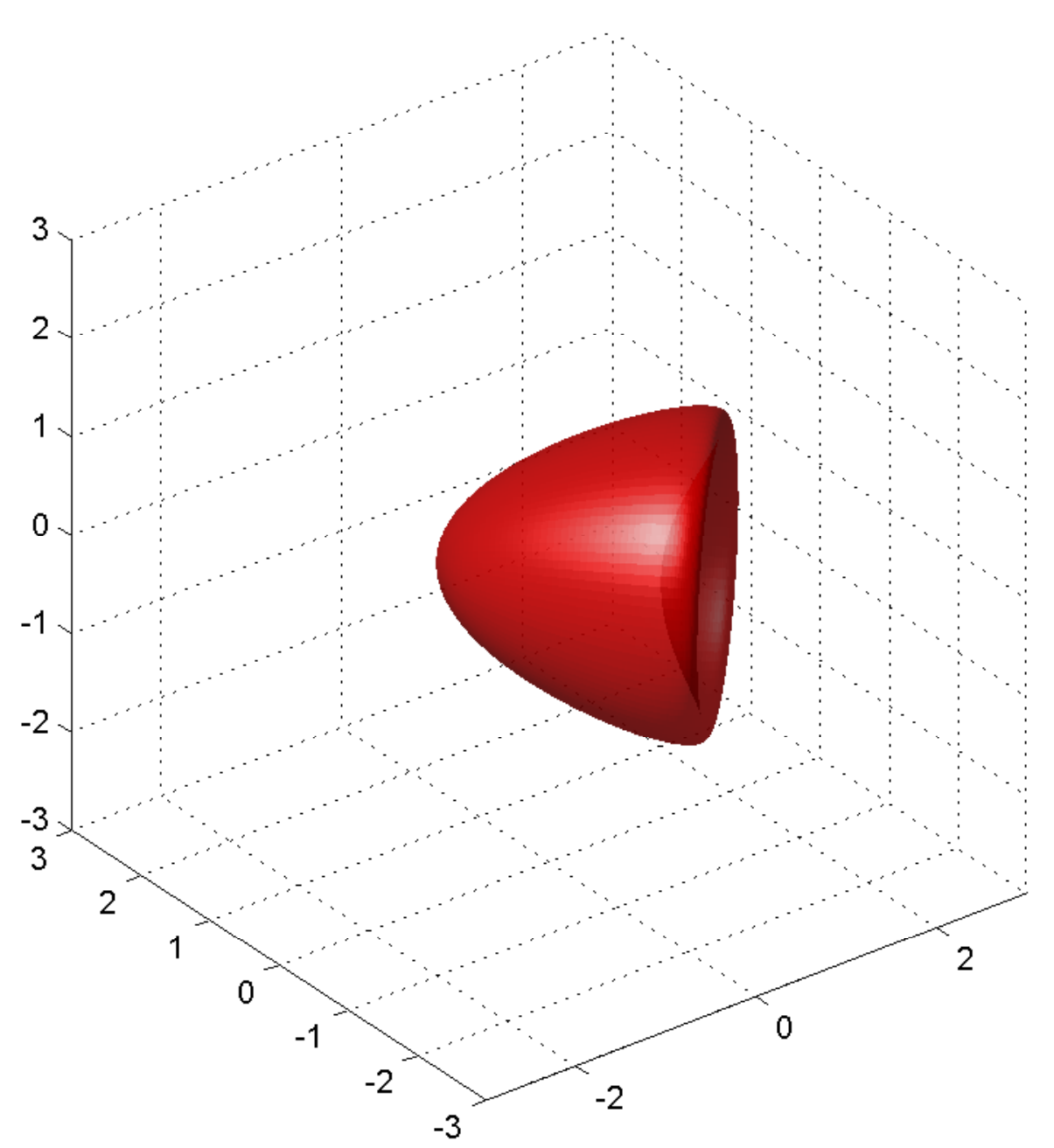}\hfill{}

\hfill{}\includegraphics[width=0.23\textwidth]{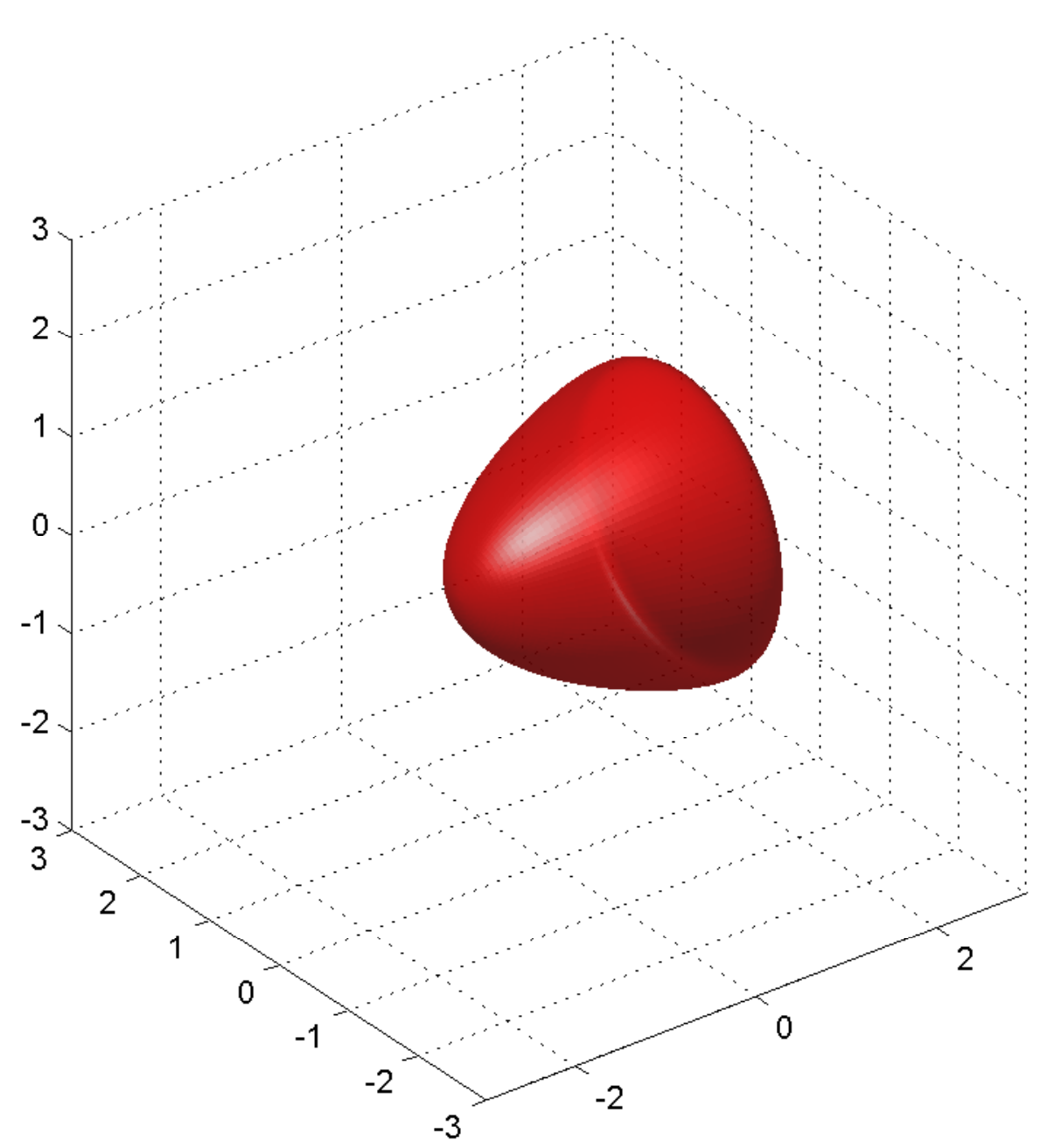}\hfill{}\includegraphics[width=0.23\textwidth]{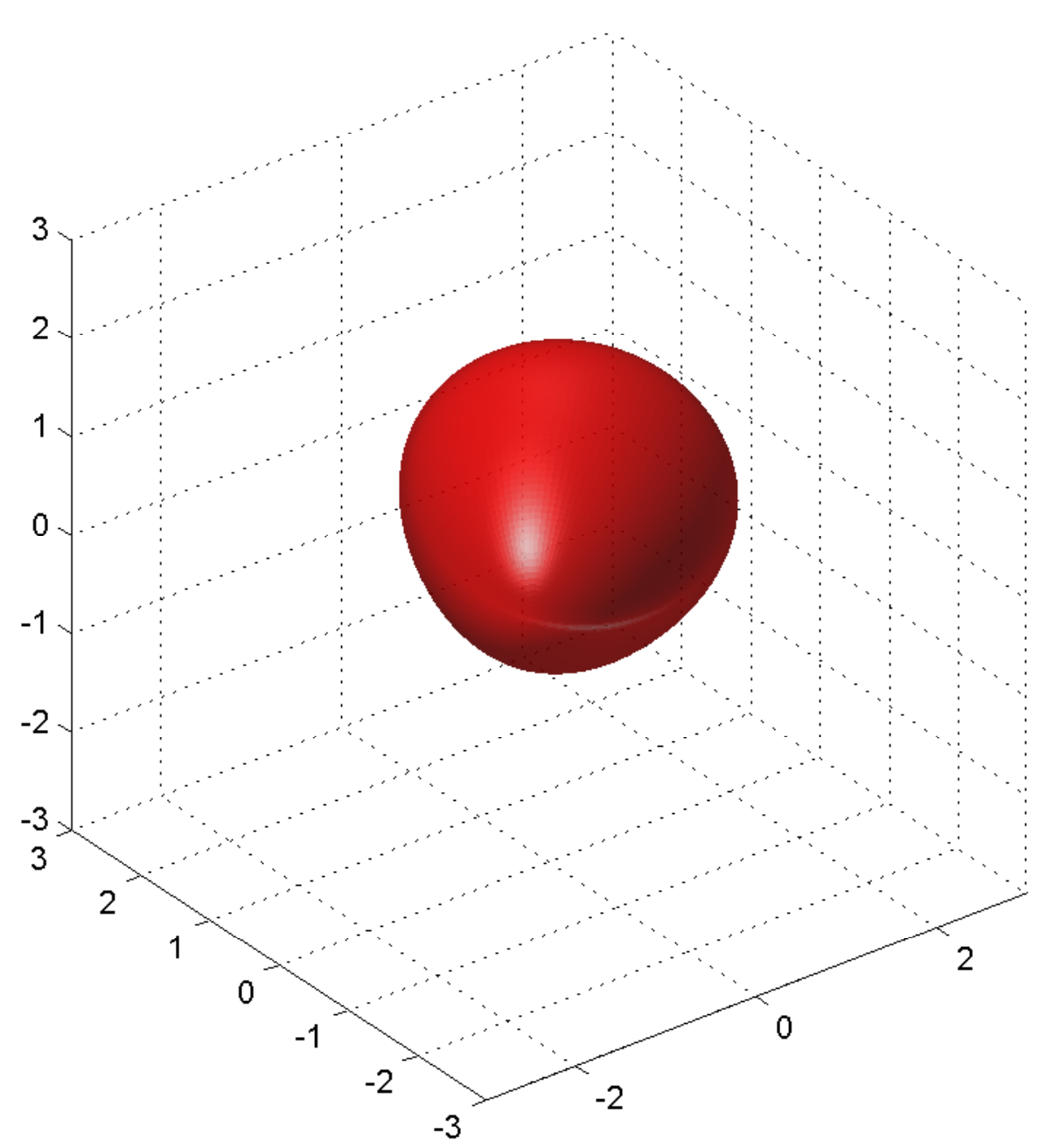}\hfill{}\includegraphics[width=0.23\textwidth]{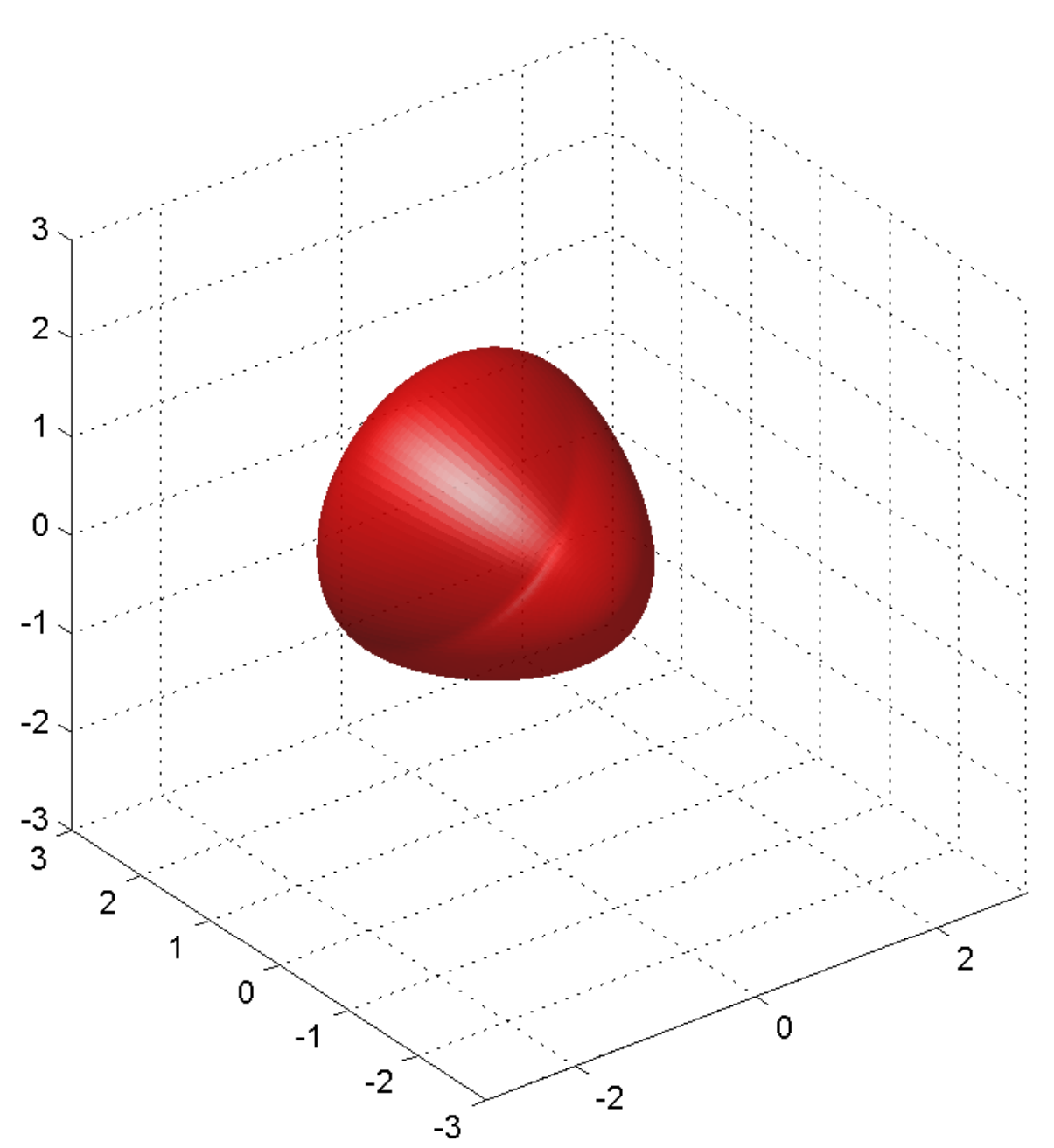}\hfill{}\includegraphics[width=0.23\textwidth]{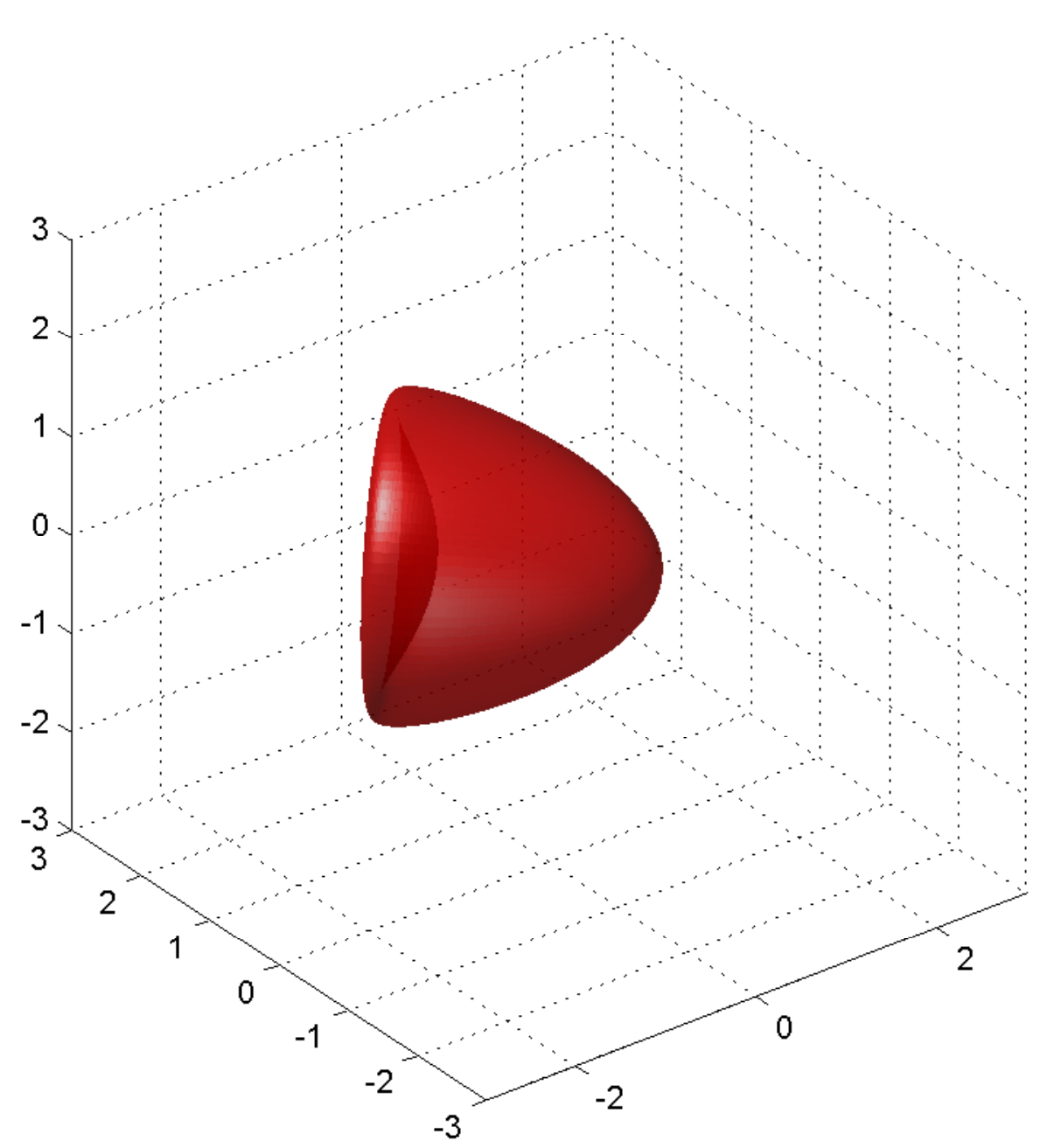}\hfill{}

\caption{\label{fig:Scatterer-shape-kite}Scatterer component Kite with eight orientations.}
\end{figure}

In  the examples below, as assumed earlier,
we set $\varepsilon_0=\mu_0=1$ and $\sigma_0=0$ outside
the scatterer, and hence the wavelength is unitary in the homogeneous background.  Unless otherwise specified, all
the scatterer components are  either PEC conductors or
inhomogeneous media with all other parameters the same as those in the homogeneous
background except $\varepsilon=4$. Our near-field data are  obtained by solving the Maxwell system \eqref{eq:Maxwell general} using the quadratic
$H(\mathrm{curl})$-conforming edge element discretization in a spherical domain centered at the origin and holding inside all the scatterer components.  The computational domain is enclosed
by a PML layer to damp the reflection. Local adaptive
refinement scheme within the inhomogeneous scatterer is adopted to
enhance the resolution of the scattered wave.  The far-field data
are approximated by the integral equation representation \cite[p.~181, Theorem~3.1]{PiS02}  using the spherical Lebedev quadrature (cf.\!\cite{Leb99}).
We refine the mesh successively
till the relative maximum error of successive groups of far-field
data is below $0.1\%$. The far-field patterns on the finest mesh
are used as the exact data. The electric far-field patterns $A(\hat{x},\Omega)$, $\Omega = \Omega^{(r)}$ or $\Omega^{(m)}$, are observed at 590
Lebedev quadrature points distributed on the unit sphere
$\mathbb{S}^{2}$ (cf.\!\cite{Leb99} and references therein). The exact far-field data
$A(\hat{x},\Omega)$ are corrupted point-wise by the formula
\begin{equation}
A_{\delta}(\hat{x},\Omega)=A(\hat{x},\Omega)+\delta\zeta_1\underset{\hat{x}}{\max}|A(\hat{x},\Omega)|\exp(i2\pi
\zeta_2)\,,
\end{equation}
where $\delta$ refers to the relative noise level, and both  $\zeta_1$ and $\zeta_2$ follow the
uniform distribution ranging from $-1$ to $1$. The values of the indicator functions have  been
normalized between $0$ and $1$ to highlight the positions
identified.

Some experimental settings are defined as follows. In our tests,
we shall always take the incident direction $d=(1,0,0)^T$ and the polarization $p=(0,0,1)^T$.
In all  our tests, the noise level is  $3\%$.
To improve the accuracy and robustness of imaging results using Scheme AR and Enhanced Scheme M,
we adopt two full augmented data sets associated with two detecting EM waves with two proper wave numbers,
which will be clearly specified later.

Two inverse scattering benchmark problems are considered here.
The first one \textbf{PK} is to image two regular-size scatterer components with  kite- and peanut-shape, respectively. In this case, we reconstruct the scatterer components with correct orientations and sizes by the augmented data set using Scheme AR. The second example \textbf{KB} is to image a combined  scatterer consisting of multiple multi-scale components, an enlarged kite of \textbf{K} by two times and a relatively small ball of \textbf{B} scaled to one half from the unit one.
The size ratio between the two components is about six.

\subsection{Scheme AR}

\paragraph*{Example PK.}

In this example, we try to locate with Scheme AR  a kite component \textbf{K}  located at $(2,2,2)$ with azimuthal angle $\pi/4$ radian,
and a peanut component \textbf{P}  located at $(-2,-2,-2)$ with azimuthal angle $3\pi/4$ radian as shown as in Fig.~\ref{fig:True-scatterer}(a)
and  its projection on the $x-y$, $y-z$ and $z-x$ planes shown in Fig \ref{fig:True-scatterer}(b)-(d), respectively.

 \begin{figure}[bp]
\hfill{}\includegraphics[width=0.23\textwidth]{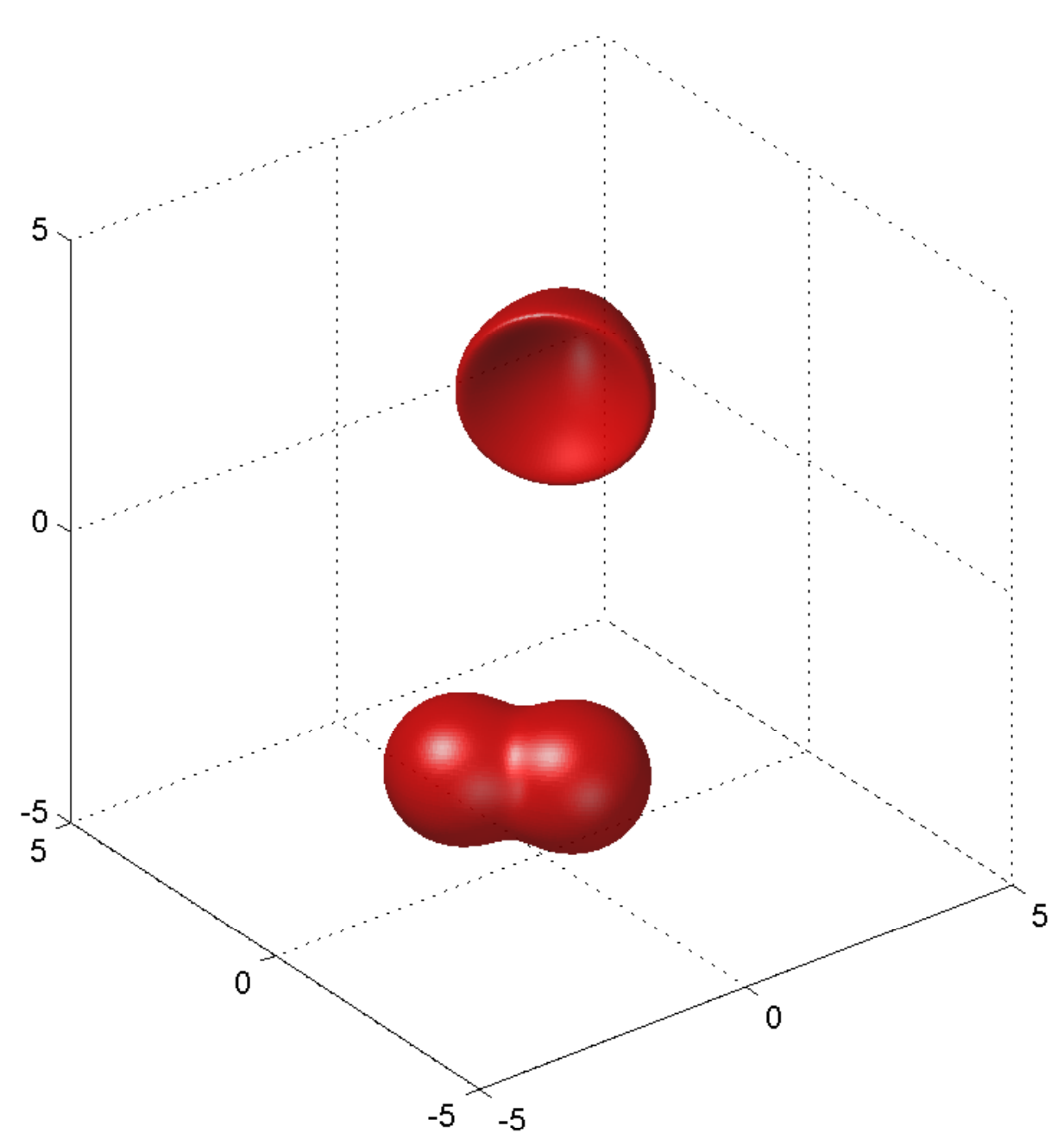}\hfill{}\includegraphics[width=0.23\textwidth]{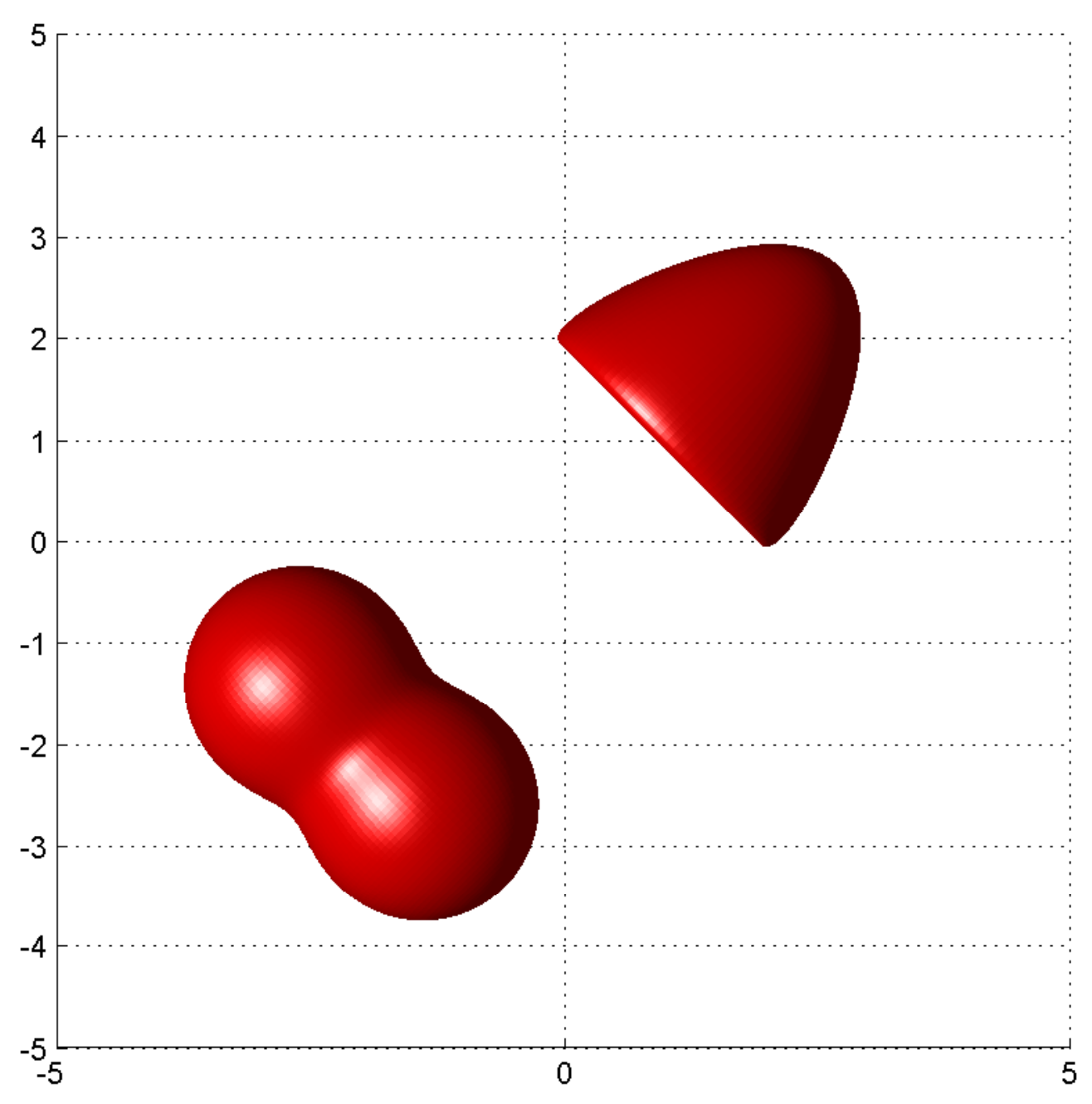}\hfill{}\includegraphics[width=0.23\textwidth]{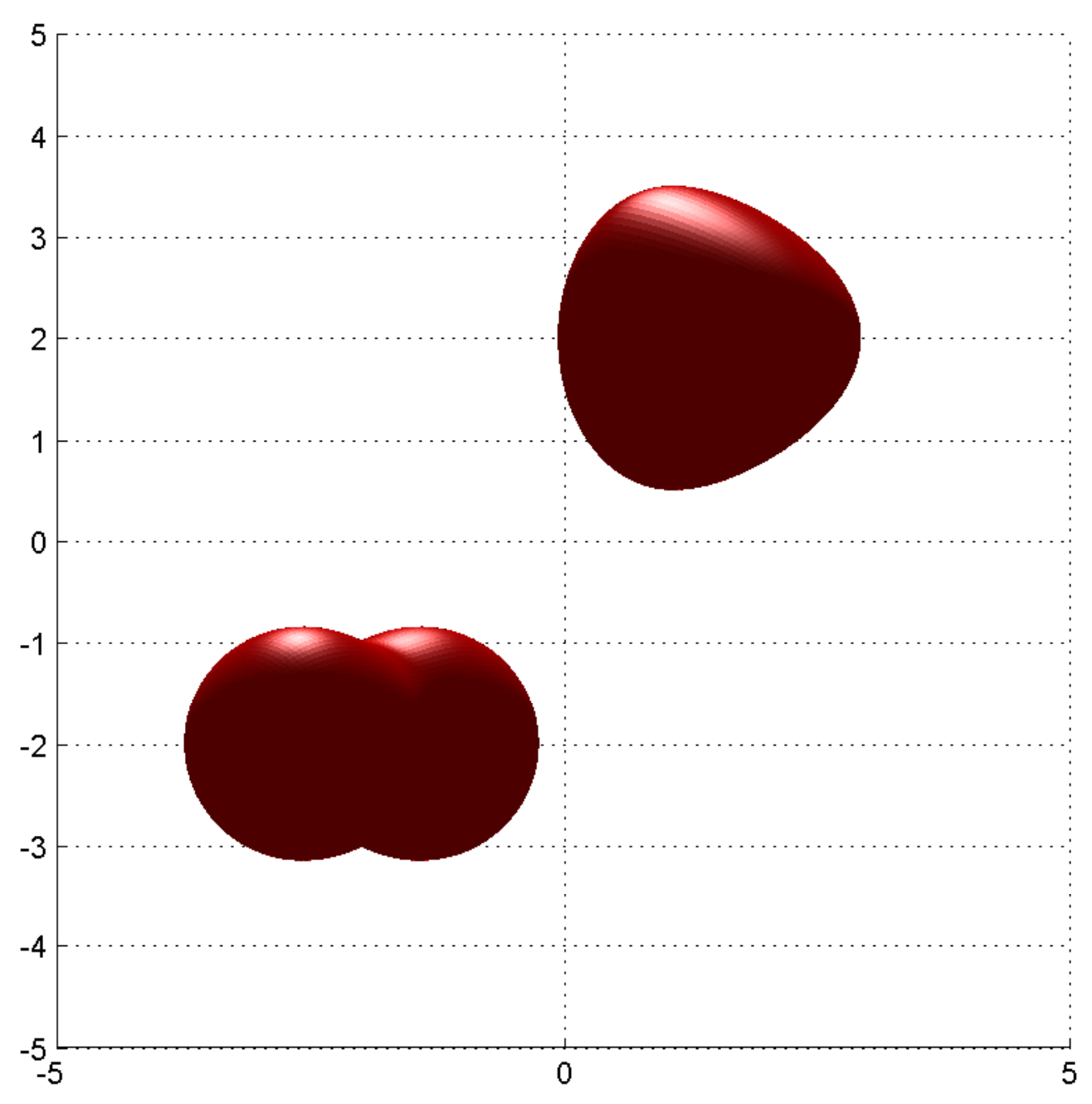}\hfill{}\includegraphics[width=0.23\textwidth]{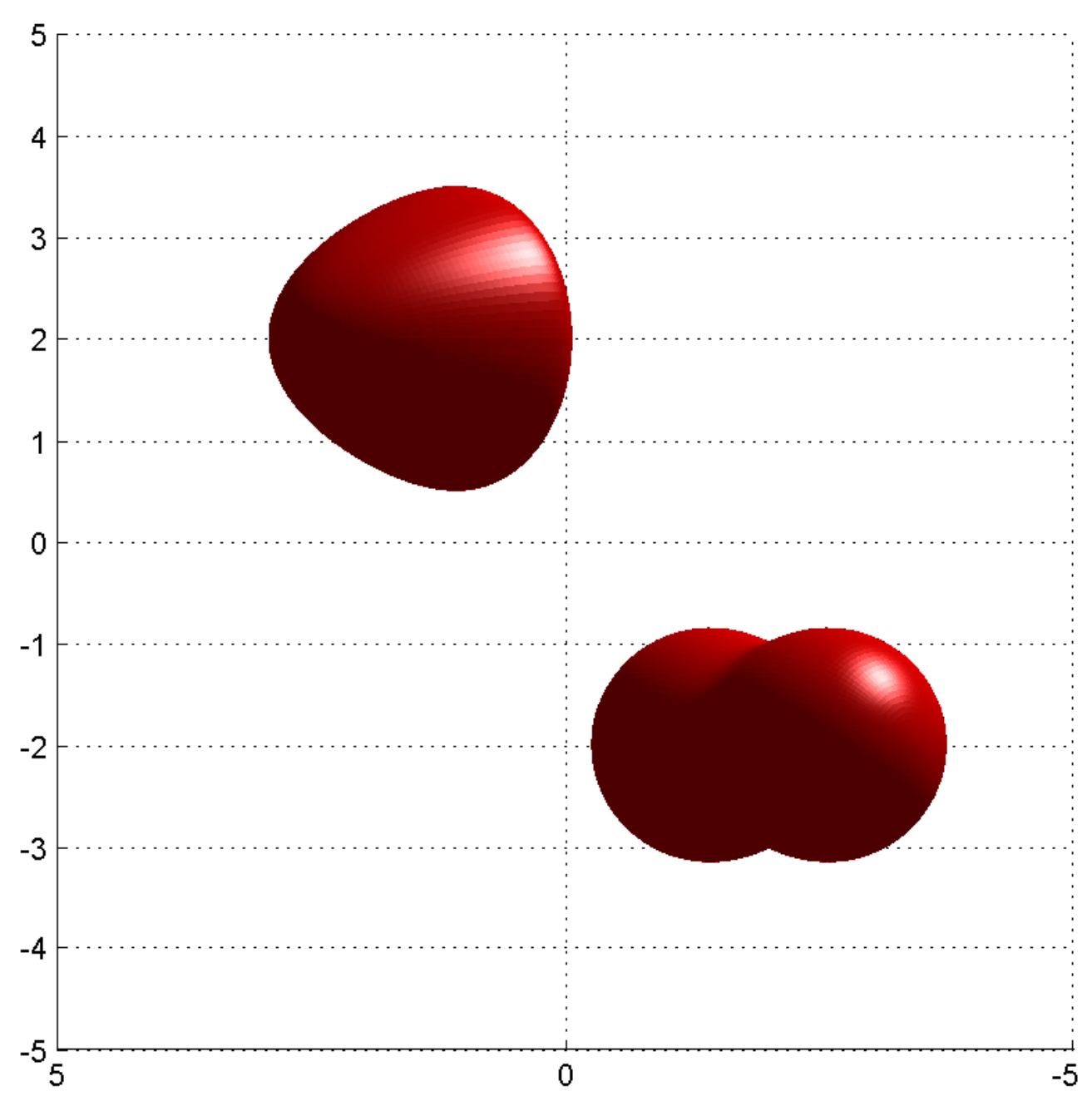}\hfill{}

\hfill{}(a)\hfill{}\hfill{}(b)\hfill{}\hfill{}(c)\hfill{}\hfill{}(d)\hfill{}
\caption{\label{fig:True-scatterer}True scatterer for Example \textbf{PK}. }
\end{figure}

As remarked earlier, we choose the two scatterer components to be inhomogeneous media. There are two considerations for such a choice. First, we developed Scheme AR in Section~\ref{sect:3} mainly for locating PEC obstacles, but we also gave the extension to locate medium components if the generic situation described in Remark~\ref{rem:MediumObstacle2} is fulfilled. Second, we would like to illustrate the wide applicability of Scheme AR, and we refer to \cite{LiLiuShangSun} for numerical results on recovering multiple PEC obstacles by Scheme R. We implement Scheme AR in a two-stage imaging procedure as follows:

\paragraph{Scheme S.}

We first set $k=1$, which amounts to sending a detecting EM wave of wavelength at least twice larger than each component of the scatterer. With the collected far-field data, we implement Scheme S to find how many components to be recovered and locate the rough positions of those scatterer components.

The imaging result at this coarse stage is shown in Fig.~\ref{fig:Final-image-of-Coarse}, indicated by
the characteristic behavior of the function $I_s^j(z)$ (cf.~\eqref{eq:indicator function}) in Scheme S.
Note that no reference spaces are needed up to this stage.
It can be observed that the indicator function achieves local maxima in the region where
there exists a scatterer component, either kite or peanut. The rough position of the peanut is highlighted in Figs.~\ref{fig:Final-image-of-Coarse}(a)
which indicate a possible scatterer component somewhere around the highlighted region.  In  Figs.~\ref{fig:Final-image-of-Coarse}(b), we see that the rough position of the kite could also be found.
But its dimer brightness as shown in  Figs.~\ref{fig:Final-image-of-Coarse}(b) tells us that one cannot figure out its shape and size up to this stage.

Then we could incorporate the suspicious regions
into a stack of cubes, as in Figs.~\ref{fig:Final-image-of-Coarse}(c)
and (d). And the computation of the next stage, i.e., Scheme
AR, is just performed on these cubes, which are shown exclusively in Figs.~\ref{fig:Final-image-of-Coarse}(e)
and (f). It is emphasized that this preprocessing stage can be skipped and one can directly implement the Scheme AR as described in the next stage to locate the kite \textbf{K} and the peanut \textbf{P}. However, by performing this preprocessing stage, the computational costs can be significantly reduced, and the robustness and resolution can be enhanced for Scheme AR, as will be performed in the next stage. 

\begin{figure}
\hfill{}\includegraphics[width=0.4\textwidth]{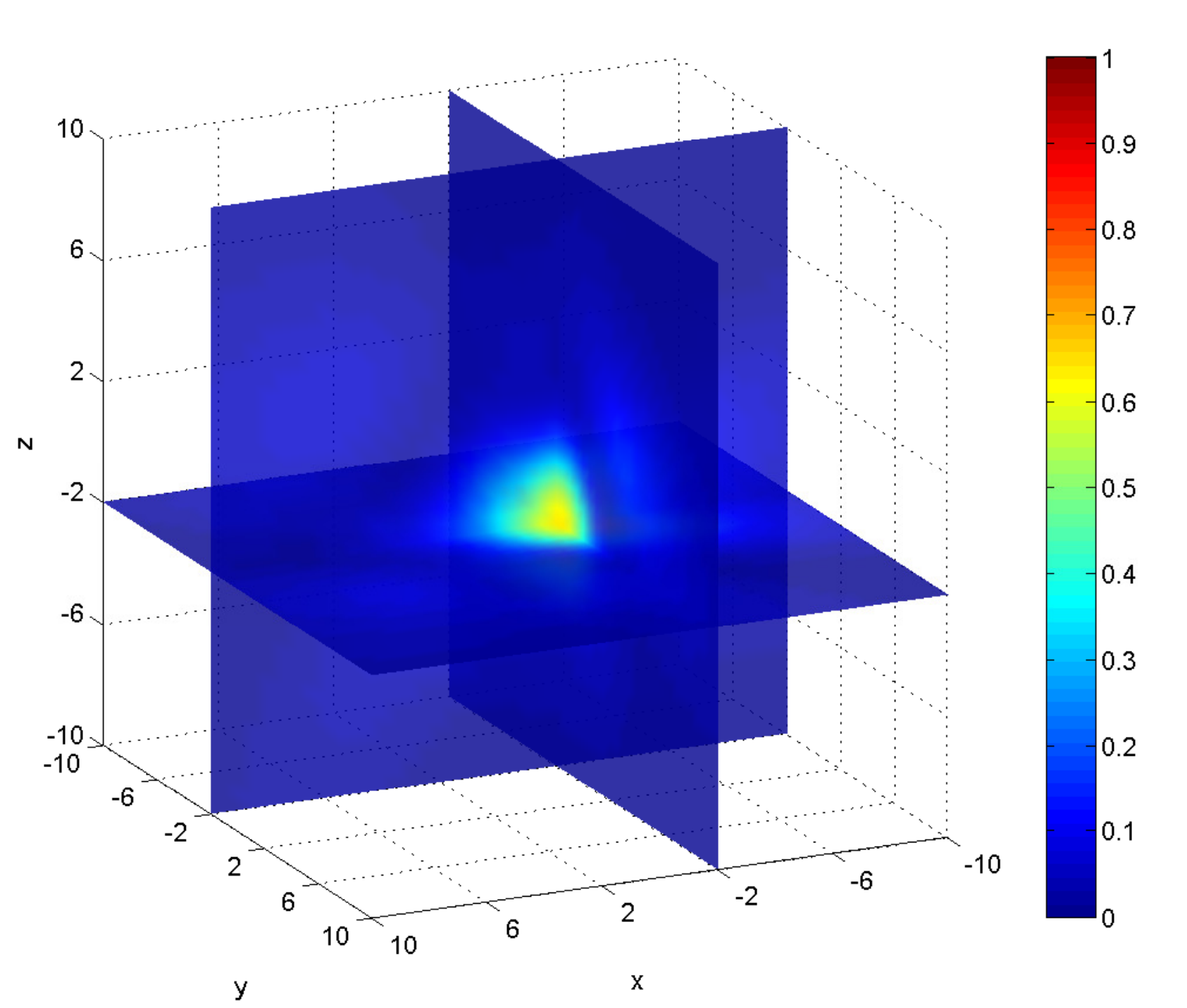}\hfill{}\includegraphics[width=0.4\textwidth]{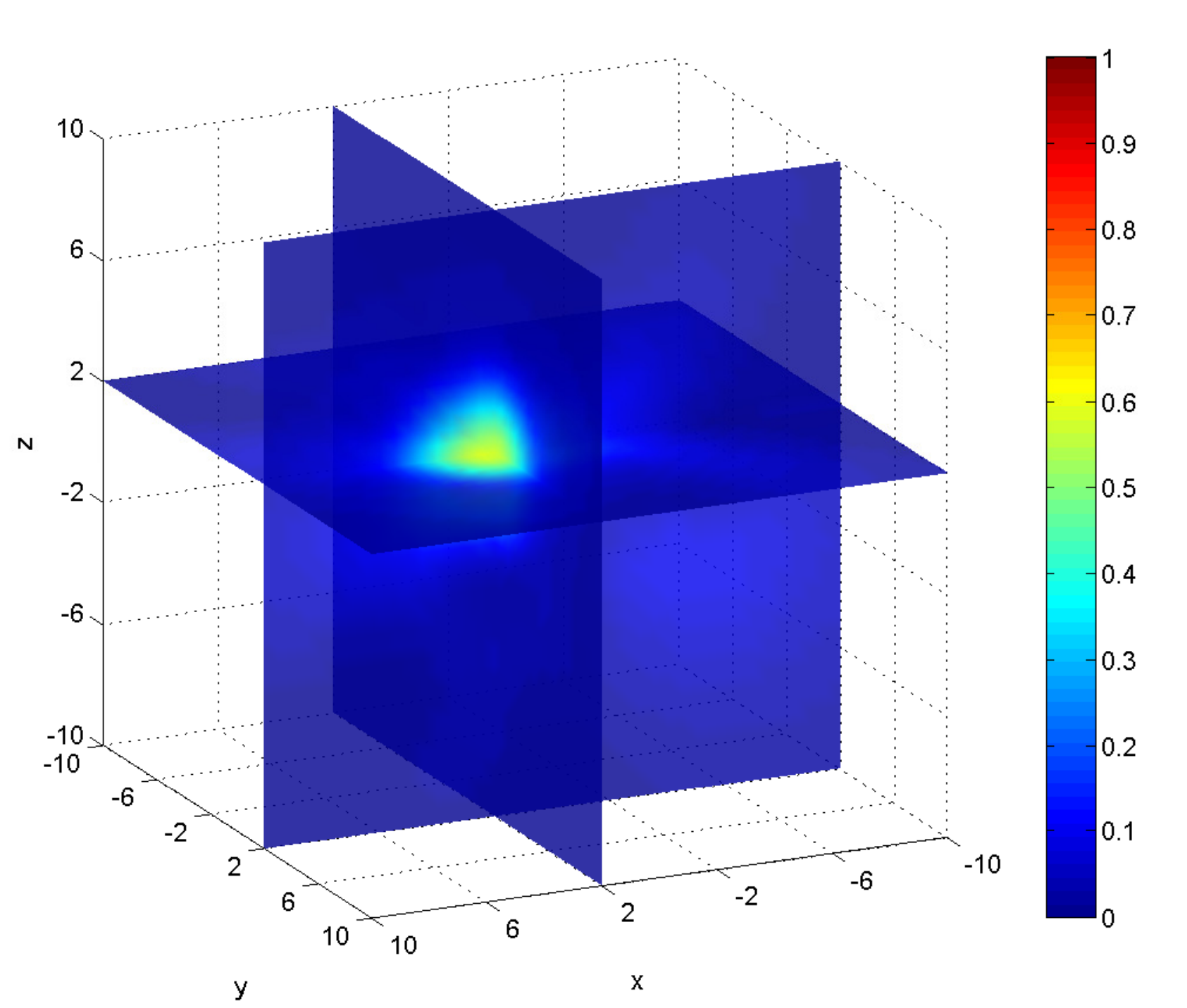}\hfill{}

\hfill{}(a)\hfill{}\hfill{}(b)\hfill{}

\hfill{}\includegraphics[width=0.4\textwidth]{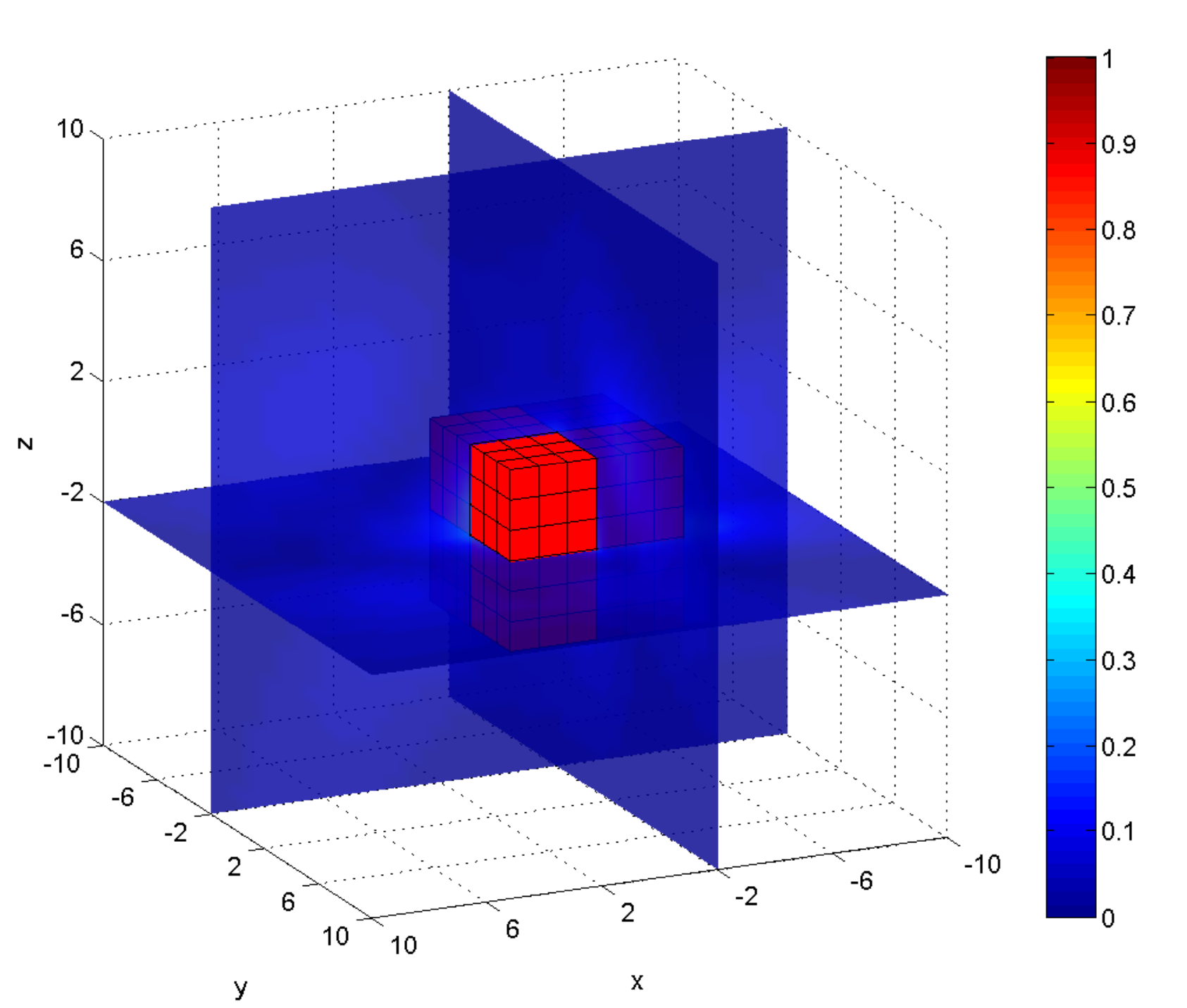}\hfill{}\includegraphics[width=0.4\textwidth]{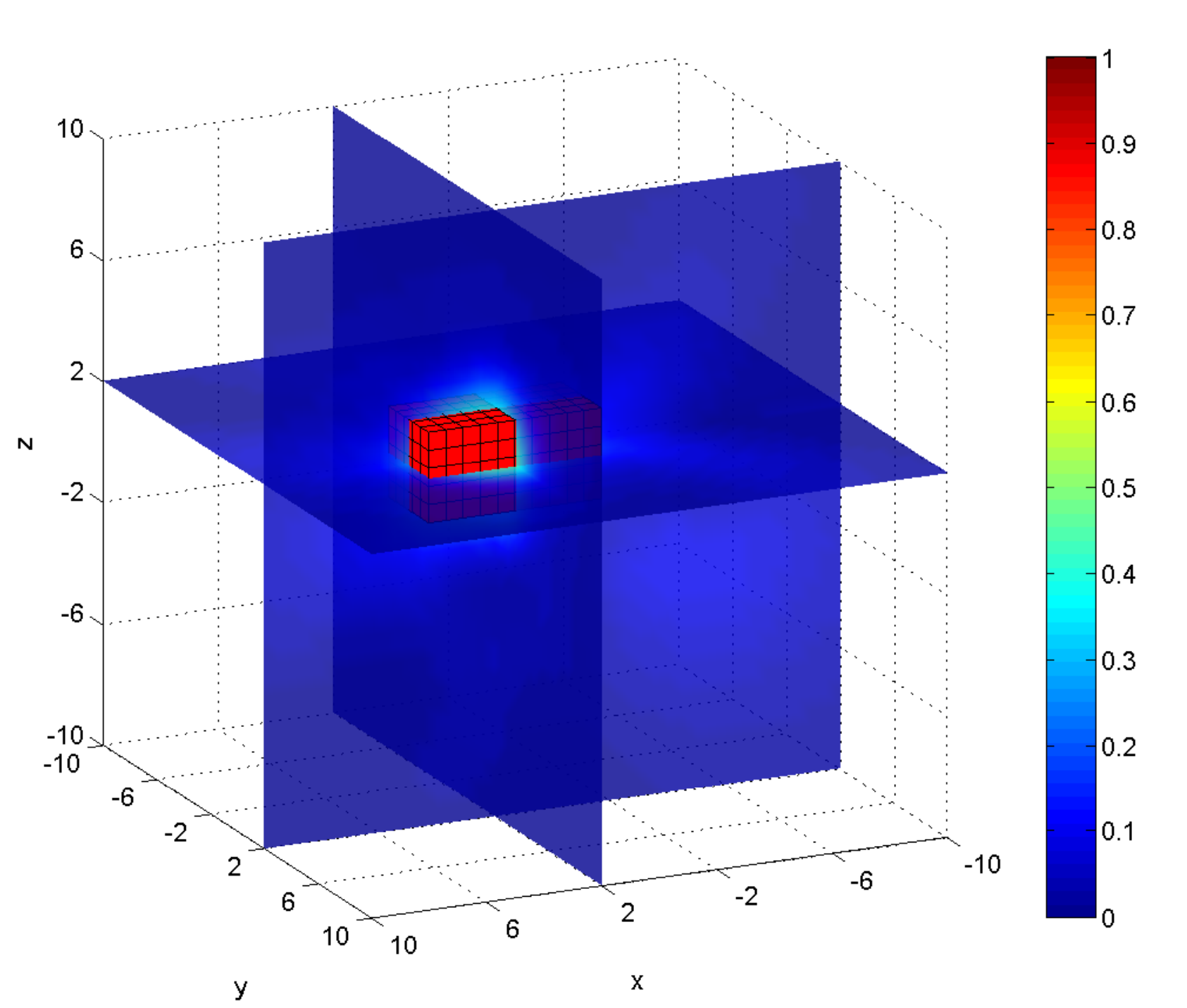}\hfill{}

\hfill{}(c)\hfill{}\hfill{}(d)\hfill{}

\hfill{}\includegraphics[width=0.4\textwidth]{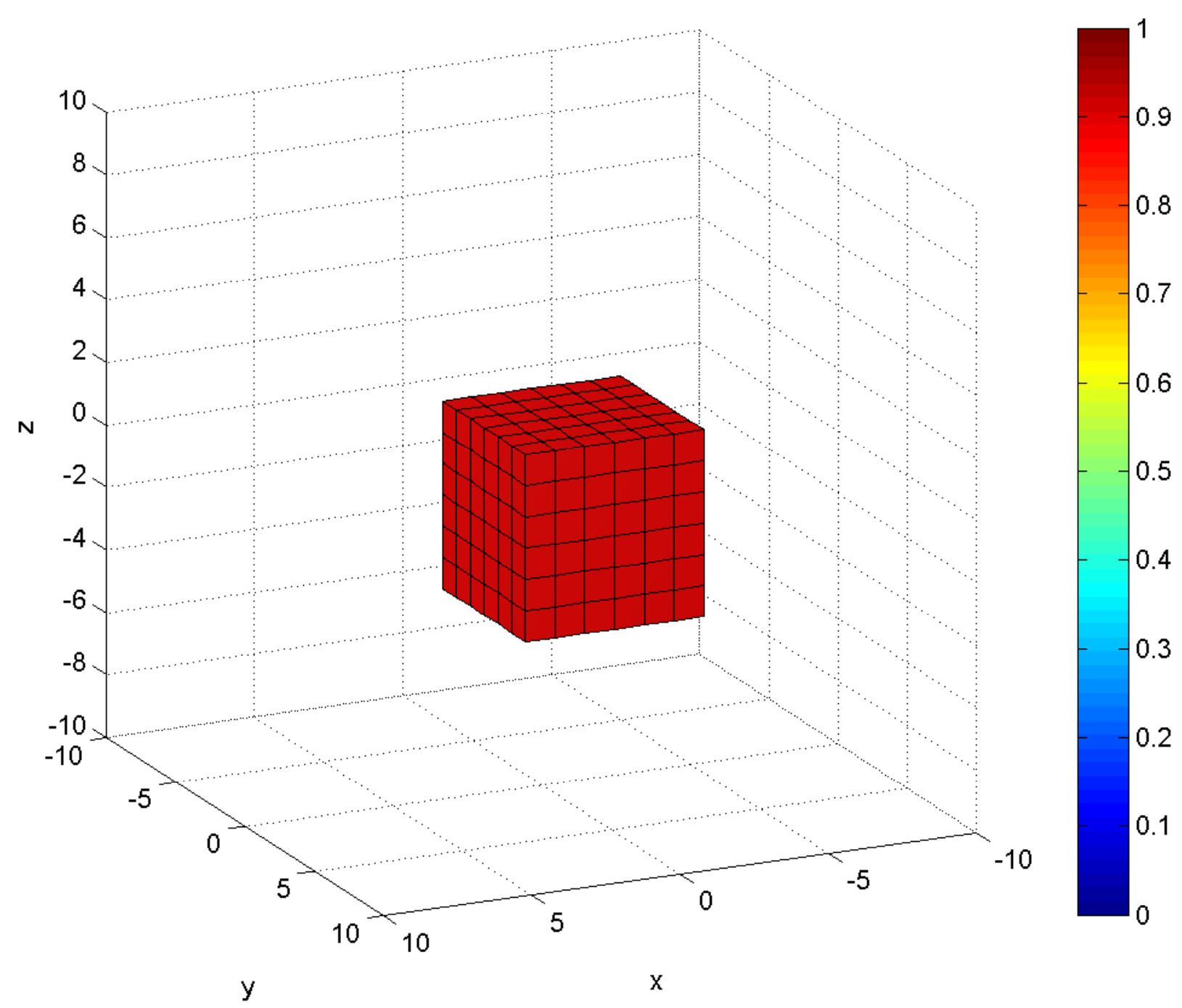}\hfill{}\includegraphics[width=0.4\textwidth]{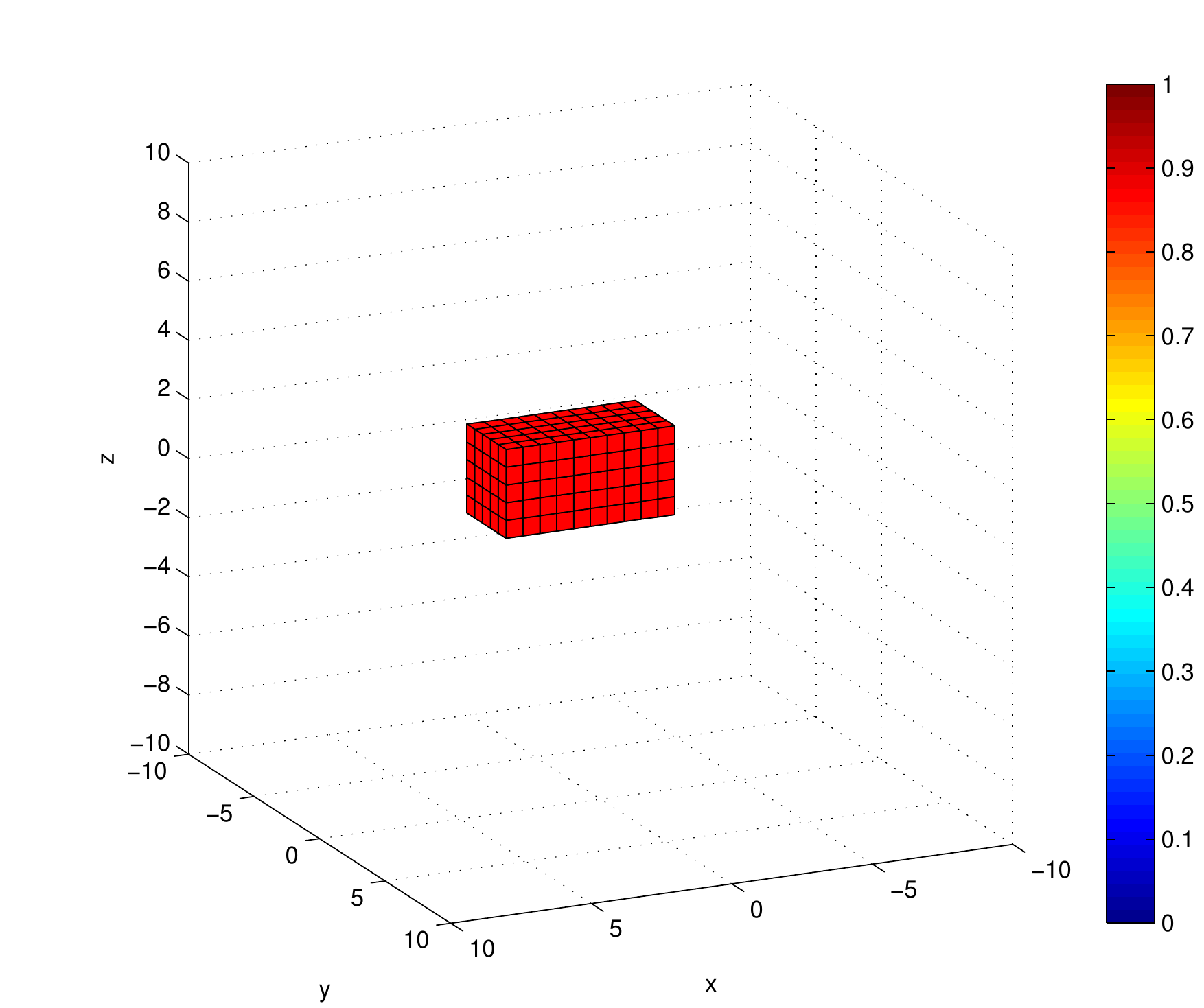}\hfill{}

\hfill{}(e)\hfill{}\hfill{}(f)\hfill{}

\caption{\label{fig:Final-image-of-Coarse}Identification in the coarse/preprocessing stage  in Example \textbf{PK}.}
\end{figure}

\paragraph{Scheme AR.}

In this stage, we take $k=5$. With the collected far-field data, we implement Scheme AR to determine the location, shape, orientation and size of each scatterer component.

When we use the far-field data of the reference peanut with $3\pi/4$ azimuthal angle and unitary scale as the test data 
in the indicator function $I_r^j(z)$ (cf.~\eqref{eq:indicator regular}),
the  distribution of the indicator 
function  is shown in Fig.~\ref{fig:Fine-Stage-Identification-Peanut}(a). Then we take maximum of the indicator values and find a 
much precise location $(-2.1,\ -2.1,\ -2.1)$ of the peanut, as in Fig.~\ref{fig:Fine-Stage-Identification-Peanut}(b). 
Based on that position, we plot the proper shape, orientation and size based on the information carried with the far field data employed 
and plot the imaging result  in Fig.~\ref{fig:Fine-Stage-Identification-Peanut}(c). 
Its projection on the orthogonal cut   planes across its location are shown in Fig.~\ref{fig:Fine-Stage-Identification-Peanut}(d)-(f). 
It can be concluded that the position identified is quite good and reasonable.

After excluding the peanut component, we apply Scheme AR to 
the local mesh  around the Kite component. When
the far-field data of the reference kite with $\pi/4$ azimuthal angle and unitary scale is adopted in the indicator function $I_r^j(z)$ 
(cf.~\eqref{eq:indicator regular}), the value distribution of
the indicator function is shown in Fig.~\ref{fig:Fine-Stage-Identification-Kite}(a). Then we take maximum of the indicator values and find the
the location $(2.2,\ 2.2,\ 2.2)$  of the kite, as in Fig.~\ref{fig:Fine-Stage-Identification-Kite}(b). 
As previous, we plot the exact shape, orientation and size and show three orthogonal cut planes across the location identified 
in Fig.~\ref{fig:Fine-Stage-Identification-Kite}(c)-(f).
The identified location is very close to the exact position of the kite.

\begin{figure}
\hfill{}\includegraphics[width=0.32\textwidth]{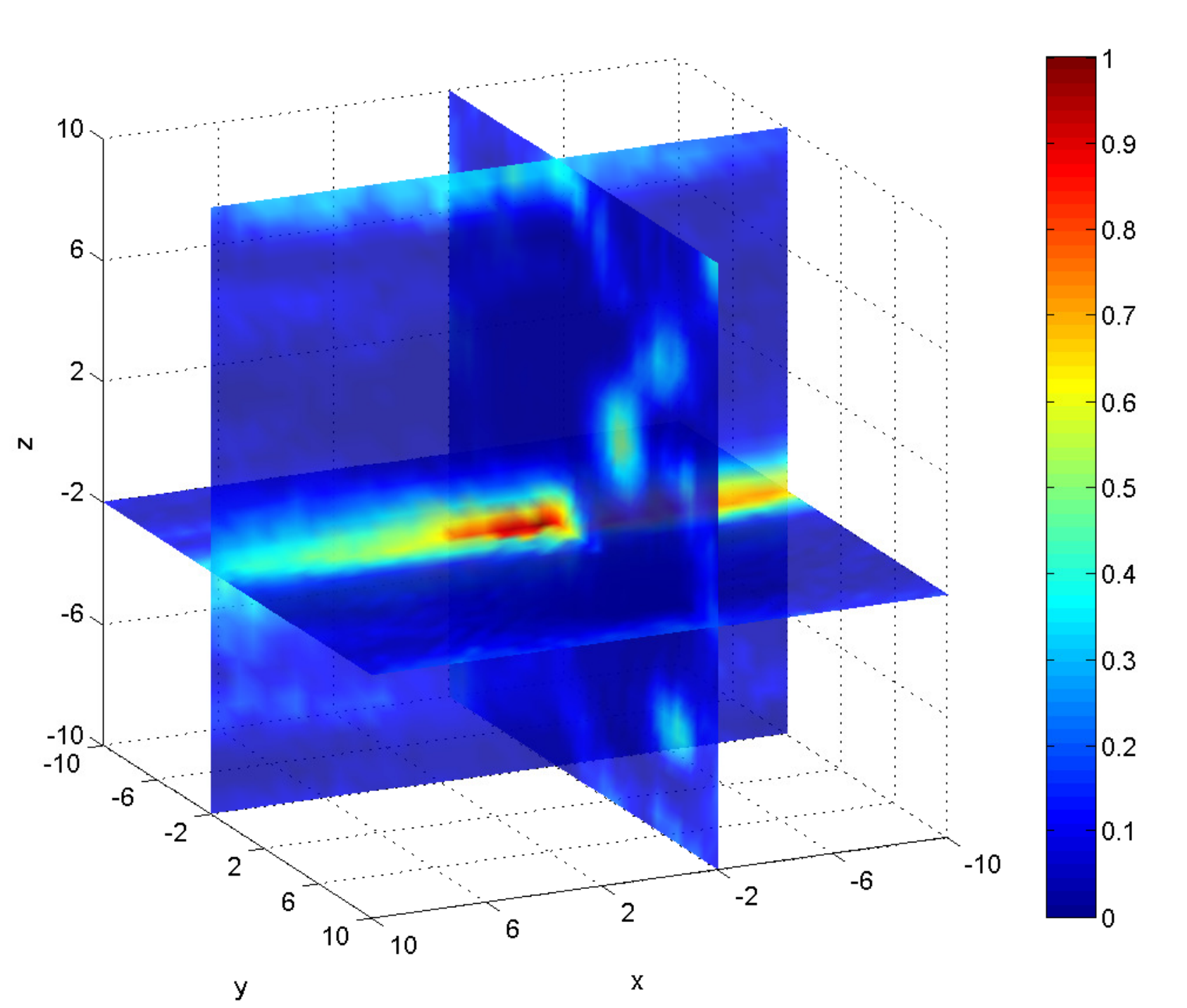}\hfill{}
\includegraphics[width=0.32\textwidth]{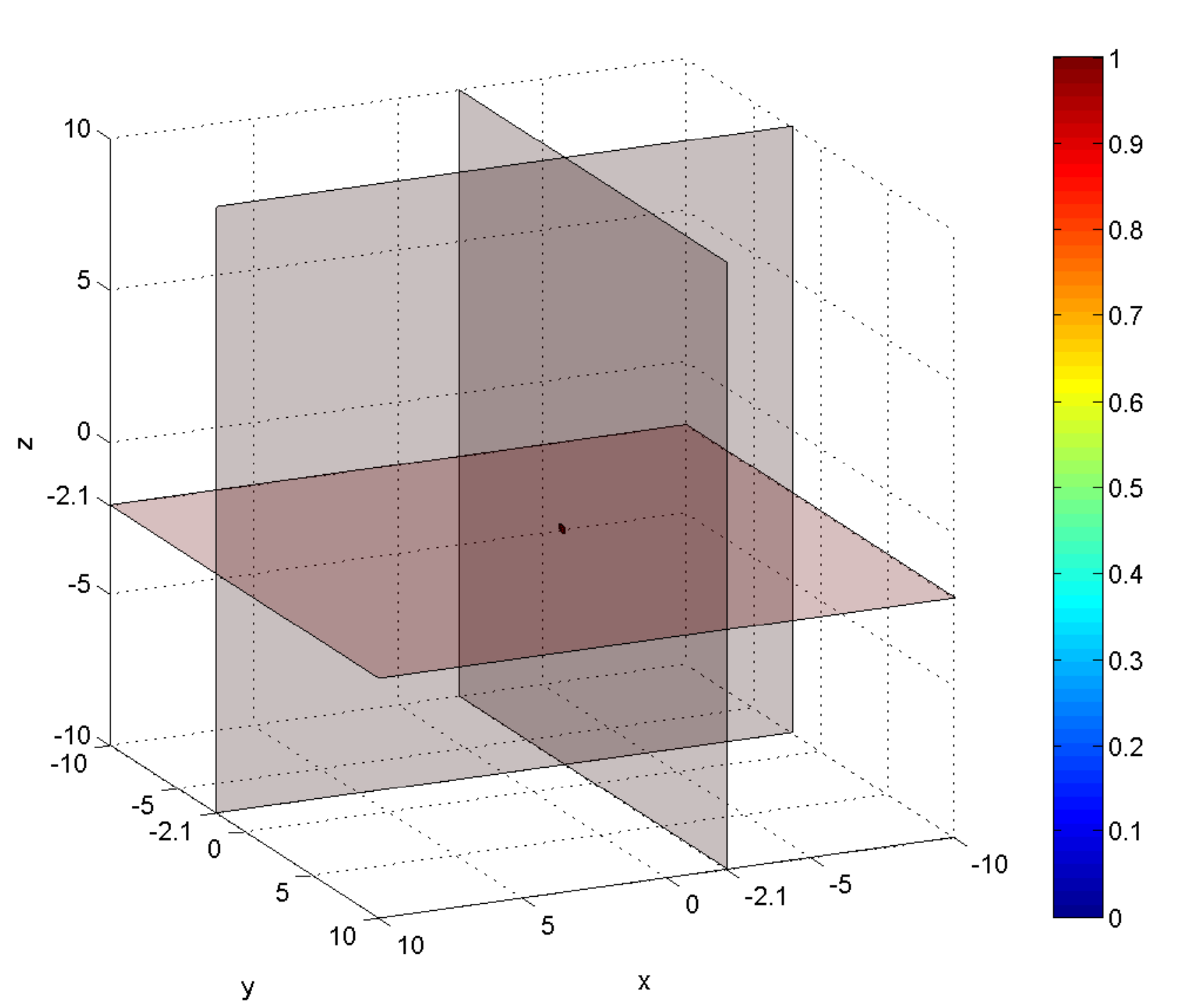}\hfill{}\includegraphics[width=0.32\textwidth]{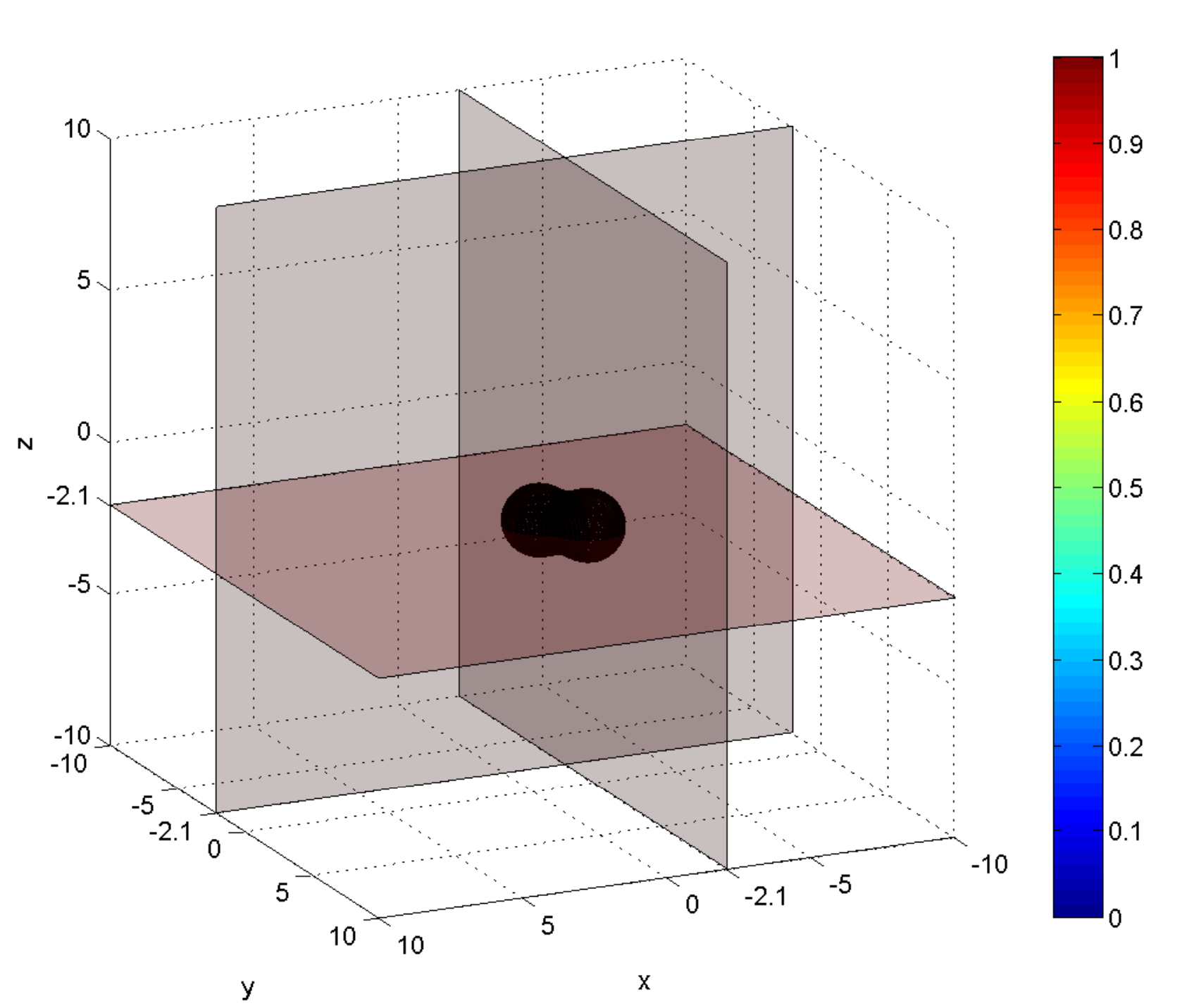}\hfill{}

\hfill{}\!\!\!\!\!\!\!\!\!\!(a)\hfill{}~~~~~~~~~~(b)\hfill{}~~~~~~~(c)\hfill{}\hfill{}

\hfill{}\includegraphics[width=0.32\textwidth]{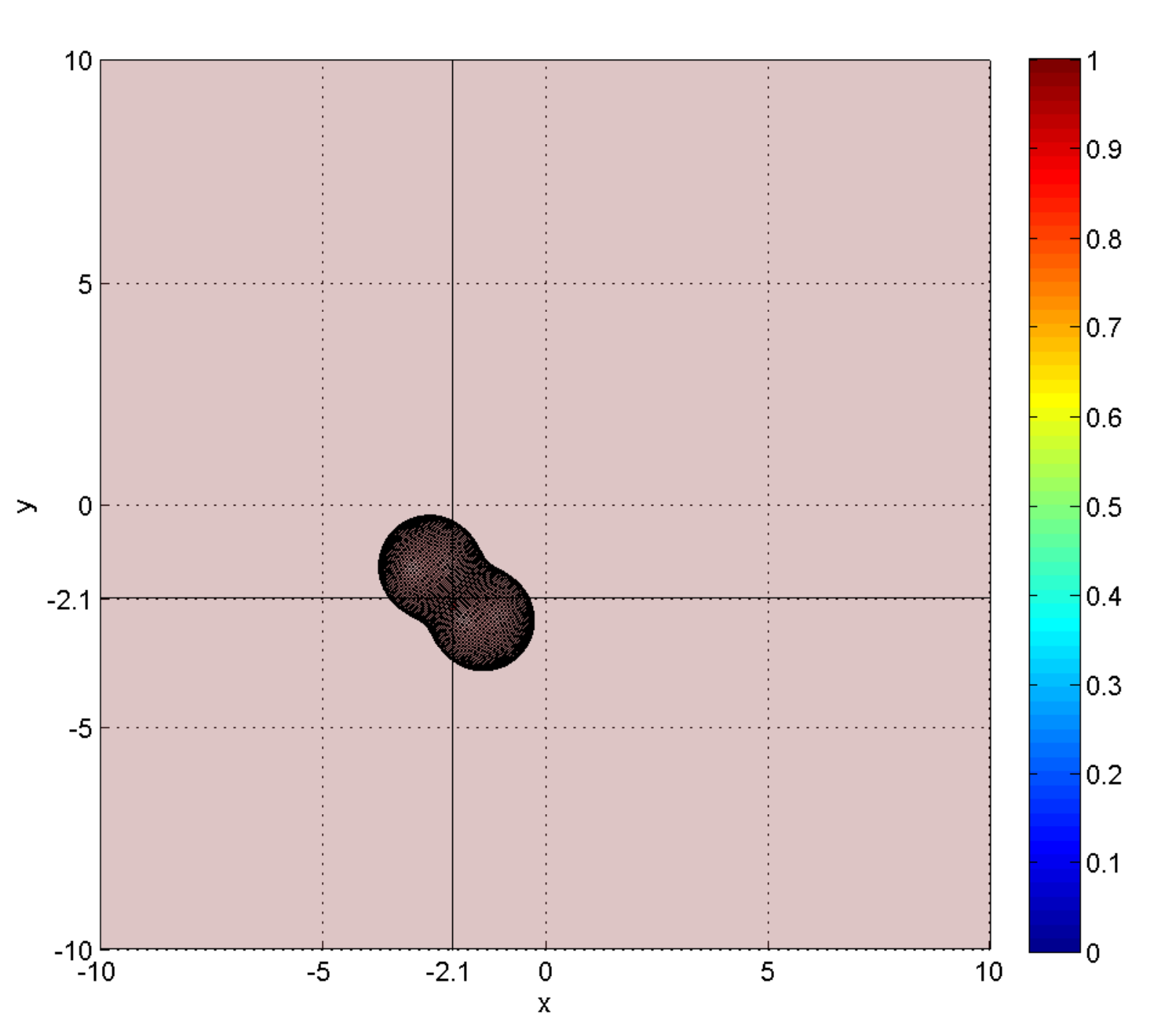}\hfill{}\includegraphics[width=0.32\textwidth]{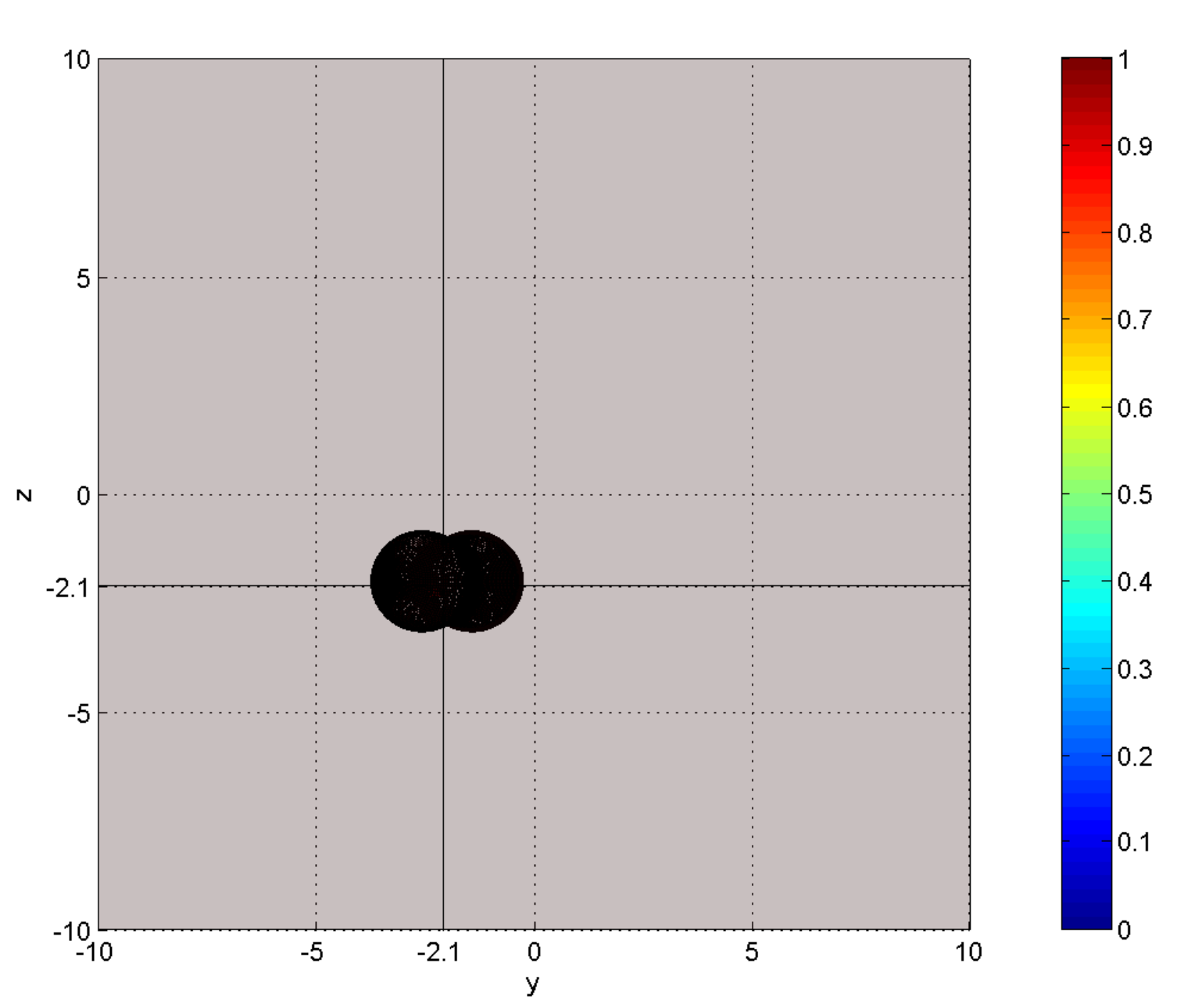}\hfill{}\includegraphics[width=0.32\textwidth]{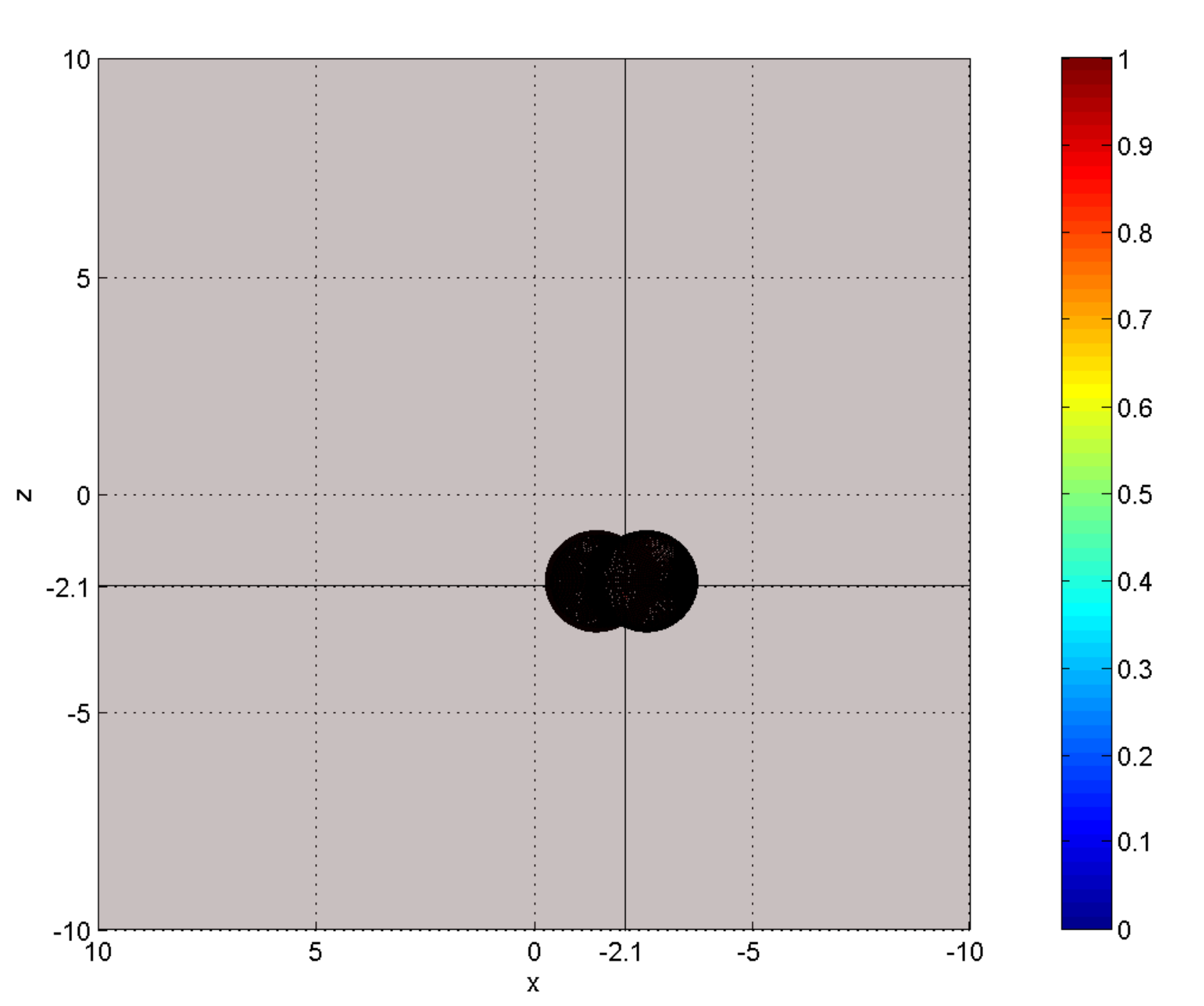}\hfill{}

\hfill{}\!\!\!\!\!\!\!\!\!\!(d)\hfill{}~~~~~~~~~~(e)\hfill{}~~~~~~~(f)\hfill{}\hfill{}

\caption{\label{fig:Fine-Stage-Identification-Peanut} Fine stage identification of the Peanut component in Example \textbf{PK}:
(a) the multi-slice plot of the indicator function; (b) rough position
by take maximum of indicator function; (c) the reconstructed component after the determination of the orientation of the peanut; (d)-(f) projections of the reconstruction in (c).}
\end{figure}

\begin{figure}
\hfill{}\includegraphics[width=0.32\textwidth]{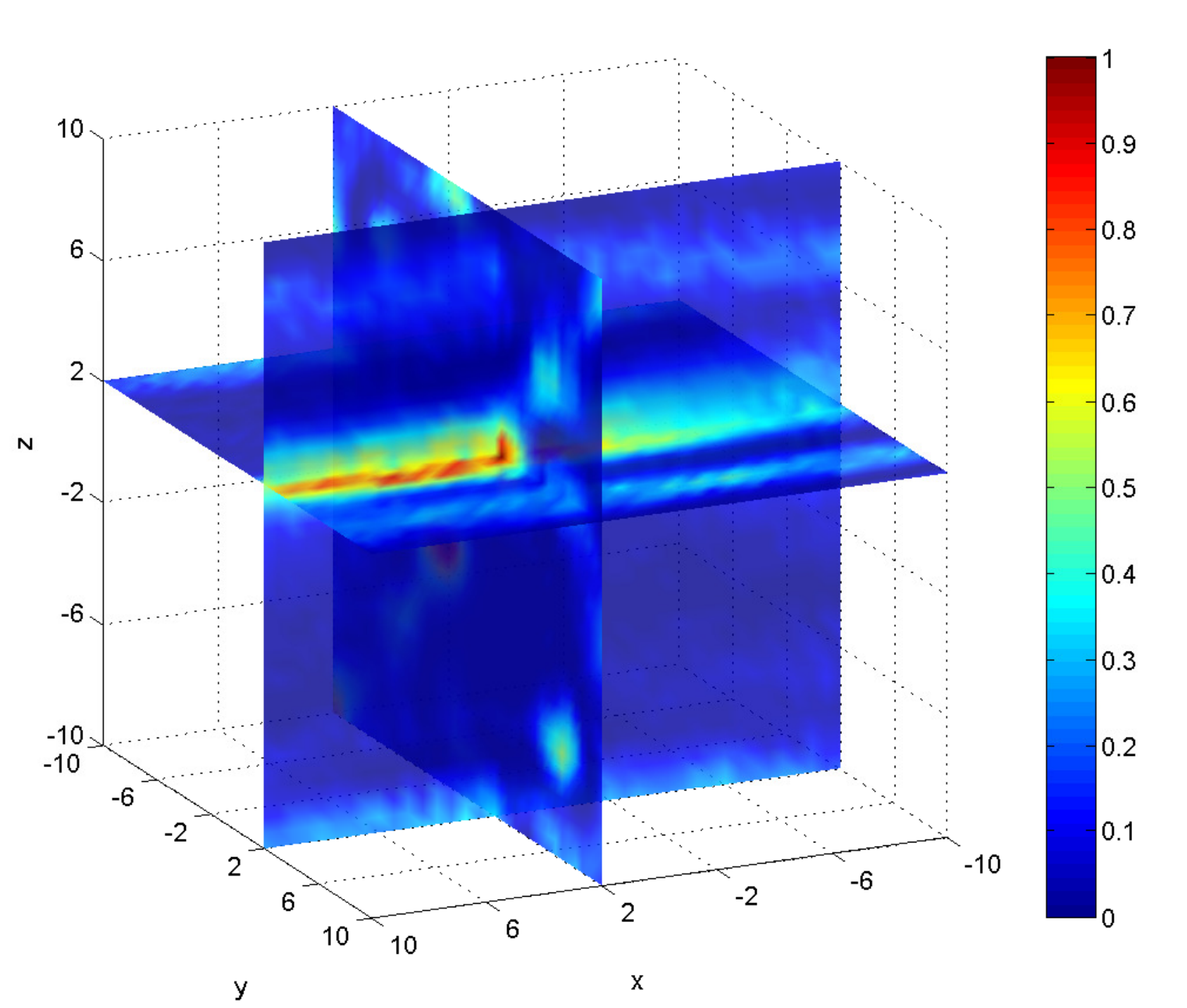}\hfill{}\includegraphics[width=0.32\textwidth]{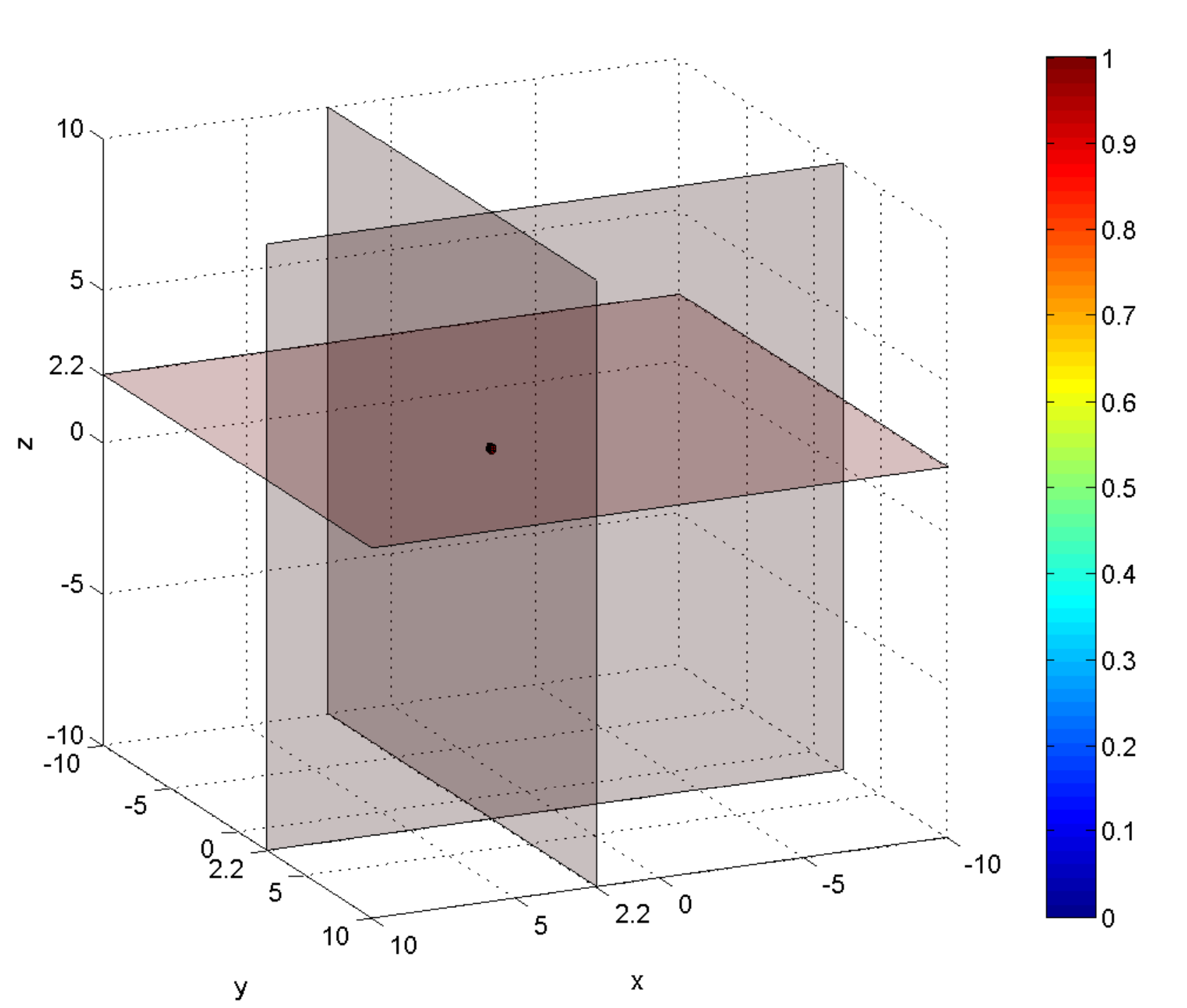}\hfill{}\includegraphics[width=0.32\textwidth]{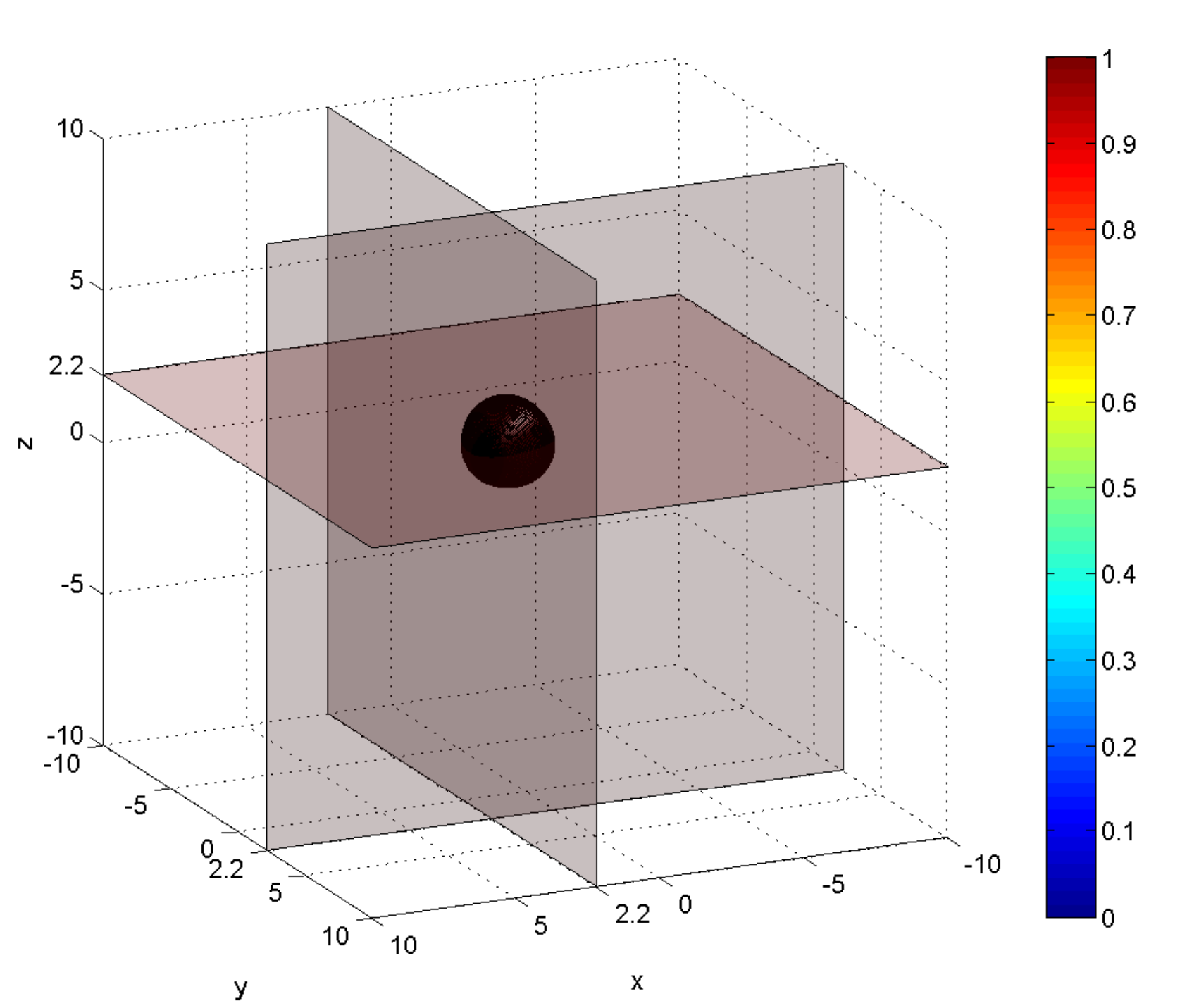}\hfill{}

\hfill{}\!\!\!\!\!\!\!\!\!\!(a)\hfill{}~~~~~~~~~~(b)\hfill{}~~~~~~~(c)\hfill{}\hfill{}

\hfill{}\includegraphics[width=0.30\textwidth]{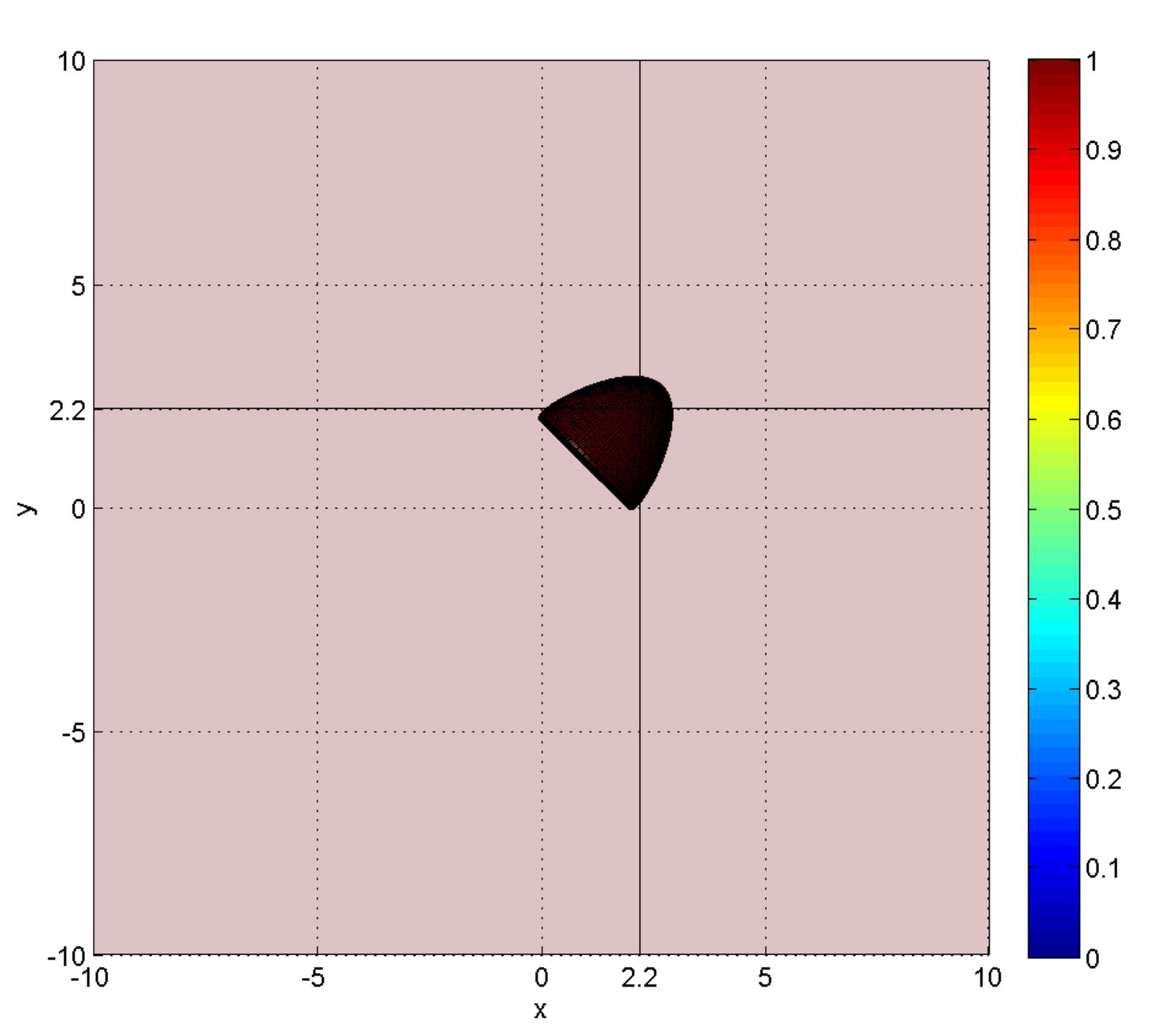}\hfill{}\includegraphics[width=0.32\textwidth]{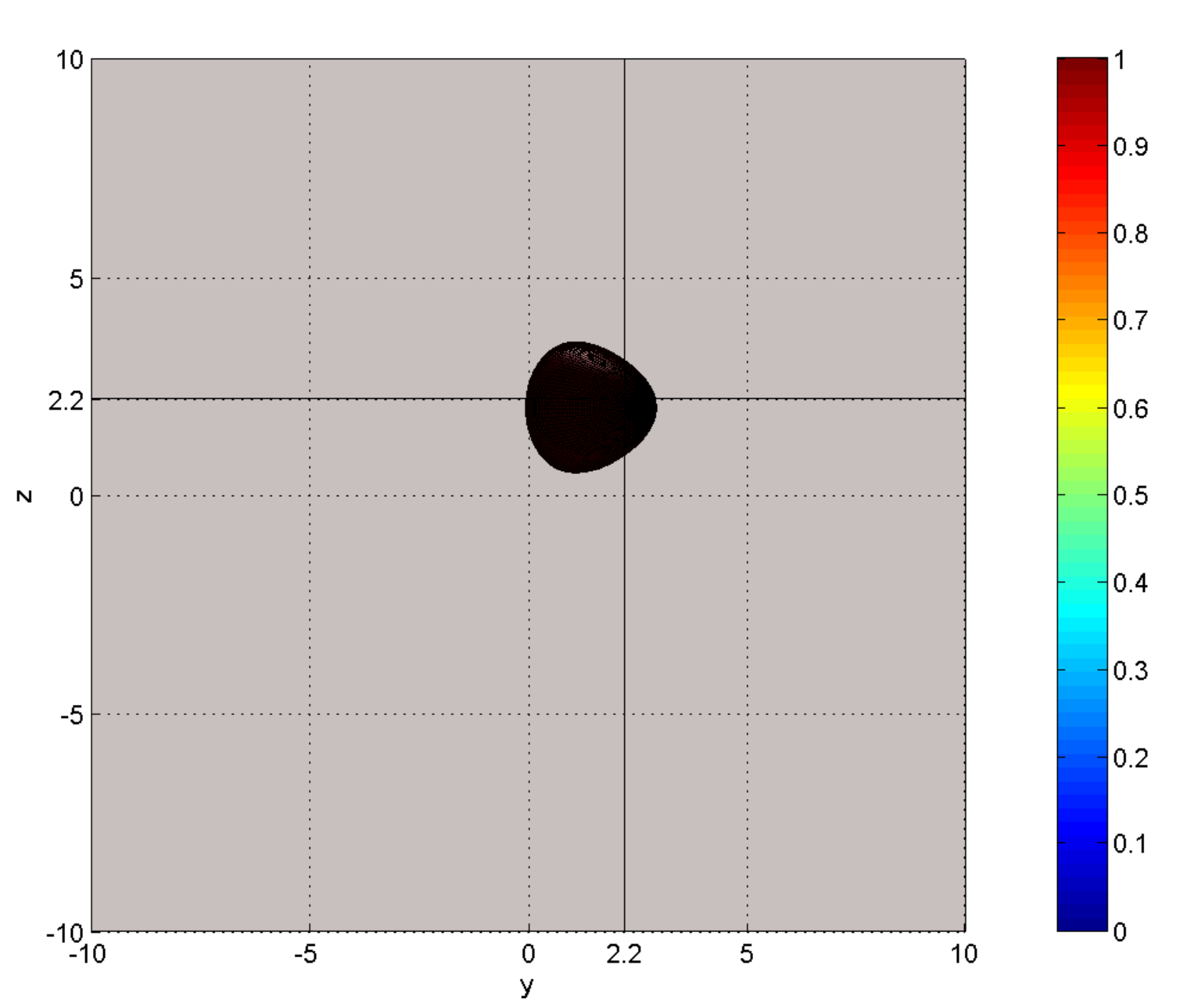}\hfill{}\includegraphics[width=0.32\textwidth]{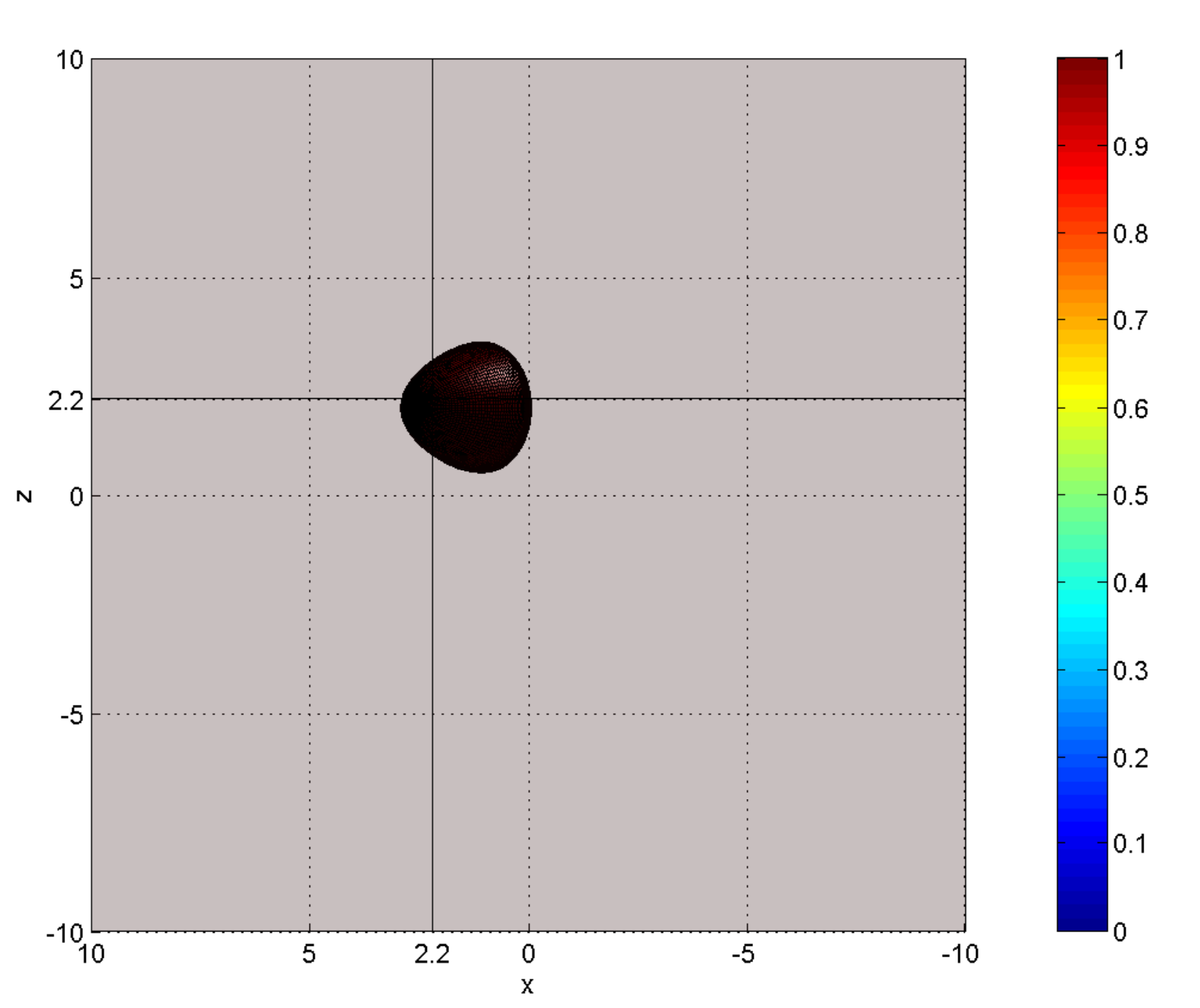}\hfill{}

\hfill{}\!\!\!\!\!\!\!\!\!\!(d)\hfill{}~~~~~~~~~~(e)\hfill{}~~~~~~~(f)\hfill{}\hfill{}

\caption{\label{fig:Fine-Stage-Identification-Kite} Fine stage Identification of the Kite component in Example \textbf{PK}: (a) the multi-slice plot of the indicator function; (b) rough position by taking
maximum of the indicator function; (c) the reconstructed component after the determination of the orientation and size of the kite; (d)-(f) projections of the reconstruction in (c).}

\end{figure}


\subsection{Enhanced Scheme M}

\paragraph*{Example KB.}  In this example we try to locate multiple multi-scale scattering components using Enhanced Scheme M.
The exact scatterer is composed of a kite-shaped scatterer enlarged by two times from the reference one and a ball scatterer scaled by a half from the unit one.
The kite is chosen to be a PEC obstacle, whereas the ball is an inhomogeneous medium. 
The exact scatterer is shown in Fig.~\ref{fig:True-scatterer-mlts}, where the 3D kite-shaped  component is located at
$(0,\,0,\,-4)$ and  the ball component is located
at $(0,\,0,\,9)$ with radius a half unit.

\begin{figure}[bp]
\hfill{}\includegraphics[width=0.23\textwidth]{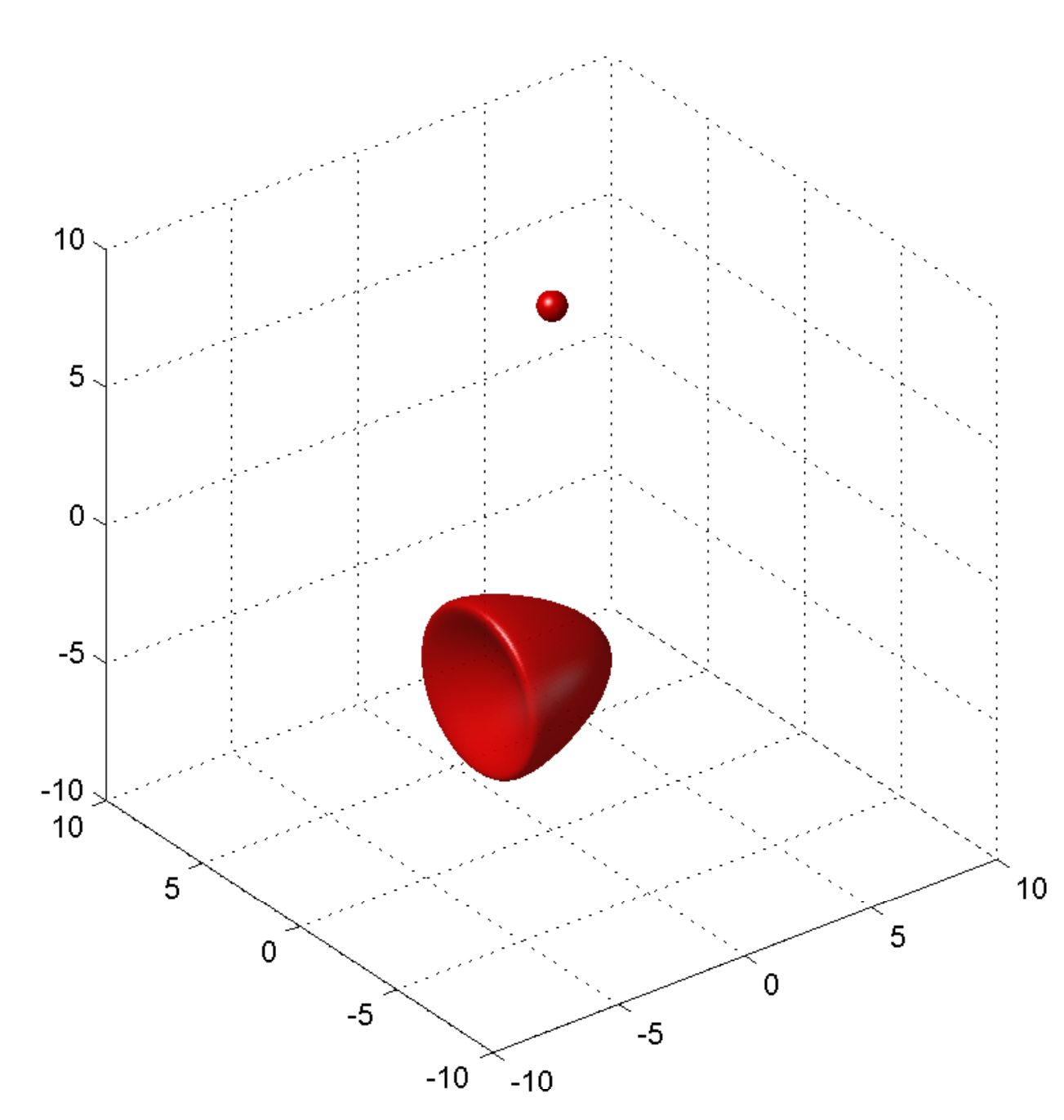}\hfill{}\includegraphics[width=0.23\textwidth]{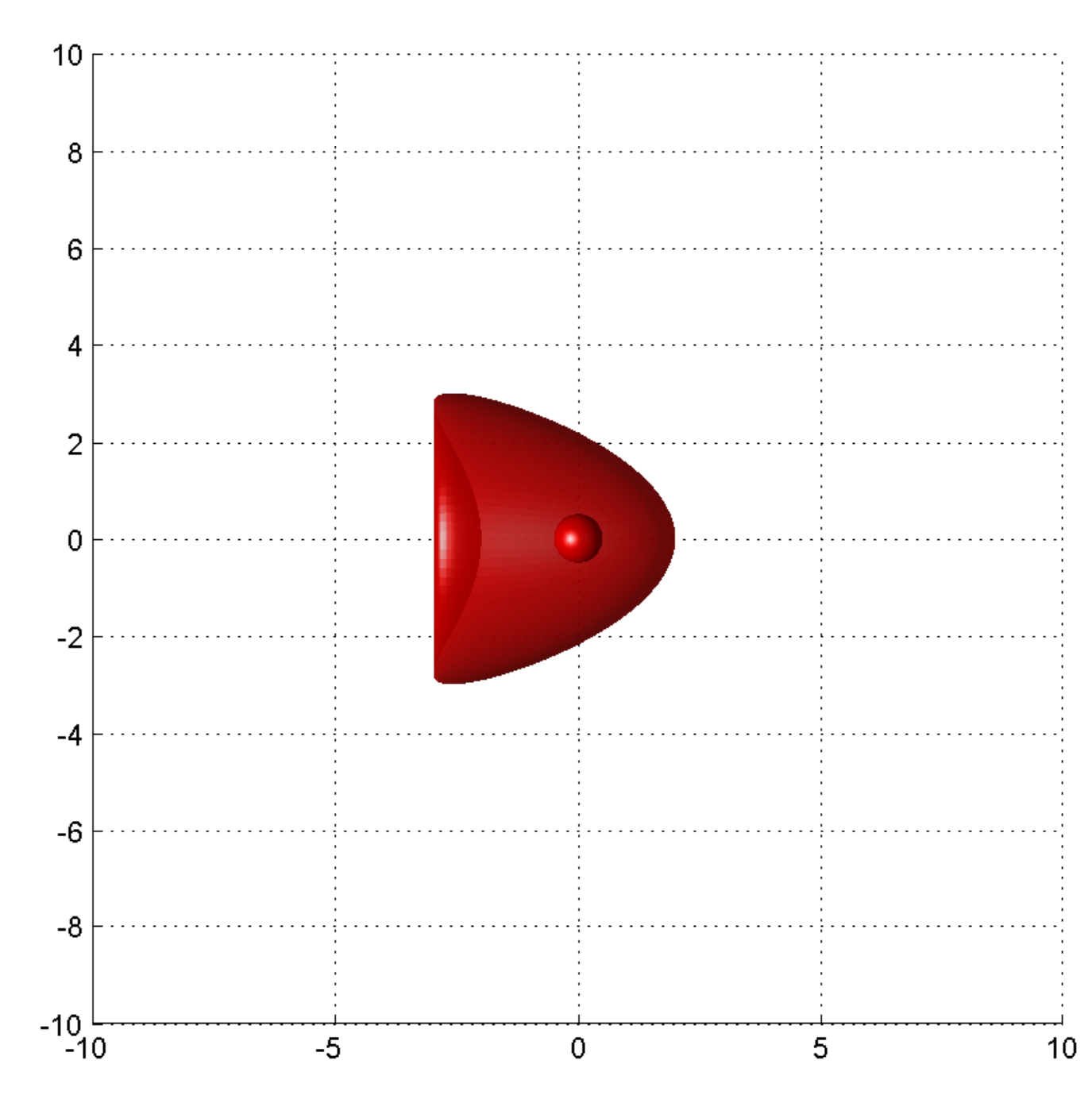}\hfill{}\includegraphics[width=0.23\textwidth]{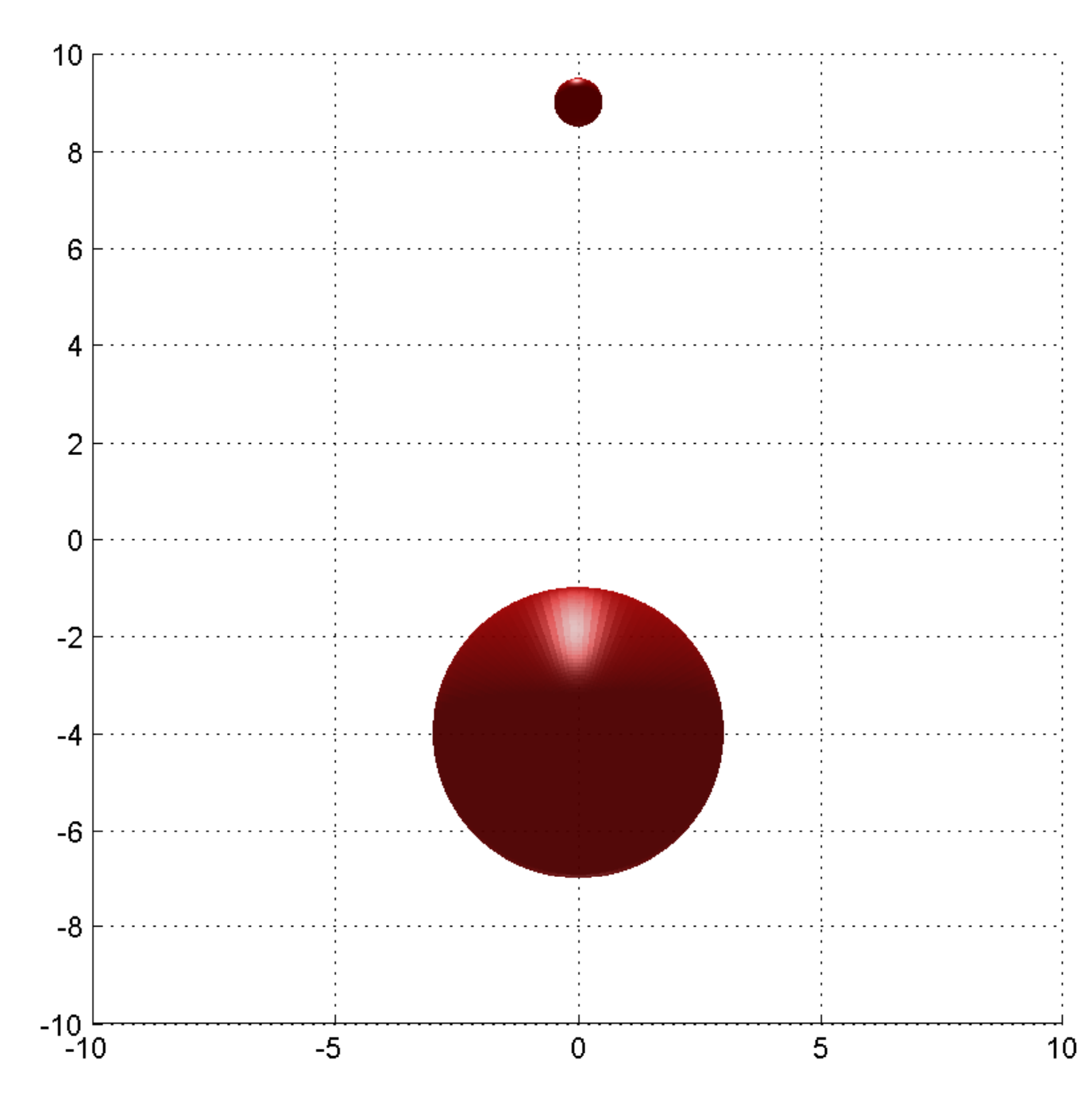}\hfill{}\includegraphics[width=0.23\textwidth]{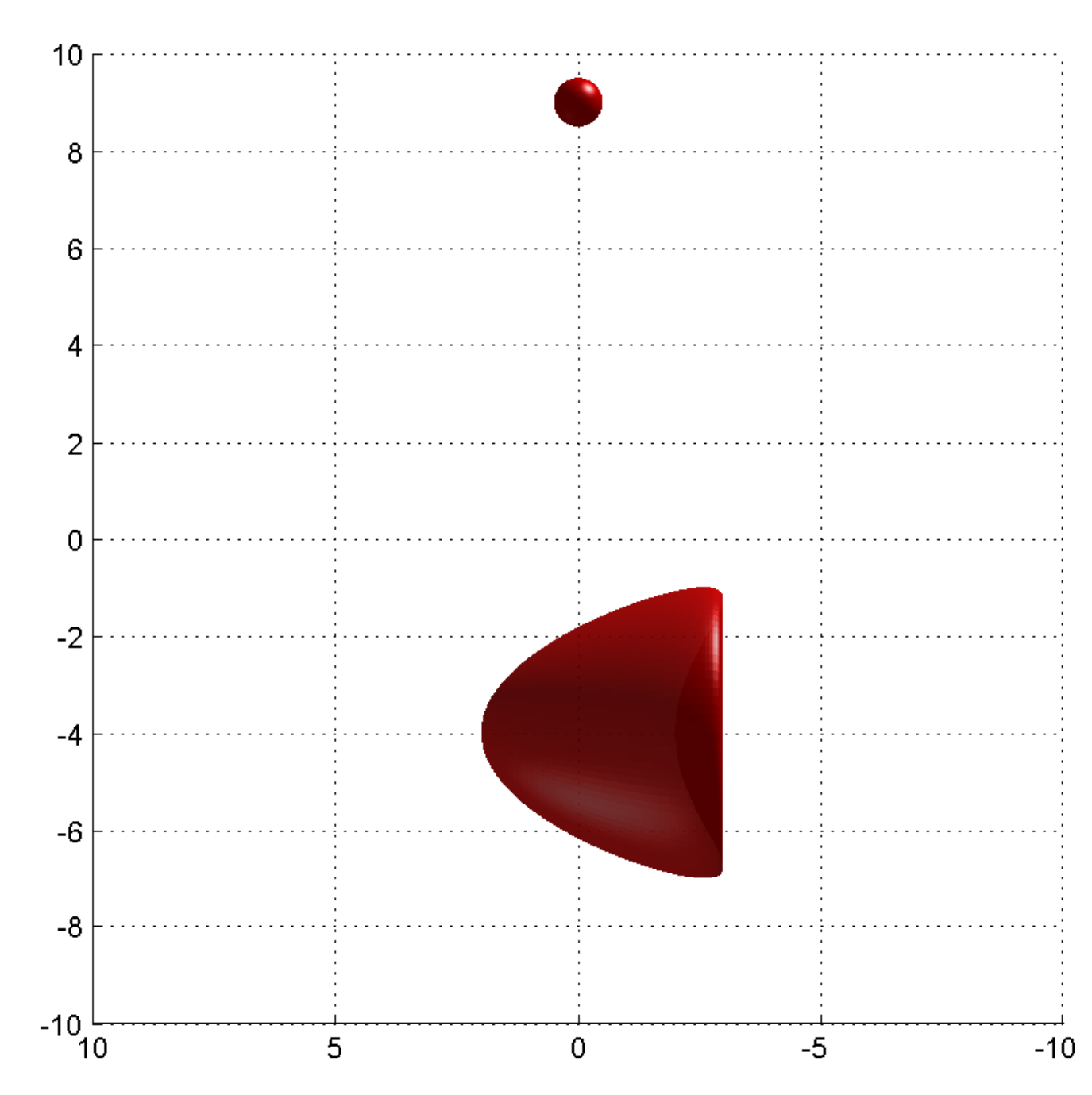}\hfill{}

\hfill{}(a)\hfill{}\hfill{}(b)\hfill{}\hfill{}(c)\hfill{}\hfill{}(d)\hfill{}
\caption{\label{fig:True-scatterer-mlts}True scatterer of Example \textbf{KB}. }
\end{figure}

Now we employ Enhanced Scheme M to detect the unknown scatterers by applying
Scheme AR first and then Scheme S with the local re-sampling technique. In the first stage of Scheme AR, the far-field data used are collected by illuminating the scatterer by an incident EM waves of $k=\pi$. In the second stage for Scheme S, the far-field data used are collected by illuminating the scatterers by an detecting EM waves of  $k=2\pi/5$.
For $k=\pi$, we enrich our augmented reference space $\widetilde{\mathscr{S}}$ by the far-field data corresponding to each reference components with different orientations and sizes on $590$ Lebedev quadrature points on the unit sphere.

\paragraph{Scheme AR.}

We first apply Scheme AR to the multi-scale scatterers. When
the far-field data of the reference kite with vanishing azimuthal angle and double size is adopted in the indicator function $I_r^j(z)$
(cf.~\eqref{eq:indicator regular}), 
the local maximum behavior of the indicator function is shown in Fig.~\ref{fig:Fine-Stage-Identificationt-mlts}, (a).
Using Scheme AR, we obtain a rough position of the kite component by taking
the coordinates at which the indicator function achieves the maximum, namely  $(0,\,0,\,-4.3056)$ as shown in Fig.~\ref{fig:Fine-Stage-Identificationt-mlts}(a). 
Its shape, orientation and size are superimposed by the message carried in the far-field
data and plotted in Fig.~\ref{fig:Fine-Stage-Identificationt-mlts}(b), where we reverse the $x$-axis for ease of visualization.

\paragraph*{Local re-sampling technique}

The detected position from Scheme AR in the previous step is an approximate position of the kite component due
to the noise. In order to implement the local re-sampling technique, we set a local searching
region around the obtained position point, namely $(0,\,0,\,-4.3056)$. In this test, we choose a stack of $10$-by-$10$-by-$10$ cubes centered at $(0,\,0,\,-4.3056)$
with total side length $1$, namely within the precision of half wave length, as shown in Fig.~\ref{fig:Fine-Stage-Identificationt-mlts}, (c)
and (d). Then we  subtract the  the far-field pattern associated with the regular-size component from the total one following \eqref{eq:re1} by testing every searching node in the
cubic mesh points.

\paragraph{Scheme S.}

The rest of the job is to  follow Step 4) in Enhanced Scheme M to test every suspicious points among the cubic grid points as shown in Fig.~\ref{fig:Fine-Stage-Identificationt-mlts}(c).  Fig.~\ref{fig:Recogonize little scatterer} shows a gradual evolution process as we
move gradually the sampling grid point from the nearly correct $z_{0}=(0,\,0,\,-4.0056)$
to a perturbed position $z_{0}=(0,\,0,\,-4.1056))$, which helps us update the position of the regular-size \textbf{K} component to be $z_{0}=(0,\,0,\,-4.0056)$ but also determine the location of the small-size \textbf{B} component.  From this example, we see that the identified position of the small ball component 
is no longer available if the position of the regular-size component is slightly perturbed. For the current test, the tolerance of the perturbation is within $0.05$.
Hence, a nice by-product from the local re-sampling technique is that it helps improve significantly  the position of the regular-size component.
The operation in this stage is essentially very cheap since only a few local grid points are involved and the
re-sampling procedure only computes inner product of the subtracted far-field data with the test data in \eqref{eq:indicator function}. Moreover,
efficiency can be further improved by implementing the algorithm in parallel.

\begin{figure}
\hfill{}\includegraphics[width=0.3\textwidth]{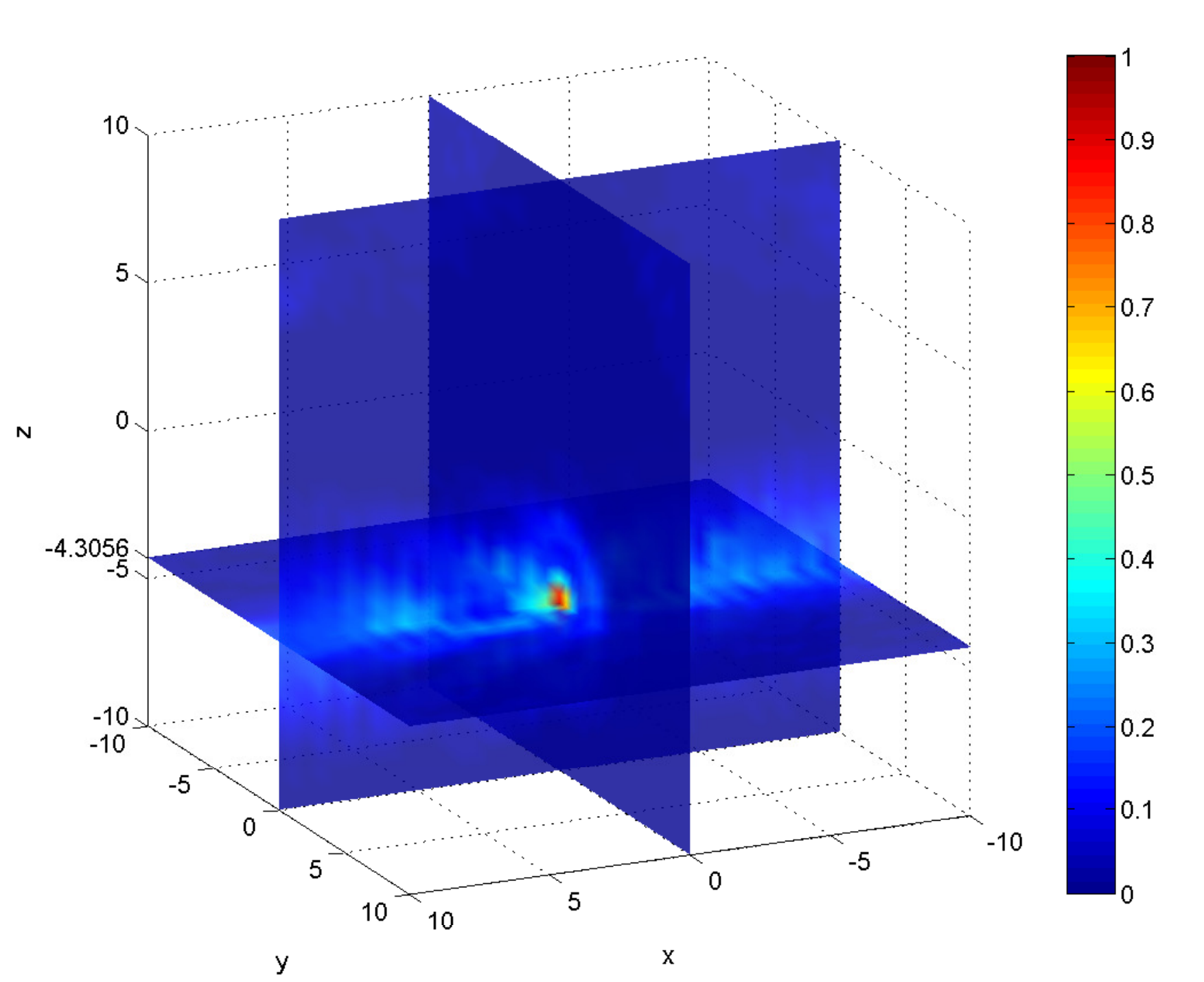}\hfill{}\includegraphics[width=0.3\textwidth]{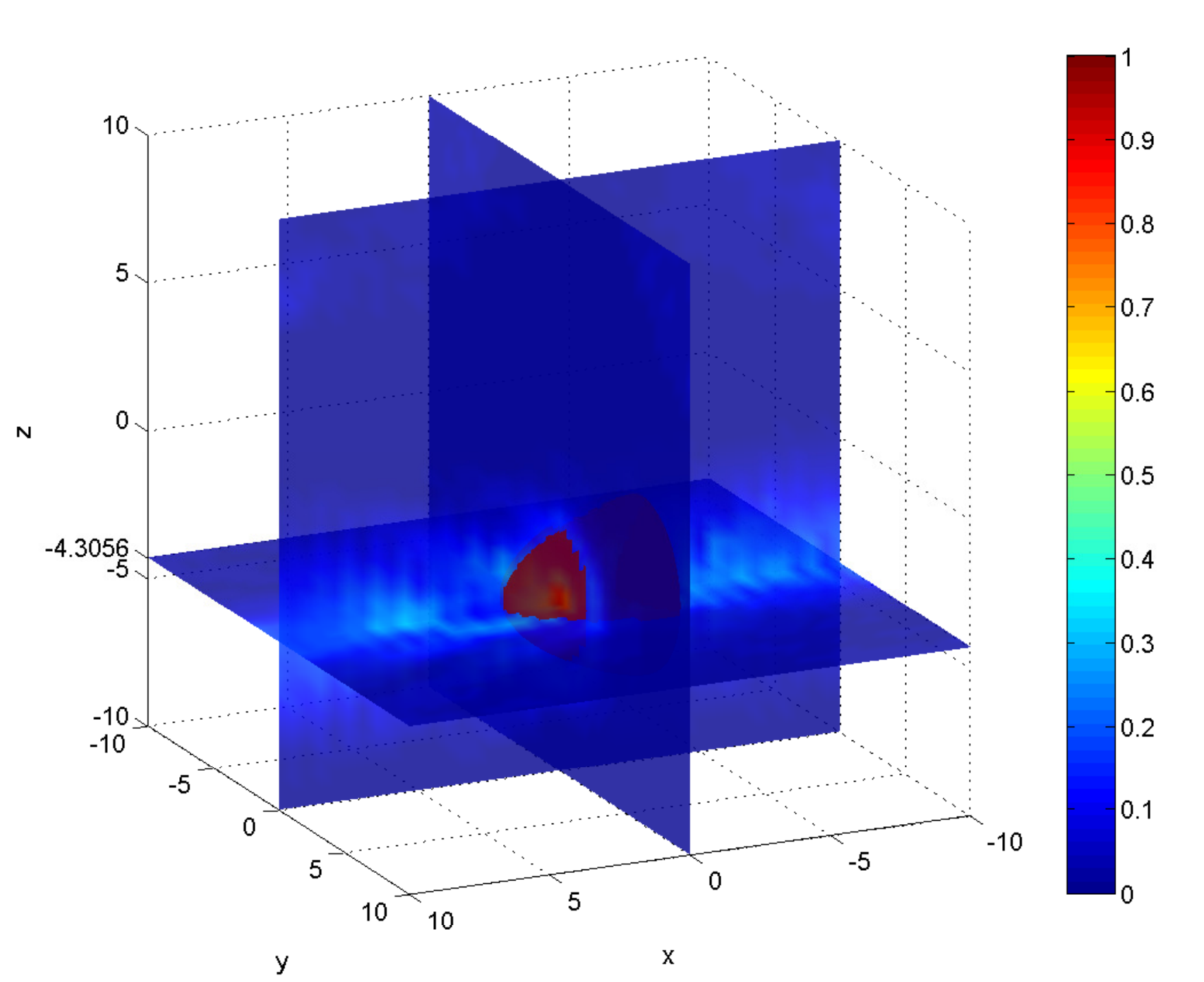}\hfill{}

\hfill{}(a)\hfill{}~~~~~~~~(b)\hfill{}

\hfill{}\includegraphics[width=0.3\textwidth]{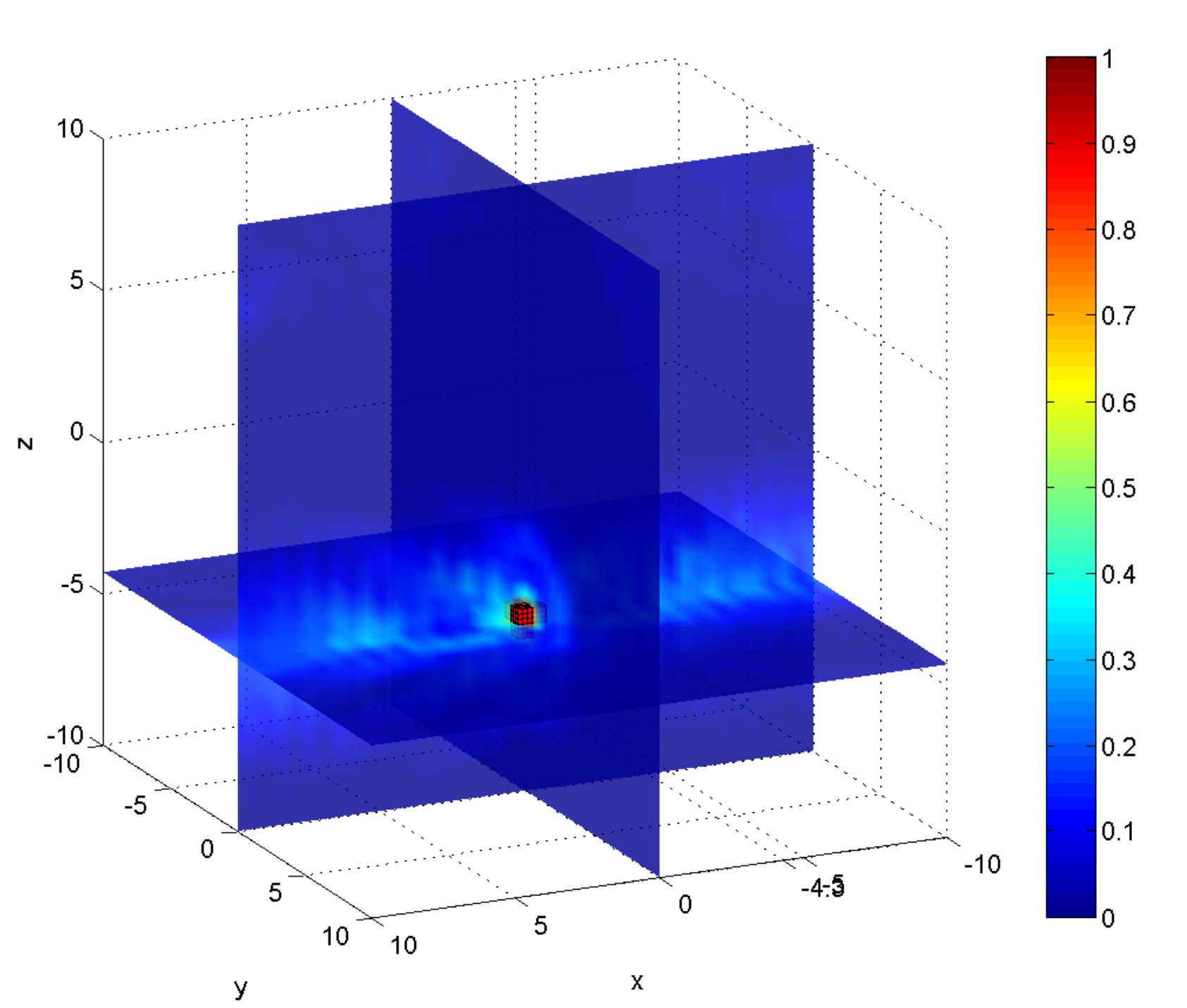}\hfill{}\includegraphics[width=0.3\textwidth]{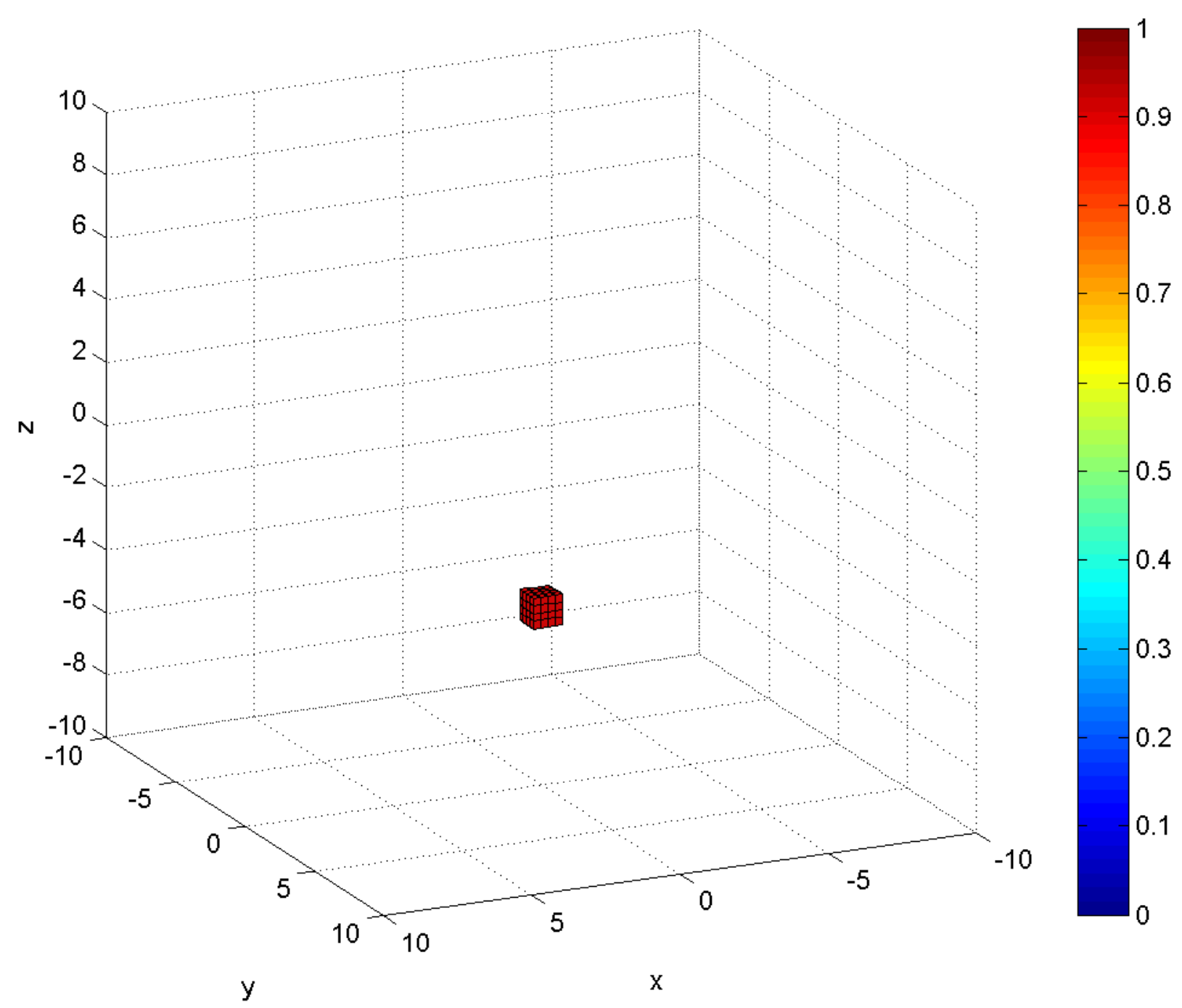}\hfill{}

\hfill{}(c)\hfill{}~~~~~~~~(d)\hfill{}

\caption{\label{fig:Fine-Stage-Identificationt-mlts} Locating by Scheme AR in Example \textbf{KB}: (a) the multi-slice plot of the indicator function by Scheme AR; (b) the reconstructed component after the determination of
the orientation and size of the kite; (c) a multi-slice plot with re-sampling cubes;
(d) the isolated re-sampling cubes without the background multi-slide plot.}
\end{figure}

\begin{figure}
\hfill{}\includegraphics[width=0.23\textwidth]{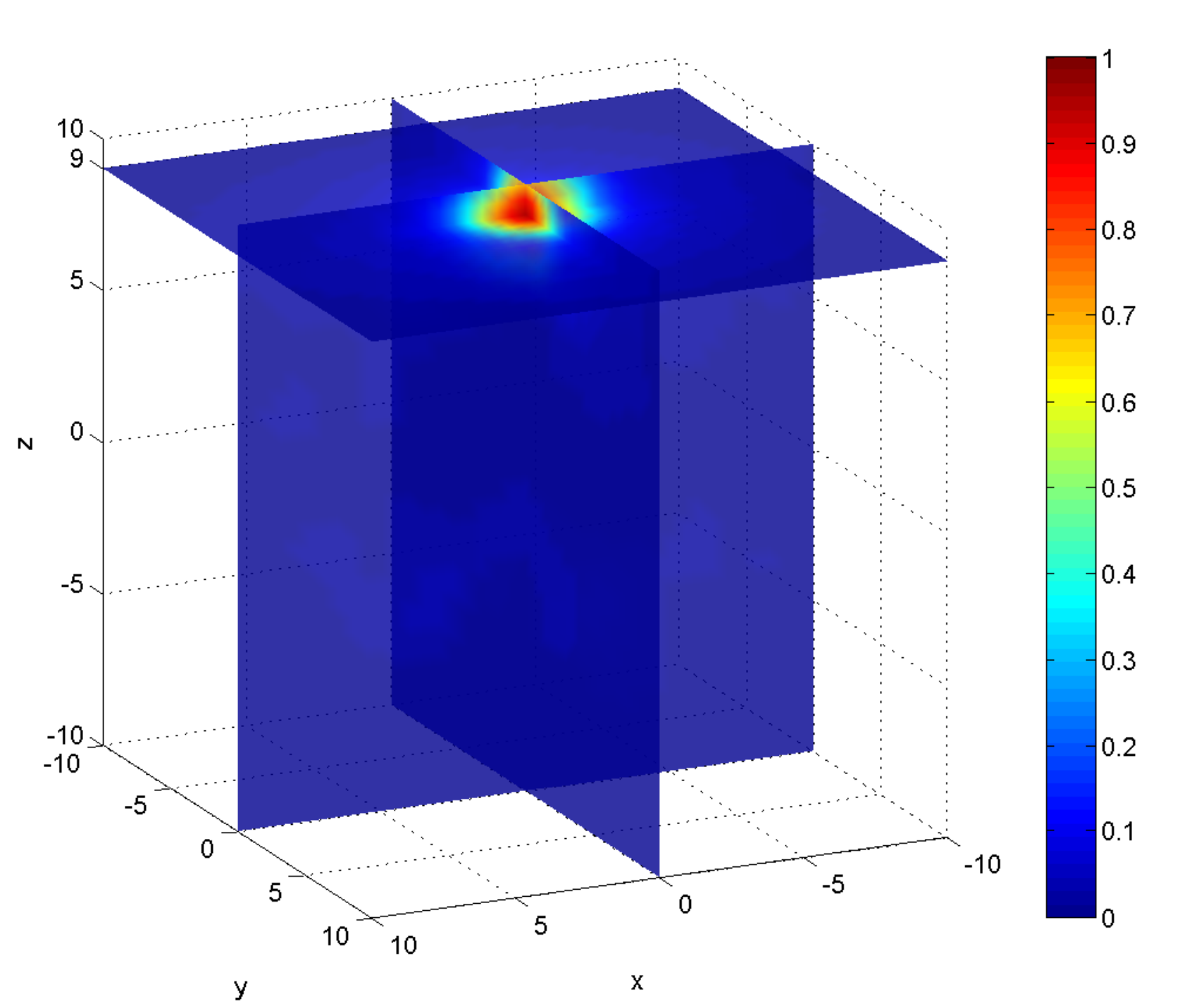}\hfill{}\includegraphics[width=0.23\textwidth]{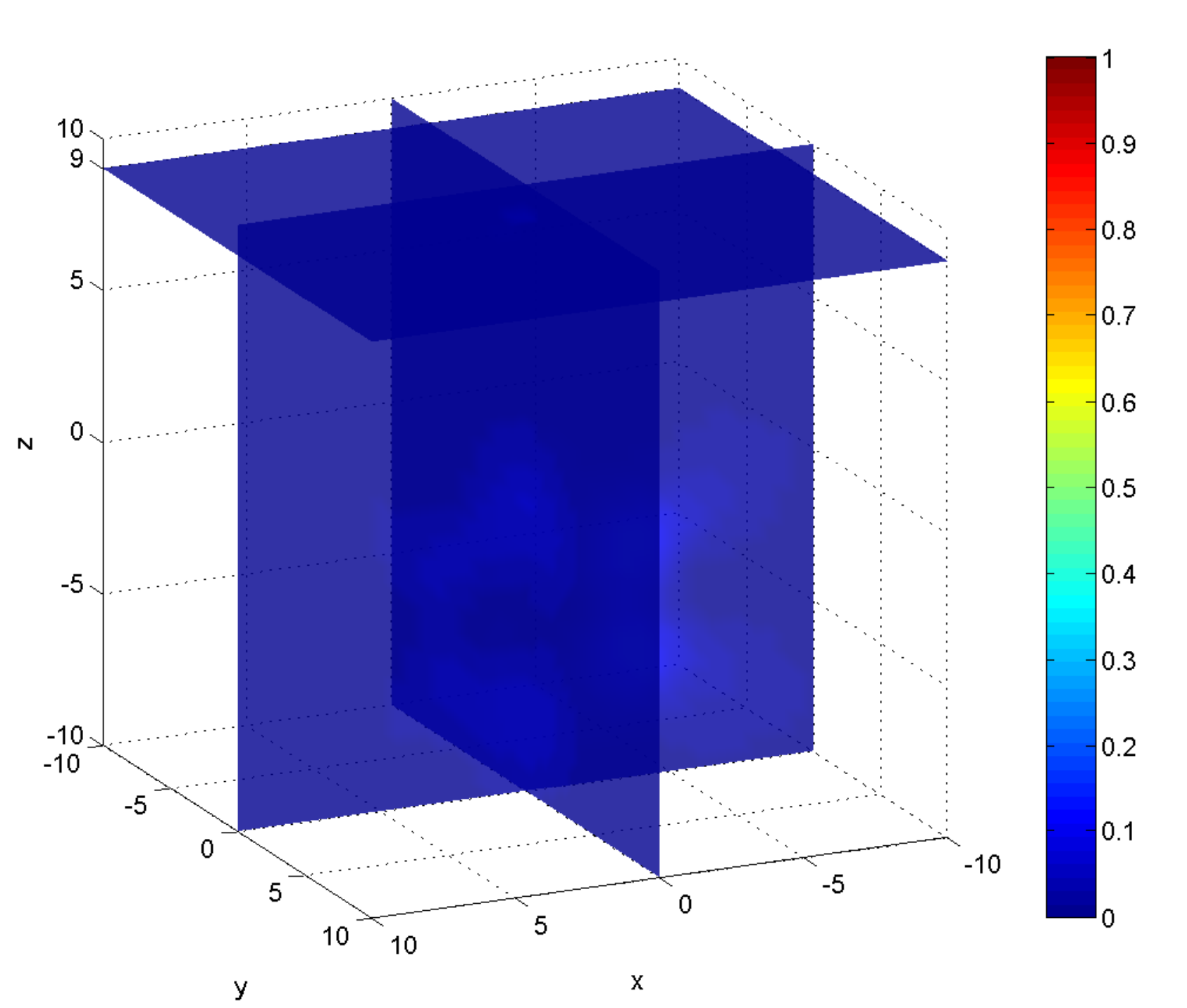}\hfill{}

\hfill{}(a)\hfill{}\hfill{}(b)\hfill{}

\caption{\label{fig:Recogonize little scatterer} Locating the small ball scatterer component in Example \textbf{KB}.
The multi-slice plots of the indicator function   (a) when $z_{0}$ is sufficiently near its actual position ($z_{0}=(0,\,0,\,-4.0056)$), or 
(b) when $z_{0}$ is away from its actual position ($z_{0}=(0,\,0,\,-4.1056))$.}
\end{figure}

\section{Conclusion}

In this paper we have developed several variants of the one-shot method proposed in \cite{LiLiuShangSun}. The methods can be used for the efficient numerical reconstruction of multiple multi-scale scatterers for inverse electromagnetic scattering problems. The methods are based on the local 'maximum' behaviors of the indicating functions aided by a candidate set of a priori known far-field data. Rigorous mathematical justifications are provided and several benchmark examples are presented to illustrate the efficiency of the schemes.

The local re-sampling technique  is shown to be an effective a posteriori position-fine-tuning method, which required  rough information of the position by an preprocessing stage of Scheme AR.  The local re-sampling technique adds only a small amount of computational overhead, but helps  calibrate the positions of the regular-size scatterers and determine the locations of the small-sized scatterers.

The present approaches can be extended in several directions including the one by making use of limited-view measurement data. The extension to the use of time-dependent measurement data would be nontrivial and poses interesting challenges for further investigation. Finally, it would be worthwhile to consider different noise background such as Gaussian and impulsive noise.

\end{document}